\tikzstyle{proglabel}=[shape=circle,draw,inner sep=0pt,minimum size=5mm]
\tikzstyle{tran}=[draw,->,>=stealth, rounded corners]
\lstdefinelanguage{prog}
{
	morekeywords={if, then, else, fi, while, do, od, true, false, and, or, skip, sample, observe,return,score,normal,uniform, prob},
	sensitive = false
}
\providecommand{\bigsqcap}{
	\mathop{
		\mathpalette\@updown\bigsqcup
	}
}
\newcommand*{\@updown}[2]{
	\rotatebox[origin=c]{180}{$\m@th#1#2$}
}
\newenvironment{scprooftree}[1]%
{\gdef\scalefactor{#1}\begin{center}\proofSkipAmount \leavevmode}%
	{\scalebox{\scalefactor}{\DisplayProof}\proofSkipAmount \end{center} }
\newcommand\calM{\mathcal{M}}
\newcommand\calV{\mathcal{V}}
\newcommand\calU{\mathcal{U}}
\newcommand\calW{\mathcal{W}}
\newcommand\calP{\mathcal{P}}
\newcommand\calD{\mathbb{D}}
\newcommand{\expect}[1]{\mathbb{E}\left[{#1}\right]}
\newcommand{\expectdist}[2]{\mathbb{E}_{#1}\left[ {#2} \right]}
\def\solvable{\#}
\newcommand{\supp}[1]{\textsf{supp}\left({#1}\right)}
\definecolor{codegreen}{rgb}{0,0.6,0}
\definecolor{codegray}{rgb}{0.5,0.5,0.5}
\definecolor{codepurple}{rgb}{0.58,0,0.82}
\definecolor{backcolour}{rgb}{0.95,0.95,0.92}
\lstdefinestyle{myStyle}{
    belowcaptionskip=1\baselineskip,
    breaklines=true,
    frame=none,
    basicstyle=\footnotesize\ttfamily,
    keywordstyle=\bfseries\color{green!40!black},
    commentstyle=\itshape\color{purple!40!black},
    identifierstyle=\color{blue},
    backgroundcolor=\color{gray!10!white},
    numberstyle=\tiny\color{codegray},
    stringstyle=\color{codepurple},
    breakatwhitespace=false,                          
    keepspaces=true,                 
    numbers=left,       
    numbersep=5pt,                  
    showspaces=false,                
    showstringspaces=false,
    showtabs=false,                  
    tabsize=2,
}
\newcommand\doubleplus{+\kern-1.3ex+\kern0.8ex}
\newcommand{\tuple}[1]{\ensuremath{\langle #1 \rangle}}
\newcommand{\expv}{\mathbb{E}}
\newcommand{\tr}{\boldsymbol{t}}
\newcommand{\pv}{\mathbf{v}}
\newcommand{\rv}{\mathbf{r}}
\newcommand{\Rset}{\mathbb{R}}
\newcommand{\Nset}{\mathbb{N}}
\newcommand{\Zset}{\mathbb{Z}}
\newcommand{\expt}{\mathbb{E}}
\newcommand{\trans}{\mathcal{T}}
\newcommand{\pspace}{(\Omega,\mathcal{F},\probm)}
\newcommand{\probm}{\mathbb{P}}
\newcommand{\condexpv}[2]{{\expt}{\left[{#1} \mid {#2}\right]}}
\newcommand{\valueSem}[1]{\mathsf{val}_{#1}} 
\newcommand{\weightSem}[1]{\mathsf{wt}_{#1}} 
\newcommand{\measureSem}[1]{\llbracket #1 \rrbracket}
\newcommand{\posterior}{\mathsf{posterior}}
\newcommand{\loc}{\ell}
\newcommand{\locs}{\mathit{L}}
\newcommand{\lin}{\loc_\mathrm{init}}
\newcommand{\lout}{\loc_\mathrm{out}}
\newcommand{\val}[1]{\mbox{\sl Val}_{#1}}
\newcommand{\pvars}{V_\mathrm{p}}
\newcommand{\rvars}{V_{\mathrm{r}}}
\newcommand{\sle}{\sqsubseteq}
\newcommand{\sge}{\sqsupseteq}
\newcommand{\lfp}{\mathrm{lfp}}
\newcommand{\gfp}{\mathrm{gfp}}
\newcommand{\rdvarjdis}{\mathcal D}
\newcommand{\upd}{\mbox{\sl upd}}
\newcommand{\wet}{\mbox{\sl wt}}
\newcommand{\transset}{\mathfrak T}
\newcommand{\valin}{\pv_{\mathrm{init}}}
\newcommand{\win}{w_{\mathrm{init}}}
\newcommand{\trunc}{\mathcal{B}}
\newcommand{\ewt}{\mbox{\sl ewt}}
\newcommand{\monomials}{\mathbf{M}}
\renewcommand{\paragraph}[1]{\smallskip\noindent {\em #1}}  
\newtheorem{remark}{Remark}
\begin{document}

\title{Static Posterior Inference of Bayesian Probabilistic Programming via Polynomial Solving}

\thanks{This work is a technical report for the paper with the same name to appear in the 45th ACM SIGPLAN Conference on Programming Language Design and Implementation (PLDI 2024).  \\}

\author{Peixin Wang$^{*,\dag}$}
\affiliation{
\institution{Nanyang Technological University}
  \country{Singapore}
}
\email{peixin.wang@ntu.edu.sg}

\author{Tengshun Yang$^*$}
\affiliation{            
  \institution{SKLCS, Institute of Software, University of Chinese Academy of Sciences}   
  \country{China} 
} 
\email{yangts@ios.ac.cn}

\author{Hongfei Fu$^\dag$}
\affiliation{
  \institution{Shanghai Jiao Tong University}
  \country{China}
}
\email{jt002845@sjtu.edu.cn}

\author{Guanyan Li}
\affiliation{
  \institution{University of Oxford}
  \country{United Kingdom}
}
\email{guanyan.li@cs.ox.ac.uk}

\author{C.-H. Luke Ong}
\affiliation{
  \institution{Nanyang Technological University}
  \country{Singapore}
}
\email{luke.ong@ntu.edu.sg}

\begin{abstract}
In Bayesian probabilistic programming, a central problem is to estimate the normalised posterior distribution (NPD) of a probabilistic program with conditioning via {\tt score} (a.k.a. {\tt observe}) statements. 
Most previous approaches address this problem by Markov Chain Monte Carlo and variational inference, 
and therefore could not generate guaranteed outcomes within a finite time limit. Moreover, existing methods for exact inference either impose syntactic restrictions or cannot guarantee successful inference in general.

In this work, we propose a novel automated approach to derive guaranteed bounds for NPD via polynomial solving. We first establish a fixed-point theorem for the wide class of \emph{score-at-end} Bayesian probabilistic programs that terminate almost-surely and have a single bounded score statement at program termination. Then, we propose a multiplicative variant of Optional Stopping Theorem (OST) to address \emph{score-recursive} Bayesian programs where score statements with weights greater than one could appear inside a loop. Bayesian nonparametric models, enjoying a renaissance in statistics and machine learning, can be represented by score-recursive Bayesian programs and are difficult to handle due to an integrability issue. Finally, we use polynomial solving to implement our fixed-point theorem and OST variant.
To improve the accuracy of the polynomial solving, we further propose a truncation operation and the synthesis of multiple bounds over various program inputs.  
Our approach can handle Bayesian probabilistic programs with unbounded while loops and continuous distributions with infinite supports.
Experiments over a wide range of benchmarks show that compared with the most relevant approach (Beutner \emph{et al.}, PLDI 2022) for guaranteed NPD analysis via recursion unrolling, our approach is more time efficient and derives comparable or even tighter NPD bounds. Furthermore, our approach can handle score-recursive programs which previous approaches could not. 
\end{abstract}

\begin{CCSXML}
	<ccs2012>
	<concept>
	<concept_id>10003752.10003790.10002990</concept_id>
	<concept_desc>Theory of computation~Logic and verification</concept_desc>
	<concept_significance>500</concept_significance>
	</concept>
	<concept>
	<concept_id>10003752.10003790.10003794</concept_id>
	<concept_desc>Theory of computation~Automated reasoning</concept_desc> 
	<concept_significance>500</concept_significance>
	</concept>
	<concept>
	<concept_id>10003752.10010124.10010138.10010142</concept_id>
	<concept_desc>Theory of computation~Program verification</concept_desc>
	<concept_significance>500</concept_significance>
	</concept>
	</ccs2012>
\end{CCSXML}

\ccsdesc[500]{Theory of computation~Logic and verification}
\ccsdesc[500]{Theory of computation~Automated reasoning}
\ccsdesc[500]{Theory of computation~Program verification}

\keywords{Probabilistic Programming, Bayesian inference, Static verification,
Martingales, Fixed-point Theory, Posterior distributions} 

\maketitle   

\def\thefootnote{*}\footnotetext{Equal contribution}\def\thefootnote{\arabic{footnote}}
\def\thefootnote{\dag}\footnotetext{Corresponding authors}\def\thefootnote{\arabic{footnote}}

\section{Introduction}\label{sec:intro}
Bayesian probabilistic programming~\cite{DBLP:journals/corr/abs-1809-10756,rainforth2017automating} is a programming paradigm that incorporates
Bayesian reasoning into programming languages, and aims at first modelling probabilistic models as probabilistic programs and then analyzing the models through their program representations. Compared with traditional approaches \cite{McIverM04,McIverM05,SriramCAV,ChatterjeeFG16} that specify an ad-hoc programming language, probabilistic programming languages (PPLs) \cite{DBLP:journals/corr/abs-1809-10756} provide a universal framework to perform Bayesian inference. PPLs have two specific constructs: {\tt sample} and {\tt score} \cite{borgstrom2016lambda}.\footnote{Sometimes {\tt observe} is used instead of {\tt score}~\cite{gordon2014probabilistic}, which has the same implicit effect.} The {\tt sample} construct describes the prior probabilities, allowing to draw samples from a (prior) distribution. The {\tt score} construct records the likelihood of observed data in the form of ``{\tt score}(weight)'',\footnote{The argument ``weight'' corresponds to the likelihood each time the data is observed.} and is typically used to weight the current execution in Monte Carlo simulation. 
Nowadays, Bayesian probabilistic programming has become an active research subject in statistics, machine learning and programming language communities, for which typical Bayesian programming languages include Pyro~\cite{bingham2019pyro}, WebPPL~\cite{dippl}, Anglican~\citep{DBLP:conf/pkdd/TolpinMW15}, Church~\cite{DBLP:conf/uai/GoodmanMRBT08}, etc. 

In this work, we consider the analysis of the normalised posterior distribution (NPD) in Bayesian probabilistic programs. The general statement of the problem is that: given a prior distribution $p(z)$ 
over the latent variables $z\in\Rset^n$ of interest, and a probabilistic model represented by a probabilistic program whose distribution $p(x,z)$ is obtained by observing the event $x\in\Rset^m$ with the likelihood $p({x}{\mid}{z})$, the target is to calculate the NPD $p({z}{\mid}{x})$ by Bayes' rule. 
There are two mainstream variants of the NPD conditioning. The first is soft conditioning~\cite{DBLP:conf/lics/StatonYWHK16} that 
assigns a non-negative weight to the program based on the probability (density) of
a given event occurring. The second is hard conditioning that restricts the weights to be either $0$ or $1$. 
Hard and soft conditioning are incomparable in general, as in some situations ``hard conditioning is a particular case of soft conditioning''~\cite[Page 42]{rainforth2017automating}, while in other cases hard conditioning is more general.

In the literature, there are two classes of approaches to address the 
NPD problem. The first is approximate approaches that estimate the NPD by random simulation, while the second is formal approaches that aim at deriving guaranteed bounds for NPD. 
In approximate approaches, two dominant methods are Markov chain Monte Carlo~\cite{gamerman2006markov} and variational inference~\cite{blei2017variational}. Although approximate approaches can produce approximate results efficiently, they cannot provide formal guarantee within a finite time limit. 
Moreover, as shown in ~\citet{Beutner2022b}, approximate approaches may produce inconsistent results between different simulation methods, which led the machine learning community to develop new variants~\cite{DBLP:conf/icml/MakZO22}. 
In formal approaches, there is a large amount of existing works such as $(\lambda)$PSI~\cite{DBLP:conf/cav/GehrMV16,DBLP:conf/pldi/GehrSV20}, AQUA~\cite{DBLP:conf/atva/HuangDM21}, Hakaru~\cite{DBLP:conf/flops/NarayananCRSZ16} and SPPL~\cite{DBLP:conf/pldi/SaadRM21}, aiming to derive exact inference for NPD. However, these methods are restricted to specific kinds of programs, e.g., programs with closed-form solutions to NPD or without continuous distributions, and none of them can handle probabilistic programs with unbounded while-loops/recursion. 
In recent works, \citet{DBLP:conf/nips/ZaiserMO23} and~\citet{oopsla24} used probability generating functions (PGF) to do exact inference for NPD. However, the former work cannot handle loopy programs, and both of them require a closed-form solution to NPD and only work for discrete observations. The recent work by~\citet{Beutner2022b} infers guaranteed bounds for NPD and handles unbounded recursion and continuous distributions. 
This approach relies on recursion unrolling and hence suffers from the path explosion problem.

\vspace{1ex}
\noindent{\em Challenges and gaps.} In this work, we focus on developing formal approaches to derive guaranteed bounds for NPD over loopy probabilistic programs in the setting of soft conditioning. From the literature, a main challenge is to develop new techniques that circumvent the path explosion problem from the approach~\cite{Beutner2022b}. Another challenge (and gap) is that existing approaches cannot handle the situation where {\tt score} statements with weights greater than $1$ appear inside a loop (which we refer to as \emph{score-recursive} programs), which has received significant attention in statistical phylogenetics~\cite{ronquist2021universal,treeflow}.

For score-recursive programs, the following example shows that {\tt score} inside a loop may cause an integrability issue and thus requires careful treatment. Consider a simple loop ``\textbf{while true do if prob}($0.5$) \textbf{then break else score}($3$) \textbf{fi od}''.
In each loop iteration, the loop terminates directly with probability $\frac{1}{2}$, and continues to execute a score command ``{\tt score}$(3)$'' with the same probability. 
It follows that the normalising constant in NPD is equal to $\sum_{n=1}^{\infty} \probm(T=n)\cdot 3^n=\sum_{n=1}^{\infty} (\frac{3}{2})^n=\infty$,
so that the infinity makes the posterior distribution invalid. 
This is noted in e.g., 
~\citet{DBLP:conf/lics/StatonYWHK16}  
that
unbounded weights may introduce the possibility of ``infinite model evidence errors''. To circumvent the drawback, previous results (e.g.,~\citet{borgstrom2016lambda}) allow only $1$-bounded weights. 

In probabilistic program analysis, polynomial solving~\cite{cost2019wang,DBLP:conf/pldi/WangS0CG21,DBLP:conf/cav/ChakarovS13,DBLP:journals/toplas/ChatterjeeFNH18,ChatterjeeFG16} is a well-established technique and 
naturally avoids the path explosion problem in the approach of \citet{Beutner2022b}.   
In this work, we leverage polynomial solving to address the NPD problem. 
Note that simply applying well-known polynomial solving techniques does not suffice 
for the following reasons:
(a) Polynomial solving is tight usually over a bounded region (see e.g., Weierstrass Approximation Theorem~\cite{jeffreys1988weierstrass}), and in general is not accurate if the region is unbounded;
(b) Polynomial solving synthesizes a single bound. 
However, having a single bound is not enough to get tight bounds for NPD, as one needs different bounds for different program inputs to achieve tightness in the normalisation. 

We address the challenges and gaps mentioned above. Our contributions are as follows.

\paragraph{Our contributions.}  In this work, we present the following contributions: 
\begin{itemize}
\item First, we establish a fixed-point theorem and a multiplicative variant of Optional Stopping Theorem (OST)~\cite{doob1971martingale,williams1991probability}.  
Our fixed-point theorem targets Bayesian probabilistic programs that have almost-sure termination and a single score statement at the end of the programs with a bounded score function (referred to as \emph{score-at-end} Bayesian programs), which is a wide class of Bayesian programs in the literature~\cite{DBLP:conf/cav/GehrMV16,DBLP:conf/pldi/GehrSV20,Beutner2022b}. Our OST variant targets 
\emph{score-recursive} Bayesian programs and 
addresses the integrability issue in these  programs.
\item Second, we apply polynomial solving techniques with our fixed-point theorem and OST variant. In addition to existing polynomial solving techniques, our approach improves the accuracy of the derived NPD bounds by the following: First, we propose a novel truncation operation that truncates a probabilistic program into a bounded range of program values.
Second, we devise our algorithm to synthesize multiple bounds for various program inputs. 
\end{itemize}
Experimental results show that our approach can handle a wide range of benchmarks including non-parametric examples such as Pedestrian~\cite{Beutner2022b} and score-recursive examples such as phylogenetic models~\cite{ronquist2021universal}. 
Compared with the previous approach~\cite{Beutner2022b} over score-at-end benchmarks, our approach reduces the runtime by up to $15$ times, while deriving comparable or even tighter bounds for NPD.

\paragraph{Limitations.} Our approach has the combinatorial explosion in the degree of polynomial solving. However, by our experimental results, a moderate choice of the degree (e.g., $\le 10$) suffices. Moreover, our synthesis of multiple bounds for various inputs  mitigates the combinatorial explosion. Another limitation is that in our polynomial solving, we utilize linear and semidefinite programming solvers, which may produce unsound results due to numerical errors.

\section{Motivating Examples}\label{sec:overview}
We present two motivating examples to highlight our key novelties.

\subsection{Pedestrian Random Walk}\label{sec3:pedestrian}

\begin{figure}
\lstset{language=prog}
\lstset{linewidth=5.1cm}
\begin{minipage}{0.45\textwidth}
\begin{lstlisting}[basicstyle=\small,mathescape]
    $\mbox{\sl start} := $ sample uniform$(0,3)$;
    $\mbox{\sl pos} := \mbox{\sl start}$; $\mbox{\sl dist} := 0$;
$\lin:$ $\color{blue} h(pos,dis)=a_1\cdot pos+a_2\cdot dis+a_3$
    $\color{blue} B=\{(pos,dis)\mid pos\in [0,5], dis\in [0,5]\}$ 
    $\color{blue} \calM=2.1\times 10^{-330}$
    while $\mbox{\sl pos}\ge 0$ do
       $\mbox{\sl step} :=$ sample uniform$(0, 1)$;
       if prob($0.5$) then
          $\mbox{\sl pos} := \mbox{\sl pos} - \mbox{\sl step}$
       else
          $\mbox{\sl pos} := \mbox{\sl pos} + \mbox{\sl step}$
       fi;    
       $\mbox{\sl dist} := \mbox{\sl dist} + \mbox{\sl step}$
     od;
     score($pdf$(normal($1.1$,$0.1$),$\mbox{\sl dist}$));
     $\color{blue} \mbox{polynomial approximation } g$
     return $\mbox{\sl start}$
$\lout:$
\end{lstlisting}
\caption{A Pedestrian Random Walk}
\label{fig:pedestrian-program}
\end{minipage}
\begin{minipage}{0.45\textwidth}
\begin{lstlisting}[basicstyle=\small,mathescape]
    $\mbox{\sl lambda} := $ sample uniform$(0,2)$;
    $\mbox{\sl time} := 10$; $\mbox{\sl amount} := 0$;
    $\lin:$ while $\mbox{\sl time}\ge 0$ do
       $\mbox{\sl wait} :=$ sample uniform$(0, 0.5)$;
       $\mbox{\sl time}:=\mbox{\sl time}-\mbox{\sl wait}$;
       if prob($0.5\cdot \mbox{\sl lambda}$) then
         $\mbox{\sl birth} :=$ sample uniform($0,0.01$);
         $\mbox{\sl amount} := \mbox{\sl amount} + \mbox{\sl birth}$;
         score($1.1$)
       fi;
     od;
     return $\mbox{\sl lambda}$
    $\lout:$
\end{lstlisting}
\caption{A Phylogenetic Birth Model}
\label{fig:phylogenetic}
\end{minipage}
\end{figure}

Consider the pedestrian random walk example~ \cite{DBLP:conf/esop/MakOPW21} in \cref{fig:pedestrian-program}. In this example, a pedestrian is lost on the way home, and she only knows that she is at most $3$ km away from her house. Thus, she starts to repeatedly walk a uniformly random distance of at most $1$ km in either direction of the road with equal probability, until reaching her house. Upon the arrival, an odometer tells that she has walked $1.1$ km in total. However, this odometer was once broken and the measured distance is normally distributed around the true distance with a standard deviation of $0.1$ km. We want to infer the posterior distribution of the starting point. 

This example is modeled as a non-parametric probabilistic program whose number of loop iterations is unbounded, where the blue part is the annotations used for the Bayesian inference of this example. In the program, the variables $start$, $pos$, $step$, $dis$ represent the starting point, the current position, the distance walked in the next step and the travelled distance so far of the pedestrian, respectively. 
As the program variables $start,step$ simply receive samples, the key program variables are $pos,dist$. 

This program terminates with probability $1$ and has a score statement at termination with a bounded score function indicated by the probability density function $pdf(\textbf{normal}(1.1,0.1),dist)$ of the normal distribution with mean $1.1$ and deviation $0.1$. Note that $pdf(\textbf{normal}(1.1,0.1),dist)=pdf(\textbf{normal}(dist,0.1),1.1)$. Bayesian probabilistic programs that have score statements at termination widely exist in the literature~\cite{DBLP:conf/cav/GehrMV16,DBLP:conf/pldi/GehrSV20,Beutner2022b}, and we call them \emph{score-at-end} programs. We propose a novel approach for Bayesian inference over such programs via fixed-point conditions. 

Our approach decomposes the Bayesian inference into the computation of expected weights, i.e., expected score values from various initial program inputs. For this example, we have the initial value for the variable $pos$ ranges over $[0,3]$, and the initial value for the variable $dist$ fixed to be $0$. We partition the range of initial values into multiple pieces (e.g., dividing $[0,3]$ into $[0,0.1], [0.1,0.2],\dots,[2.9,3]$), and solve the expected weights for each piece. Within each piece, we establish a polynomial template $h$ for the upper bound to be solved and use fixed-point theory to solve the template. 
A template for this example at the entry point of the loop
is given by the linear template $h(pos,dis)=a_1\cdot pos+a_2\cdot dis+a_3$ (see the annotations), and the prefixed-point condition to solve $h$ as an upper bound is given by
\begin{align*}
&\ewt(h):=0.5\cdot\expectdist{step}{[pos-step\ge 0]\cdot h(pos-step,dis+step)+[pos-step<0]\cdot g(dis+step)}\\ & \quad{}+ 0.5\cdot\expectdist{step}{h(pos+step,dis+step)}\le h(pos,dis).
\end{align*} 
where the left-hand-side $\ewt(h)$ of the inequality expresses the expected value of the template after one loop iteration, for which the expectation is taken w.r.t the sampling of the $step$ variable. 
Note that here we use a polynomial $g$ to approximate the density function of the normal distribution to allow a uniform polynomial reasoning. 

For this example, simply solving the polynomial template $h$ does not suffice, as we have find that polynomial solving produces trivial constant bounds even with high degrees. Hence, we consider a novel truncation operation that truncates the probabilistic program into a bounded range, so that we can utilize the strong approximation ability of polynomials over bounded ranges. In this example, we can choose the bounded range to be $B=\{(pos,dis)\mid pos\in [0,5], dis\in [0,5]\}$, so that the program state space is partitioned into sets of states within $B$ and outside $B$. The execution of the program is also changed in the sense that once the execution jumps out of the bounded range, the program halts immediately. In conjunction with the bounded range, we associate a polynomial $\calM$ that over-approximates the expected weights outside the bounded range. In this example, we have that when jumping out of the bounded range, either $dist\ge 5$ or $pos \ge 5$, and in both cases we have $dist\ge 5$ (as the pedestrian needs to travel at least the maximal $pos$ in the walk). Hence, we can choose $\calM=2.1\times 10^{-330}$ since $pdf(\textbf{normal}(1.1,0.1),dist)\le 2.1\times 10^{-330}$ when $dist\ge 5$ according to its monotonicity. 
Given the bounded range $B$ with the over-approximation $\calM$, we then solve the template $h$ by the following (informal) modified prefixed-point conditions: (a) When within the bounded range $B$, we have $\ewt(h) \le h$; (b) When outside the bounded range $B$, we have $\calM \le h$.  

The example is handled in~\cite{Beutner2022b} by exhaustive recursion unrolling that has the path-explosion problem. 
Our approach circumvents path explosion and derives comparable bounds to the approach in~\citet{Beutner2022b} with runtime two-thirds of that of~\citet{Beutner2022b}.

\subsection{Phylogenetic Birth Model}\label{sec3:phylogenetic}
	
Consider a simplified version of the phylogenetic birth model \cite{ronquist2021universal}, where a species arises with a birth-rate $lambda$, and it propagates with a simplified constant likelihood of $1.1$ at some time interval. For simplicity, we assume constant weights that can be viewed as over-approximation for a continuous density function. This example can be modelled as a probabilistic loop in~\cref{fig:phylogenetic}. In this program, the variables $lambda, time, amount, wait$ stand for the birth rate of the species, the remaining propagation time, the current amount of the species and the propagation time to be spent, respectively. The variable $lambda$ is associated with a prior distribution, and the NPD problem is to infer its posterior distribution given the species evolution described by the loop.

The main difficulty to analyze the NPD of this example is that its loop body includes a score statement {\tt score($1.1$)} with the constant score function greater than $1$. We call such programs \emph{score-recursive}. 
As stated previously, this incurs an integrability issue that cannot be solved by previous approaches. 
To address this difficulty, we propose a novel multiplicative variant of Optional Stopping Theorem (OST) that allows a stochastic process to scale by a multiplicative factor during its evolution. Based on the OST variant, we apply polynomial solving with truncation as in our fixed-point approach. 

 In this example, the key program variables are $lambda,time$ as $lambda$ affects the probabilistic branches and $time$ is included in the loop guard, while $wait,birth$ simply receive samples and $amount$ does not affect the control flow of the program.
We perform truncation operation by restricting the behaviour of the program within a bounded range such as $\{(lambda,time)\mid lambda\in [0,2]$ and $time\in [0,10]\}$, and over-approximate the expected weights outside the bounded range by an interval bound of polynomial functions 
derived from our OST variant and polynomial-template method but without truncation.
Our experimental result on this example 
shows that the derived bounds match the simulation result with $10^6$ samples.

\section{Preliminaries}\label{sec:prelim}
We first review basic concepts from probability theory, then present our Bayesian probabilistic programming language, and finally define the normalised posterior distribution (NPD) problem.
We denote by $\Nset$, $\Zset$ and $\Rset$ the sets of all natural numbers, integers, and real numbers, respectively.

\subsection{Basics of Probability Theory}

We recall several basic concepts and refer to standard textbooks (e.g.  \cite{pollard2002user,williams1991probability}) for details.  

Given a probability space $\pspace$, a \emph{random variable} is an $\mathcal{F}$-measurable function $X: \Omega \rightarrow \Rset \cup \{+\infty,-\infty\}$ where the measurable space over $\Rset \cup \{+\infty,-\infty\}$ is taken as the Borel space. 
The \emph{distribution function} $F$ of $X$ is given by $F(x) = \probm (\{\omega: X(\omega) \leq x\})$. A non-negative Borel measurable function $f$ is a \emph{density function} of $X$ if it satisfies $F(x) = \int_{-\infty}^{x} f(t) \mathrm{d} t$. 
The \emph{expectation} of a random variable $X$, denoted by $\expv(X)$, is the Lebesgue integral of $X$ w.r.t. $\probm$, i.e., $\int_{\Omega} X\,\mathrm{d}\probm$. A \emph{filtration} of $\pspace$ is an infinite sequence $\{ \mathcal{F}_n \}_{n=0}^{\infty}$ of $\sigma$-algebras such that for every $n\ge 0$, the triple $(\Omega, \mathcal{F}_n, \probm)$ is a probability space and $\mathcal{F}_n \subseteq \mathcal{F}_{n+1} \subseteq \mathcal{F}$. A \emph{stopping time} w.r.t. $\{ \mathcal{F}_n \}_{n=0}^{\infty}$ is a random variable $T: \Omega \rightarrow \Nset \cup \{0, \infty\}$ such that for every $n \geq 0$, the event \{$T \leq n$\} is in $\mathcal{F}_n$. 
Recall the Borel measurable space $(\Rset^n,\Sigma_{\Rset^n})$ where $\Sigma_{\Rset^n}$ is the $\sigma$-algebra generated by the open subsets in $\Rset^n$.

A \emph{discrete-time stochastic process} is a sequence $\Gamma = \{X_n\}_{n=0}^\infty$ of random variables in $\pspace$. The process $\Gamma$ is \emph{adapted} to a filtration $\{ \mathcal{F}_n \}_{n=0}^{\infty}$, if for all $n \geq 0$, $X_n$ is a random variable in $(\Omega, \mathcal{F}_n, \probm)$. A 
stochastic process $\Gamma=\{X_n\}_{n=0}^\infty$ adapted to a filtration $\{\mathcal{F}_n\}_{n=0}^\infty$ is a \emph{martingale} (resp., \emph{supermartingale}, \emph{submartingale})
if for all $n \geq 0$, $\expv(|X_n|)<\infty$ and it holds almost surely 
that $\condexpv{X_{n+1}}{\mathcal{F}_n}=X_n$ (\mbox{resp., } $\condexpv{X_{n+1}}{\mathcal{F}_n}\le X_n$, $\condexpv{X_{n+1}}{\mathcal{F}_n}\ge X_n$).
Applying martingales to the formal analysis of probabilistic programs is a well-studied technique~\cite{SriramCAV,ChatterjeeFG16,ChatterjeeNZ2017}.

\subsection{Bayesian Probabilistic Programs}

The syntax of our Bayesian probabilistic programming language (PPL) is given in \cref{fig:syntax}, where $c,c_1,c_2\in\Rset$ are real constants, $p\in (0,1]$ and the metavariables $S$, $B$ and $E$ stand for statements, boolean and arithmetic expressions, respectively.  
Our PPL is imperative with the usual conditional, loop, sequential and probabilistic branching structures as well as the following new structures: (a)~sample constructs of the form ``$\textbf{sample}\  D$'' that sample a value from a prescribed distribution $D$ (e.g., normal distribution, uniform distribution, etc.) over $\mathbb{R}$; (b)~score statements of the form ``\textbf{score}($EW$)'' that weight the current execution with a value expressed by $EW$, where $\textit{pdf}(D,x)$ is the value of the probability density function w.r.t. the distribution $D$ at $x$.
We also have return statements (i.e., \textbf{return}) that return the value of a program variable. Note that although probabilistic branches can be derived from sampling of Bernoulli distributions, we include probabilistic branches here to have specific algorithmic treatment for probabilistic control flows.

\begin{figure}
	\footnotesize{
		\begin{align*}
		S &::= \textbf{skip} \mid x:=ES  \mid \textbf{score}(EW) \mid \textbf{return}\ x \mid S_1;S_2 \\
		& \mid \mbox{\textbf{while}}\, B \, \text{\textbf{do}} \, S \, \text{\textbf{od}} \mid \mbox{\textbf{if}} \, \mbox{$B$}\,\mbox{\textbf{then}} \,  S_1 \, \mbox{\textbf{else}} \,S_2 \,\mbox{\textbf{fi}}\mid \mbox{\textbf{if}} \, \mbox{\textbf{prob}($p$)}\,\mbox{\textbf{then}} \,  S_1 \, \mbox{\textbf{else}} \,S_2 \,\mbox{\textbf{fi}} \\
		B&::=\textbf{true}\mid \textbf{false} \mid\neg B\mid B_1\, \textbf{and} \, B_2 \mid  B_1\, \textbf{or} \, B_2 \mid  E_1\le E_2\mid E_1\ge E_2\\
        E&::= x\mid c\mid E_1+E_2\mid E_1-E_2\mid E_1*E_2  \ \ \ 
		ES::= E\mid  \textbf{sample}\  D \\
        EW&::= E\mid  \textit{pdf}(D,x) \ \ \ 
		D::=  \textbf{normal}(c_1,c_2)\mid \textbf{uniform}(c_1,c_2)\mid \cdots 
		\end{align*}
	}
	\caption{Syntax of Our Probabilistic Programming Language}
	\label{fig:syntax}
\end{figure}

In our PPL, we distinguish two disjoint sets of variables in a program: (i) the set $\pvars$ of \emph{program variables} whose values are determined by assignments (i.e., the expressions at the RHS of ``:="); (ii)~the set $\rvars$ of \emph{sampling variables} whose values are independently sampled from prescribed probability distributions each time they are accessed (i.e., each ``$\textbf{sample}\ D$" is a sampling variable). 
A \emph{valuation} on a set $V$ of variables is a function $\pv: V \rightarrow \Rset$ that assigns a real value to each variable in $V$. 
A \emph{program} (resp. \emph{sampling}) valuation is a valuation on $\pvars$ (resp. $\rvars$).
The set of program (resp. \emph{sampling}) valuations is denoted by $\val{\mathrm{p}}$ (resp. $\val{\mathrm{r}}$), respectively.
For the sake of convenience, we fix the notations in the following way: we always use $\pv\in\val{\mathrm{p}}$ to denote a program valuation, and $\rv\in\val{\mathrm{r}}$ to denote a sampling valuation.

\begin{example}\label{ex:pedestrian-program}
\cref{fig:pedestrian-program} shows a Bayesian probabilistic program written in our PPL. In this program, the set of program variables is $\pvars=\{start,pos,dis,step\}$, and the set of sampling variables is $\rvars=\{ \textbf{sample uniform}(0,3), \textbf{sample uniform}(0,1) \}$. At the execution of $\textbf{sample uniform}(0,3)$,  
it samples a value uniformly from $[0, 3]$ and assigns it to the variable $start$ in the initialization. During the loop iteration, each time $\textbf{sample uniform}(0,1)$ is executed, it samples a value uniformly from $[0,1]$ and assigns the value to the variable $step$. 
\qed

\begin{figure}
\lstset{language=prog}
\lstset{linewidth=5.1cm}
\begin{minipage}{0.45\textwidth}
	\begin{center}
\includegraphics[width=0.78\textwidth]{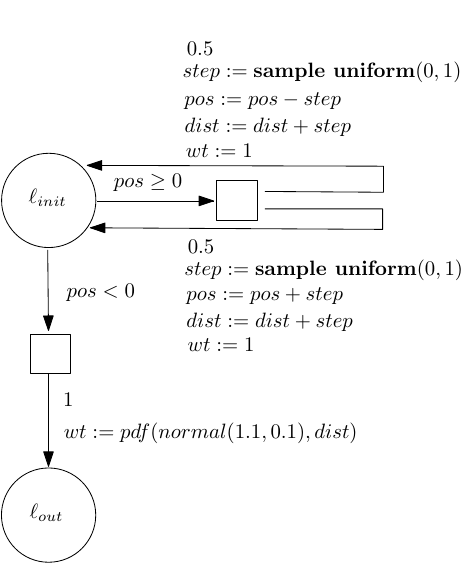}
\end{center}
\caption{The WPTS of Pedestrian}
\label{fig:pedestrian-wpts}
\end{minipage}
\begin{minipage}{0.45\textwidth}
	\begin{center}
\includegraphics[width=0.8\textwidth]{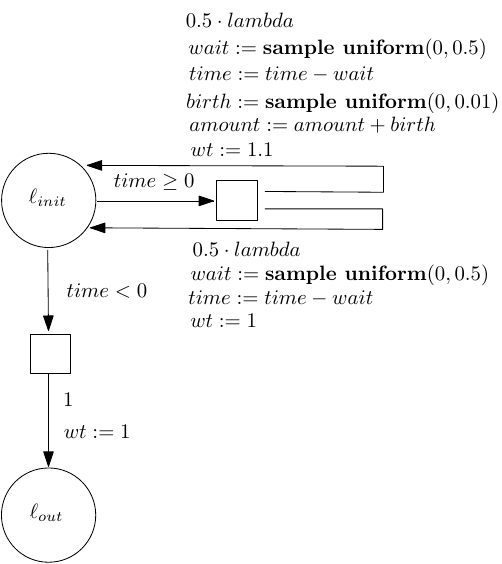}
\end{center}
\caption{The WPTS of Phylogenetic Model}
\label{fig:phylogenetic-wpts}
\end{minipage}
\end{figure}
\end{example}

Below we present the semantics for our PPL. In the literature, existing semantics are either measure-based~\cite{DBLP:conf/lics/StatonYWHK16,LeeYRY20} or sampling-based~\cite{DBLP:conf/esop/MakOPW21,Beutner2022b}. To facilitate the development of our approach, we consider the \emph{transition-based} semantics~\cite{DBLP:conf/cav/ChakarovS13,DBLP:conf/popl/ChatterjeeFNH16} so that  
each probabilistic program is transformed into an equivalent form of \emph{weighted probabilistic transition system} (WPTS). A WPTS extends a PTS  ~\cite{DBLP:conf/cav/ChakarovS13,DBLP:conf/popl/ChatterjeeFNH16} with weights and an initial probability distribution. 

\begin{definition}[WPTS]\label{def:wpts}
	A \emph{weighted probabilistic transition system} (WPTS) $\Pi$
	is a tuple
\begin{equation}\label{eq:wpts} 
\tag{\dag}
\Pi = (\pvars, \rvars,  L,\lin,\lout,\mu_{\mathrm{init}}, \rdvarjdis,\transset)
\end{equation}
for which:
	\begin{itemize}
		\item
		$\pvars$ and $\rvars$ are finite disjoint sets of \emph{program} and \emph{sampling} variables.
    \item $\locs$ is a finite set of \emph{locations} 
  with special locations $\lin,\lout\in \locs$. Informally, a location corresponds to a cut point in a Bayesian probabilistic program, $\lin$ is the initial location and $\lout$ represents program termination. 
		\item
		$\mu_{\mathrm{init}}$ is the \emph{initial probability distribution} over $\mathbb{R}^{|\pvars|}$ with a bounded support (denoted by $\supp{\mu_{\mathrm{init}}}$), 
  We call each $\pv\in\supp{\mu_{\mathrm{init}}}$ an \emph{initial program valuation}.

        \item $\rdvarjdis$ is a function that assigns a probability distribution $\rdvarjdis(r)$ to each 
  $r \in \rvars$. We abuse the notation so that $\rdvarjdis$ also denotes the joint distribution of all independent variables $r\in \rvars$. 
		\item 
		$\transset$ is a finite set of \emph{transitions} where
		each transition $\tau \in \transset$ is a tuple $\langle \loc, \phi, F_1,\dots,F_k \rangle$ such that (a) $\loc\in L$ is the \emph{source location}, 
(b) $\phi$ is the \emph{guard condition} which is a logical formula over program variables $\pvars$, 
and (c) each $F_j:=\langle \loc'_j, p_j, \upd_j,\wet_j \rangle$ is called a \emph{weighted fork} for which (i) $\loc'_j\in L$ is the \emph{destination location} of the fork, (ii) $p_j\in (0,1]$ is the probability of occurrence of this fork, (iii) $\upd_j:\Rset^{|\pvars|} \times \Rset^{|\rvars|} \rightarrow \Rset^{|\pvars|}$ is an {\em update function} that takes as inputs the current program and sampling valuations and returns an updated program valuation, and (iv) $\wet_j:\Rset^{|\pvars|} \times \Rset^{|\rvars|}\to [0,\infty)$ is a \emph{score function} that gives the likelihood weight of this fork depending on the current program and sampling valuations.	
\end{itemize}
\end{definition}

In a WPTS, update functions correspond to assignment statements to program variables, and score functions correspond to the cumulative multiplicative weight of a basic block of statements from the score statements in the block. 
Note that if there is no score statement in the block, then the score function of the block is constantly $1$. 
We also assume the initial probability distribution to have \emph{bounded} support as our approach solves optimization problems over bounded sets of initial program valuations.

We always assume that a WPTS $\Pi$ is \emph{deterministic} and \emph{total}, i.e., (1) there is no program valuation that simultaneously satisfies the guard conditions of two distinct transitions from the same source location, and (2) the disjunction of the guard conditions of all the transitions from any source location is a tautology. 
The transformation from a probabilistic program into its WPTS can be done in a straightforward way (see e.g.~\cite{DBLP:journals/toplas/ChatterjeeFNH18,DBLP:conf/cav/ChakarovS13}).

\begin{example}\label{ex:pedestrian-semantics} 
\cref{fig:pedestrian-wpts} shows the WPTS of the program in \cref{fig:pedestrian-program} which has two locations $\lin,\lout$. 
The value of $step$ is initialised to $0$. The initial probability distribution $\mu_{\mathrm{init}}$ over $\Rset^{|\pvars|}$ is determined by the joint distribution of $(start,pos,dis,step)$ where $start\sim uniform(0,3)$ and $pos,dis,step$ observe the Dirac measures $Dirac(\{start\})$, $Dirac(\{0\})$ and $Dirac(\{0\})$, respectively, e.g., the probability of the event ``$dis\in\{0\}$'' equals $1$. 
The circle nodes represent locations and square nodes model the forking behavior of transitions. An edge entering a square node is labeled with the guard condition of its respective transition, while an edge entering a circle node stands for a fork, which is associated with its probability, update functions and score functions. The WPTS of the program in \cref{fig:phylogenetic} is analogously given in \cref{fig:phylogenetic-wpts}. 
\footnote{Here we omit the update functions if the values of program variables are unchanged.} 
\qed
\end{example}

Below we specify the semantics of a WPTS. Consider a WPTS $\Pi$ in the form of \eqref{eq:wpts}. Given a program valuation $\mathbf{v}$ and a guard condition $\phi$ over variables $\pvars$, we say that $\mathbf{v}$ \emph{satisfies} $\phi$ (written as $\mathbf{v}\models\phi$) if $\phi$ holds when the variables in $\phi$ are substituted by their values in $\mathbf{v}$. 

A \emph{state} is a pair $\Xi=(\loc, \pv)$ where $\loc \in L$ (resp. $\pv \in \Rset^{|\pvars|}$) represents the current location (resp. program valuation), respectively, while a \emph{weighted state} is a triple $\Theta=(\loc, \pv, w)$ where $(\loc, \pv)$ is a state and $w\in [0,\infty)$ represents the cumulative multiplicative likelihood weight.

The semantics of $\Pi$ is formalized by the infinite sequence $\Gamma=\{\widehat{\Theta}_n=(\widehat{\loc}_n,\widehat{\pv}_n,\widehat{w}_n)\}_{n\ge 0}$ where each $(\widehat{\loc}_n,\widehat{\pv}_n,\widehat{w}_n)$ is the random weighted state at the $n$-th execution step of the WPTS such that $\widehat{\loc}_n$ (resp. $\widehat{\pv}_n$, $\widehat{w}_n$) is the random variable for the location (resp. the random program valuation, the random variable for the multiplicative likelihood weight) at the $n$-th step, respectively. 
The sequence $\Gamma$ starts with the initial random weighted state 
$\widehat{\Theta}_0=(\widehat{\loc}_0,\widehat{\pv}_0,\widehat{w}_0)$ such that $\widehat{\loc}_0$ is constantly $\lin$, $\widehat{\pv}_0\in \supp{\mu_\mathrm{init}}$ is sampled from the initial distribution $\mu_\mathrm{init}$ and the initial weight $\widehat{w}_0$ is constantly set to $1$.\footnote{This follows the traditional setting in e.g.~\cite{Beutner2022b}.} 
Then, given the current random weighted state $\widehat{\Theta}_n=(\widehat{\loc}_n,\widehat{\pv}_n,\widehat{w}_n)$ at the $n$-th step, the next random weighted state $\widehat{\Theta}_{n+1}=(\widehat{\loc}_{n+1},\widehat{\pv}_{n+1},\widehat{w}_{n+1})$ is determined by:
(a) If $\widehat{\loc}_n=\lout$, then $(\widehat{\loc}_{n+1}, \widehat{\pv}_{n+1},\widehat{w}_{n+1})$ takes the same weighted state as $(\widehat{\loc}_n,\widehat{\pv}_n,\widehat{w}_n)$ (i.e., the next weighted state stays at the termination location $\lout$);
(b) Otherwise, $\widehat{\Theta}_{n+1}$ is determined by the following procedure:
\begin{itemize}
\item First, since the WPTS $\Pi$ is deterministic and total, we take the unique transition $\tau=\langle \hat{\loc}_n,\phi,F_1,\dots, F_k \rangle$ such that $\hat{\pv}_n\models\phi$. 
\item Second, we choose a fork $F_j=\langle \loc'_j, p_j,\upd_j,\wet_j\rangle$ with the probability $p_j$.
\item Third, we obtain a sampling valuation $\rv\in \supp{\rdvarjdis}$ by sampling each $r \in \rvars$ independently from the probability distribution $\rdvarjdis(r).$
\item Finally, the value of the next random weighted state $(\widehat{\loc}_{n+1}, \widehat{\pv}_{n+1},\widehat{w}_{n+1})$ is determined as that of 
$(\loc'_j, \upd_j(\hat{\pv}_n,\rv),\widehat{w}_n\cdot \wet_j(\widehat{\pv}_n,\rv))$. 
Note that the weight is obtained in the style of the multiplicative score.
\end{itemize}
Unlike several semantics (such as~\citet{DBLP:conf/lics/StatonYWHK16}) that integrates score statements with sampling, our semantics separates them and have a special construct $\textbf{score}(-)$ for score statements.

Given the semantics, a \emph{program run} of the WPTS $\Pi$ is a concrete instance of $\Gamma$, i.e., an infinite sequence $\omega=\{\Theta_n\}_{n\ge 0}$ of weighted states where each $\Theta_n=(\loc_n,\pv_n,w_n)$ is the concrete weighted state at the $n$-th step in this program run with location $\loc_n$, program valuation $\pv_n$ and cumulative multiplicative likelihood weight $w_n$. A state $(\loc,\pv)$ is called \emph{reachable} if there exists a program run $\omega=\{\Theta_n\}_{n\ge 0}$ such that $\Theta_n=(\loc,\pv,w)$ for some $n$ and $w$.

\begin{example}\label{ex:pedestrian-run}
Consider the WPTS in \cref{ex:pedestrian-semantics}. 
Suppose the initial program valuation is $(1,1,0,0)$ which means that the initial values of $start,pos,dis,step$ are $1,1,0,0$, respectively. Then starting from the initial weighted state $(\lin,(1,1,0),1)$, a program run 
could be 
\small{
\[
(\lin,(1,1,0,0),1)\to (\lin,(1,0.5,0.5,0.5),1)\to (\lin,(1,-0.1,1.1,0.6),1)\to (\lout,(1,-0.1,1.1,0.6),3.9894).
\]
}
\noindent

  After the final execution, the program valuation becomes $(1, -0.1, 1.1,0.6)$ and the loop terminates. We capture the likelihood weight with $\textit{pdf}(\textbf{normal}(1.1,0.1),1.1) = \frac{1}{\sqrt{2\pi} \times 0.1} e^{-\frac{0}{2 \times 0.1^2}} = 3.9894$ and multiply it to the  current weight (i.e., 1). Thus the final weight of this run is $3.9894$. \qed
\end{example}

Given an initial program valuation $\valin$ of a WPTS, one could construct a probability space over the program runs via their probabilistic execution described above and standard constructions such as general state space Markov chains~\cite{meyn2012markov}. We denote the probability measure in this probability space by $\probm_{\valin}(-)$ and the expectation operator by $\expectdist{\valin}{-}$.

\subsection{Normalised Posterior Distribution}\label{sec2:NPD}

Below we fix a WPTS $\Pi$ in the form of \eqref{eq:wpts}.
The \emph{termination time} of the WPTS $\Pi$ is the random variable $T$ given by
$T(\omega):=\text{min}\{n\in\Nset\mid \loc_n=\lout\}$ for every program run  $\omega=\{(\loc_n,\pv_n,w_n)\}_{n\ge 0}$
where $\text{min}\,\emptyset:=\infty$. That is, $T(\omega)$ is the number of steps a program run $\omega$ takes to reach the termination location $\lout$. The WPTS $\Pi$ is \emph{almost-surely terminating} (AST) if $\probm_{\valin}(T<\infty)=1$ for all initial program valuations $\valin\in \supp{\mu_{\mathrm{init}}}$.  
Given a designated initial program valuation $\valin$ and a measurable subset $\calU\in\Sigma_{\Rset^{|\pvars|}}$, the \emph{expected weight} $\measureSem{\Pi}_{\valin}(\calU)$ restricted to $\calU$ is defined as $\measureSem{\Pi}_{\valin}(\calU):=\expectdist{\valin}{[\widehat{\pv}_T\in \calU]\cdot\widehat{w}_T}$ where $[-]$ is the Iverson bracket such that $[\phi]=1$ if $\phi$ holds and $[\phi]=0$ otherwise. 
(Recall that $\widehat{\pv}_T$ and $\widehat{w}_T$ are the random vector and variable of the program valuation and the multiplicative likelihood weight at termination, respectively.) 
If $\calU=\Rset^{|\pvars|}$, then $\measureSem{\Pi}_{\valin}(\Rset^{|\pvars|})$ is called the \emph{unrestricted expected weight}, or simply \emph{expected weight} for short.
The normalised posterior distribution (NPD) is defined as follows.  
\begin{definition}[NPD]\label{def:npd}
The \emph{normalised posterior distribution} (NPD) $\posterior_\Pi$ of $\Pi$ is defined by:
\begin{align*}
\posterior_{\Pi}(\calU):=\measureSem{\Pi}(\calU)/Z_\Pi\mbox{ for all measurable subsets } \calU\in \Sigma_{\Rset^{|\pvars|}},   
\end{align*}	
where $\measureSem{\Pi}(\calU):=\int_{\calV} \measureSem{\Pi}_{\pv}(\calU)\cdot \mu_{\mathrm{init}}(\mathrm{d} \pv)$ is the \emph{unnormalised posterior distribution} w.r.t. $\calU$ with $\calV:=\supp{\mu_{\mathrm{init}}}$, and $Z_\Pi:=\measureSem{\Pi}(\Rset^{|\pvars|})$ is the \emph{normalising constant}.  
$\Pi$ is \emph{integrable} if $0<Z_{\Pi}<\infty$. 

\end{definition}

In this work, we consider the automated interval bound analysis for the NPD of a WPTS. Formally, we aim to derive a tight interval $[l,u]\subseteq [0,\infty)$ for an integrable WPTS $\Pi$ and any measurable set $\calU\in\Sigma_{\Rset^{|\pvars|}}$ such that $l\le \posterior_{\Pi}(\calU) \le u$. To achieve this, we consider bounds on $\measureSem{\Pi}(\calU),Z_\Pi$.

\begin{framed}{\textbf{NPD Bounds}.}
Assume we have two intervals $[l_\calU,u_\calU],[l_Z,u_Z]\subseteq [0,\infty)$ such that the unnormalised posterior distribution $\llbracket \Pi\rrbracket (\calU)\in [l_\calU,u_\calU]$ and the normalising constant $Z_\Pi\in [l_Z,u_Z]$. If $\Pi$ is integrable, 
then we have the NPD $\posterior_{\Pi}(\calU)\in [\frac{l_\calU}{u_Z},\frac{u_\calU}{l_Z}]$.
\end{framed}

To analyze the quantity $\measureSem{\Pi}(\calU)$ derived from expected weights restricted to $\calU$, 
we construct a new WPTS $\Pi_\calU$ from the original $\Pi$ and a measurable set $\calU\in \Sigma_{\Rset^{|\pvars|}}$.  
Consider a probabilistic program $P$ and its WPTS $\Pi$, given a measurable set $\calU\in\Sigma_{\Rset^{|\pvars|}}$, we construct a new program $P_\calU$ by adding a conditional branch of the form ``\textbf{if} $\pv_T\notin\calU$ \textbf{then} \textbf{score}($0$) \textbf{fi}'' immediately after the termination of $P$ and obtain the WPTS $\Pi_\calU$ of $P_\calU$. Thus, $\Pi_\calU$ simply restricts the program valuation at termination to $\calU$ when taking the final cumulative weight. 
In this way, we transform the analysis of $\measureSem{\Pi}(\calU)$ into that of 
$\measureSem{\Pi_\calU}(\Rset^{|\pvars|})$ in the following, since $\measureSem{\Pi}(\calU)=\measureSem{\Pi_\calU}(\Rset^{|\pvars|})$. Therefore, from~\cref{def:npd}, we can reduce the bound analysis of NPD to that of unrestricted expected weights in the rest of the paper. See details in \cref{app:sec2-prop}.

\section{Theoretical Approaches}\label{sec:math}

Below we present two approaches for deriving bounds of unrestricted expected weights over Bayesian probabilistic programs. 

\subsection{The Fixed-Point Approach}\label{sec:fixed-point}

Our fixed-point approach targets \emph{score-at-end} Bayesian programs that terminate almost-surely and have a single score statement with a bounded weight at the termination. 
We formally define the notion of score-at-end programs directly over WPTS's. A WPTS $\Pi$ is \emph{score-at-end} if: (i) $\Pi$ has AST; (ii) there is exactly one transition $\tau$ in the WPTS that involves the destination location $\lout$, and this transition $\tau$ takes the form $\tau=\langle \loc, \mathbf{-}, F\rangle$ with the single weighted fork $F=\langle \lout, 1, \mbox{\sl id}, wt\rangle$ where $\mbox{\sl id}$ is the identity function and $wt$ is \emph{score-bounded} by a constant $M>0$, i.e., $wt\in [0, M]$; and (iii) all transitions other than the transition $\tau$ have the score function constantly equal to $1$. Such programs widely exist in the literature~\cite{DBLP:conf/cav/GehrMV16,DBLP:conf/pldi/GehrSV20,Beutner2022b}.

To demonstrate the approach, we recall several basic concepts in lattice theory (see Appendix~\ref{app:fixed-point-materials} for details). 
Given a complete lattice $(K, \sle)$ and a function $f: K \to K$, the \emph{supremum} of $K$ is denoted by $\bigsqcup K$, while the \emph{infimum} of $K$ is denoted by $\bigsqcap K$. An element $k \in K$ is called a \emph{fixed-point} if $f(k) = k.$ The \emph{least} (resp. \emph{greatest}) \emph{fixed-point}  of $f$, denoted by $\lfp f$ (resp. $\gfp f$),  is the fixed-point that is no greater (resp. smaller) than every fixed-point under $\sle$. 
Moreover, $k$ is a \emph{prefixed-point} if $f(k) \sle k$ and a \emph{postfixed-point} if $k\sle f(k)$. 
We apply Tarski's fixed-point theorem as follows. 

\begin{theorem}[\textit{Tarski}~\cite{KnasterTarski}]\label{thm:tarski}
Let $(K, \sle)$ be a complete lattice and $f:K \to K$ be a monotone function. Then, both $\lfp\ f$ and $\gfp\ f$ exist. Moreover,
$\textstyle \lfp\ f  = \mathop{\bigsqcap} \left\{x\ |\ f(x)\sle x\right\}\mbox{ and }\gfp\ f = \mathop{\bigsqcup} \left\{x\ |\ x\sle f(x)\right\}$. 
\end{theorem}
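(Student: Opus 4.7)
The plan is to follow the classical Knaster--Tarski argument, which constructs $\lfp f$ directly as the infimum of the set of prefixed-points and then dualizes for $\gfp f$. The characterization equations in the statement will then be immediate consequences of the construction itself, which is a pleasant feature of this approach.

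First, I would define $P := \{x \in K \mid f(x) \sle x\}$ and set $a := \bigsqcap P$. Completeness of $(K,\sle)$ guarantees that $a$ exists. The central claim is that $a$ is a fixed-point of $f$. To see that $a$ is a prefixed-point, I would argue as follows: for every $x \in P$, we have $a \sle x$ by definition of infimum, so monotonicity of $f$ gives $f(a) \sle f(x) \sle x$. Thus $f(a)$ is a lower bound of $P$, and consequently $f(a) \sle \bigsqcap P = a$, i.e.\ $a \in P$. For the reverse inequality, apply $f$ once more: from $f(a) \sle a$ and monotonicity we get $f(f(a)) \sle f(a)$, so $f(a) \in P$, which forces $a \sle f(a)$. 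Combining, $f(a) = a$.

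Next, I would verify that $a$ is the least fixed-point. If $y$ is any fixed-point of $f$, then $f(y) = y \sle y$, so $y \in P$, and hence $a \sle y$. Together with the previous step this gives $\lfp f = a = \bigsqcap \{x \mid f(x) \sle x\}$, establishing both existence of the least fixed-point and the stated formula. The argument for $\gfp f$ is order-dual: let $Q := \{x \mid x \sle f(x)\}$ and $b := \bigsqcup Q$; by the symmetric chain of inequalities (using $x \sle b$ for $x \in Q$ to deduce $x \sle f(x) \sle f(b)$, then applying $f$ again) one obtains $b = f(b)$ and that $b$ dominates every fixed-point, giving $\gfp f = \bigsqcup\{x \mid x \sle f(x)\}$.

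I do not anticipate a genuine obstacle here, since the theorem is purely order-theoretic and makes no use of continuity or chain-completeness beyond arbitrary suprema and infima. The only subtle point is the ``double monotonicity trick'' that extracts both $f(a) \sle a$ and $a \sle f(a)$ from one-sided information; once that is in place, the rest is bookkeeping. The construction never exhibits $\lfp f$ as a limit of iterates $f^n(\bot)$, so no ordinal induction or continuity hypothesis on $f$ is required, which is exactly what makes Tarski's formulation convenient for instantiating with the complete lattices of upper/lower bound functions that will be used in the subsequent polynomial-template reasoning.
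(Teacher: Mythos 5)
Your proof is correct: it is the standard Knaster--Tarski argument (take the infimum $a$ of the prefixed-points, use monotonicity twice to get $f(a) \sle a$ and then $a \sle f(a)$, observe every fixed-point is a prefixed-point, and dualize for the greatest fixed-point). The paper states this theorem only as a cited classical result and gives no proof of its own, so there is nothing to compare against; your argument is the canonical one and correctly avoids any continuity or iteration hypotheses, which is indeed why the paper can apply it to the non-continuous setting of \cref{thm:fix-point-bounds}.
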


Based on \cref{thm:tarski}, we present our fixed-point approach for score-at-end WPTS's. Below we fix a score-at-end WPTS $\Pi$. 
We define $\Lambda$ as the set of states $(\loc, \pv)$ where $\loc$ is a location and $\pv$ is a program valuation. Given a maximum finite value $M\in [0,\infty)$, we define a \emph{state function} as a function $h:\Lambda\to [-M,M]$ such that for all $\pv\in\Rset^{|\pvars|}$, $h(\lout,\pv)\in [0,M]$.
The intuition of a state function $h$ is that $h(\loc,\pv)$ gives an estimation of the expected weight when the WPTS $\Pi$ starts with the initial location $\loc$ and the initial program valuation $\pv$.
We denote the set of all state functions with the maximum value $M$ by $\mathcal{K}_M$. We also use the usual partial order $\le$ on $\mathcal{K}_M$ that is defined in the pointwise fashion, i.e., for any $h_1,h_2\in \mathcal{K}_M$, $h_1\le h_2$ iff $h_1(\loc,\pv)\le h_2(\loc,\pv)$ for all $(\loc,\pv)\in\Lambda$. It is straightforward to verify that  $(\mathcal{K}_M,\le)$ is a complete lattice. 

To connect the complete lattice $(\mathcal{K}_M,\le)$ with expected weights, we define the \emph{expected-weight function} $\mbox{\sl ew}_\Pi$ by $\mbox{\sl ew}_\Pi(\lin,\pv):=\llbracket \Pi\rrbracket_{\pv}(\Rset^{|\pvars|})$, and omit the subscript $\Pi$ if it is clear from the context. Informally, $\mbox{\sl ew}_\Pi(\lin,\pv)$ is the expected weight of the program starting from an initial program valuation $\pv$. 
Note that if the weight at termination is bounded by a maximum value $M$, then we have $\mbox{\sl ew}_\Pi\in \mathcal{K}_M$. 
We consider the following higher-order function over the complete lattice $(\mathcal{K}_M,\le)$. 

\begin{definition}[Expected-Weight Transformer]\label{def:ewt}
Given a finite maximum value $M\in [1,\infty)$, the \emph{expected-weight transformer} $\ewt_\Pi:\mathcal{K}_M\to \mathcal{K}_M$ is the higher-order function such that for each state function $h\in \mathcal{K}_M$ and state $(\loc,\pv)$, if  $\tau = \langle \loc, \phi, F_1,\dots,F_k \rangle$ is the unique transition that satisfies $\pv\models\phi$ and $F_j=\langle \loc'_j,p_j,\upd_j,\wet_j\rangle$ for each $1\le j\le k$, then we have that
\begin{equation}\label{eq:ewt}
\ewt_\Pi(h)(\loc,\pv)\,:=\, \begin{cases}  \sum_{j=1}^k p_j\cdot \expectdist{\rv}{\wet_j(\pv,\rv)\cdot h(\loc'_j,\upd_j(\pv,\rv))} & \mbox{if } \loc\neq \lout \\
1 & \mbox{otherwise} 
\end{cases}\enskip. 
\end{equation}
In \eqref{eq:ewt}, the expectation $\expectdist{\rv}{-}$ is taken over a sampling valuation $\rv$ that observes the distribution $\rdvarjdis$.
\end{definition}
Informally, given a state function $h$, the expected-weight transformer $\ewt_\Pi$ computes the expected weight $\ewt_\Pi(h)$ after one step of WPTS transition. Note that in \eqref{eq:ewt}, the weight $\wet_j(\pv,\rv)$ equals $1$
when the location $\loc$ does not refer to the score statement at the end of the program, as we consider that the WPTS $\Pi$ is score-at-end. This implies that $\ewt_\Pi$ is indeed a higher-order operator for the complete lattice $(\mathcal{K}_M,\le)$ when the maximum value $M$ is a bound for the weights in the score statement. By the monotonicity of expectation, we have that $\ewt_\Pi$ is monotone. 

We will omit the subscript $\Pi$ in $\ewt_\Pi(h)$ if it is clear from the context. In the next definition, we define potential weight functions that are prefixed/postfixed points of the operator $\ewt_\Pi(h)$.

\begin{definition}[Potential Weight Functions]\label{def:puwf}
    A \emph{potential upper weight function} (PUWF) is a function $h:\locs{}\times\val{V_p} \rightarrow\Rset$ that has the following properties:
	\begin{itemize}
		\item[\emph{(C1)}] for all reachable states $(\loc,\pv)$ with $\loc\neq\lout$, we have $\ewt({h})(\loc,\pv) \le h(\loc,\pv)$;
		\item[\emph{(C2)}] for all reachable states $(\loc,\pv)$ such that $\loc=\lout$, we have $h(\loc,\pv)=1$.
	\end{itemize}
	Analogously, a \emph{potential lower weight function} (PLWF) is a function $h:\locs{}\times\val{V_p} \rightarrow\Rset$
	that satisfies the conditions (C1') and (C2), for which the condition (C1') is almost the same as (C1) except for that ``$\ewt({h})(\loc,\pv) \le h(\loc,\pv)$'' is replaced with ``$\ewt({h})(\loc,\pv) \ge h(\loc,\pv)$''. 
\end{definition}

Informally, a PUWF is a state function that satisfies the prefixed-point condition of $\ewt$ at non-terminating locations, and equals one at termination. A PLWF is defined similarly by using the postfixed-point condition. 
The main theorem below states that potential weight functions serve as bounds for expected weights. The proof establishes that the fixed point of $\ewt_\Pi$ is unique and equals $\mbox{\sl ew}_\Pi$ 
when the WPTS $\Pi$ has AST, and applies Tarski's Fixed Point Theorem to use prefixed and post-fixed points to derive upper and lower bounds. See Appendix~\ref{app:fixedpoint} for the detailed proof.

\begin{theorem}[Fixed-Point Approach]\label{thm:fix-point-bounds}

$\llbracket \Pi\rrbracket_{\valin} (\Rset^{|\pvars|})\le h(\lin,\valin)$ (resp. $\llbracket \Pi\rrbracket_{\valin} (\Rset^{|\pvars|})\ge h(\lin,\valin)$) for any bounded PUWF (resp. PLWF) $h$ over $\Pi$ and initial state $(\lin,\valin)$.
\end{theorem}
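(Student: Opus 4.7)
My plan is to combine a probabilistic interpretation of the iterates $\ewt_\Pi^n(h)$ with AST and boundedness to pass to a limit. For every bounded state function $h$ satisfying (C2), define $V_n:\Lambda \to \Rset$ by letting $V_n(\loc,\pv)$ be the expectation of $\widehat{w}_n \cdot h(\widehat{\loc}_n, \widehat{\pv}_n)$ over the runs of $\Pi$ launched at the weighted state $(\loc,\pv,1)$. The crux of the argument is the identity
\begin{equation*}
V_n(\loc,\pv) \;=\; \ewt_\Pi^n(h)(\loc,\pv)\quad\text{for all reachable } (\loc,\pv) \text{ and all } n \ge 0,
\end{equation*}
which I would prove by induction. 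The base case $n=0$ is immediate since $\widehat{w}_0=1$ and $(\widehat{\loc}_0,\widehat{\pv}_0)=(\loc,\pv)$. For the inductive step at a non-terminal $\loc$, I would apply \eqref{eq:ewt} and use the Markov property of the WPTS to identify each summand $p_j\cdot\expectdist{\rv}{\wet_j(\pv,\rv)\cdot V_n(\loc'_j,\upd_j(\pv,\rv))}$ with the contribution of runs whose first step enters the $j$-th fork; at $\loc=\lout$ the frozen-state semantics combined with (C2) makes both sides equal to $1$.

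Next I would exploit the score-at-end structure to control the random quantity $\widehat{w}_n \cdot h(\widehat{\loc}_n, \widehat{\pv}_n)$: on $\{T>n\}$ every intermediate score is $1$, so the quantity lies in $[-M_h,M_h]$ where $M_h$ bounds $|h|$; on $\{T\le n\}$ the weighted state is frozen at $(\lout,\widehat{\pv}_T,\widehat{w}_T)$ with $\widehat{w}_T\in[0,M]$ and $h(\lout,\cdot)=1$, so the quantity equals $\widehat{w}_T$. AST gives $\probm_{\valin}(T>n)\to 0$, hence pointwise almost surely $\widehat{w}_n\cdot h(\widehat{\loc}_n,\widehat{\pv}_n)\to\widehat{w}_T$; the integrand is uniformly bounded by $\max(M,M_h)$, so dominated convergence yields
\begin{equation*}
\ewt_\Pi^n(h)(\lin,\valin)\;=\;V_n(\lin,\valin)\;\longrightarrow\;\expectdist{\valin}{\widehat{w}_T}\;=\;\llbracket\Pi\rrbracket_{\valin}(\Rset^{|\pvars|}).
\end{equation*}

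For the upper-bound half of the theorem, (C1) together with the trivial match $\ewt_\Pi(h)(\lout,\pv)=1=h(\lout,\pv)$ gives $\ewt_\Pi(h)\le h$ at every reachable state, and monotonicity of $\ewt_\Pi$ (inherited from the monotonicity of expectation) then yields $\ewt_\Pi^n(h)(\lin,\valin)\le h(\lin,\valin)$ for every $n$; passing to the limit above gives $\llbracket\Pi\rrbracket_{\valin}(\Rset^{|\pvars|})\le h(\lin,\valin)$. The PLWF half is entirely symmetric via (C1'). The main obstacle I anticipate is the careful probability-theoretic justification of the inductive identity, since one must verify that the sum over forks and the expectation over sampling valuations in \eqref{eq:ewt} genuinely encode one step of the WPTS's Markov kernel; a clean route is to condition on the $\sigma$-algebra generated by the first $n$ weighted states, apply the tower law, and use determinism-plus-totality of $\Pi$ to identify the unique transition fired from $(\widehat{\loc}_n,\widehat{\pv}_n)$. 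Once this is in place, AST and the two boundedness assumptions do all the remaining analytic work.
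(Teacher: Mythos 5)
Your proof is correct, and it reaches the conclusion by a genuinely more elementary route than the paper's. The paper's argument (\cref{app:fixedpoint}) is lattice-theoretic: it shows $\ewt_\Pi$ is continuous and cocontinuous on the complete lattice $(\mathcal{K}_M,\le)$, invokes Kleene's theorem to identify $\lfp\,\ewt_\Pi$ and $\gfp\,\ewt_\Pi$ with the limits of $\ewt_\Pi^n(\bot)$ and $\ewt_\Pi^n(\top)$, uses AST to show these limits coincide (so the fixed point is unique and equals $\mbox{\sl ew}_\Pi$, their \cref{thm:lfp}), and only then applies Tarski's characterization $\lfp f=\bigsqcap\{x\mid f(x)\sle x\}$ to conclude that every prefixed point dominates $\mbox{\sl ew}_\Pi$. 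You instead prove the value-iteration identity $\ewt_\Pi^n(h)=V_n$, let AST, the score-at-end structure and dominated convergence drive $V_n(\lin,\valin)\to\llbracket\Pi\rrbracket_{\valin}(\Rset^{|\pvars|})$, and exploit (C1) only through the monotone descent $\ewt_\Pi^n(h)\le h$; this bypasses lattice completeness, Kleene and Tarski entirely and amounts to a hand-unrolled Park induction. The analytic core is shared --- AST kills the tail event $\{T>n\}$, and the two boundedness hypotheses ($|h|\le M_h$ and the terminal score lying in $[0,M]$) license the interchange of limit and expectation --- but your version is more self-contained and only ever evaluates $h$ at reachable states, whereas the paper's route buys a reusable intermediate statement (uniqueness of the fixed point) that it also leans on in the truncation correctness proofs. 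One small point of care: the first-step decomposition you describe in the body is the one that yields $V_{n+1}=\ewt_\Pi(V_n)$ directly; the last-step conditioning you sketch at the end instead yields $V_{n+1}^{h}=V_n^{\ewt_\Pi(h)}$ and therefore needs the induction hypothesis to range over all bounded state functions satisfying (C2), not just the given $h$ --- both routes close, but they are not interchangeable mid-argument.
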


\begin{remark}
Although our fixed-point approach targets score-at-end programs, it can be extended to Bayesian programs such that in any execution of the program, only a bounded number of {\tt score}  statements are executed and they are all bounded. 
This is because one can find a bound for the overall weight of the boundedly many score functions to apply our fixed point theorem.  The intuition here is that our fixed point approach can handle Bayesian programs whose multiplicative cumulative score values are always bounded by some constant. Especially, this approach can handle programs where all execution cycles are score-bounded by one. \qed
\end{remark}

\subsection{The OST Approach}\label{sec:ostapproach}

Below we propose a novel approach to address the NPD problem of Bayesian programs that have score statements with weights greater than $1$ inside loop bodies. We refer to such programs \emph{score-recursive}, and they have received  significant attention recently (such as the phylogenetic birth model in Section~\ref{sec3:phylogenetic}). 
We say that a WPTS is \emph{score-recursive} if it has a cycle of transitions on which there is a score statement with score function taking a value greater than $1$. Our fixed-point approach requires bounded score, and therefore cannot handle all score-recursive programs. 

We first derive a novel multiplicative variant of the classical Optional Stopping Theorem (OST) that tackles the multiplicative feature of score statements, then applies our OST variant to potential weight functions (\cref{def:puwf}) to address the NPD problem. The classical OST requires bounded changes of the weight value, and thus cannot handle score-recursive programs.  

Our OST variant applies to bounded-update score-recursive WPTS's. Informally, a WPTS has the bounded-update property if the change of values of the program variables is bounded by a global constant for every transition in the WPTS. This property is fulfilled in many realistic probabilistic models as the change of value of a variable in a single step is often bounded.
Formally, a WPTS $\Pi$ has the \emph{bounded-update} property if there exists a real constant $\varkappa>0$ such that for every reachable state $(\loc,\pv)$ and fork $F_j=\langle \loc'_j,p_j,\upd_j,\wet_j \rangle$ from a transition with the source location $\loc$, we have that $\forall \rv\in\supp{\rdvarjdis}\,~\forall x\in \pvars,~~ |\upd_j(\pv,\rv)(x)-\pv(x)|\le \varkappa$. The OST variant is given as follows.

\begin{theorem}[OST Variant]\label{thm:ost-variant}
Let $\{X_n\}_{n=0}^\infty$ be a supermartingale adapted to a filtration 
$\mathcal{F}=\{\mathcal{F}_n\}_{n=0}^\infty$, and $\kappa$ be a stopping time w.r.t. the filtration $\mathcal{F}$. 
Suppose that there exist positive real numbers $b_1,b_2,c_1,c_2,c_3$ such that $c_2>c_3$ and
\begin{itemize}
\item[(A1)] $\probm(\kappa>n) \leq c_1 \cdot e^{-c_2 \cdot n}$ for sufficiently large $n \in \Nset$, and
\item[(A2)] for all $n \in \Nset$, $\left\vert X_{n+1}-X_n \right\vert \le b_1\cdot n^{b_2}\cdot e^{c_3\cdot n}$ holds almost surely. 
\end{itemize}
Then we have that 
$\expv\left(|X_\kappa|\right)<\infty$ and 
$\expv\left(X_\kappa\right)\le\expv(X_0)$.
\end{theorem}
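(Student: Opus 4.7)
The plan is to extend the classical Optional Stopping Theorem by exploiting the fact that the exponential decay of the tail probabilities of $\kappa$ (rate $c_2$) dominates the allowed exponential growth of the increments (rate $c_3$), since $c_2 > c_3$. The strategy follows the standard template: first truncate with the bounded stopping time $n \wedge \kappa$ to apply the elementary OST, then pass to the limit $n \to \infty$ using a dominated-convergence argument. The two nontrivial ingredients are: (i) showing that $|X_\kappa|$ (and more importantly the dominating random variable for DCT) is integrable, and (ii) justifying the interchange of limit and expectation.

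First I would control the total variation along a trajectory. Writing telescopically,
\[
|X_{n \wedge \kappa} - X_0| \;\le\; \sum_{k=0}^{\kappa - 1} |X_{k+1} - X_k|,
\]
and defining $W := |X_0| + \sum_{k=0}^{\kappa-1} |X_{k+1} - X_k|$, assumption (A2) yields $W \le |X_0| + \sum_{k=0}^{\kappa-1} b_1 k^{b_2} e^{c_3 k}$ almost surely. By Tonelli,
\[
\expv\!\left[\sum_{k=0}^{\kappa-1} b_1 k^{b_2} e^{c_3 k}\right]
\;=\; b_1 \sum_{k=0}^{\infty} k^{b_2} e^{c_3 k}\, \probm(\kappa > k)
\;\le\; b_1 c_1 \sum_{k=0}^{\infty} k^{b_2} e^{-(c_2 - c_3)\, k},
\]
where the last step uses (A1) for all sufficiently large $k$ (the finitely many initial terms contribute only a finite amount). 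Since $c_2 > c_3$, this series converges, so $\expv[W] < \infty$. In particular $\expv[|X_\kappa|] \le \expv[W] < \infty$, establishing the first part of the conclusion, and (A1) also gives $\probm(\kappa < \infty) = 1$, so $X_{n \wedge \kappa} \to X_\kappa$ almost surely.

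Next I would apply the standard OST for bounded stopping times to the truncated stopping time $n \wedge \kappa$: since $\{X_m\}$ is a supermartingale and $n \wedge \kappa$ is bounded by $n$, $\expv[X_{n \wedge \kappa}] \le \expv[X_0]$ for every $n$. Then I would pass to the limit. The dominating bound $|X_{n \wedge \kappa}| \le W$ holds for every $n$ (since the path-sum up to $n \wedge \kappa$ is dominated by the path-sum up to $\kappa$), and $W$ is integrable by the previous paragraph. Hence dominated convergence gives
\[
\expv[X_\kappa] \;=\; \lim_{n \to \infty} \expv[X_{n \wedge \kappa}] \;\le\; \expv[X_0],
\]
which is the desired inequality.

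The main obstacle is the Tonelli/Fubini-style estimate for $\expv[W]$: one must carefully rewrite $\expv[\sum_{k=0}^{\kappa-1} a_k]$ (for non-negative $a_k$) as $\sum_k a_k \probm(\kappa > k)$, and handle the ``sufficiently large $n$'' clause in (A1) by splitting the sum into a finite head and a geometrically convergent tail. Once this integrability bound is secured, both parts of the theorem follow routinely from the truncation-plus-DCT scheme; the conditions $c_2 > c_3$ and the polynomial-times-exponential form in (A2) are used \emph{only} at this step, so it is essentially the whole content of the theorem.
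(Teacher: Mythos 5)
Your proposal is correct and follows essentially the same route as the paper's proof: dominate $|X_{\kappa\wedge n}|$ by the telescoped sum $|X_0|+\sum_{k}|X_{k+1}-X_k|\cdot\mathbf{1}_{\kappa>k}$, show its expectation is finite via Tonelli/MCT using (A2), (A1) and $c_2>c_3$, then combine the bounded-stopping-time OST with dominated convergence. Your explicit handling of the ``sufficiently large $n$'' clause and of $\probm(\kappa<\infty)=1$ is slightly more careful than the paper's write-up, but the argument is the same.
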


Our OST variant relaxes the classical OST that we allow the next random variable $X_{n+1}$ to be bounded by that of $X_n$ with a multiplicative factor $e^{c_3}$. The intuition is to  
cancel the multiplicative factor with the exponential decrease in $\probm(\kappa>n) \leq c_1 \cdot e^{-c_2 \cdot n}$.
We note that the exponential decrease is essential to cancel multiplicative scaling in score statements, as shown in \emph{challenges and gaps} in~\cref{sec:intro}.  
The proof resembles \cite[Theorem 5.2]{cost2019wang} and is relegated to \cref{app:ost-variant-proof}.

Below we show how our OST variant can be applied to handle bounded-update score-recursive programs. Fix a bounded-update score-recursive WPTS $\Pi$ in the form of \eqref{eq:wpts}. We reuse the expected-weight transformer defined in  \cref{def:ewt} and  potential weight functions given in \cref{def:puwf}. The difference with our fixed-point approach is that we no longer require that the score function $\wet_j$ in \eqref{eq:ewt} equals one for locations that do not lead to termination.

\begin{theorem}[OST Approach]\label{thm:puwf-normalizing}
Let $\Pi$ be a bounded-update score-recursive WPTS. 
Suppose that there exist real numbers $c_1>0$ and $c_2>c_3>0$ such that 
\begin{itemize}
\item[(E1)] $\probm(T>n) \leq c_1 \cdot e^{-c_2 \cdot n}$ for sufficiently large $n\in\Nset$, and 
\item[(E2)] for each score function $\wet$ in $\Pi$, we have $|\wet|\le e^{c_3}$. 
\end{itemize}
 Then for any polynomial PUWF (resp. PLWF) $h$ over $\Pi$, we have that \  $\llbracket \Pi\rrbracket_{\valin} (\Rset^{|\pvars|})\le h(\lin,\valin)$ (resp. $\llbracket \Pi\rrbracket_{\valin} (\Rset^{|\pvars|})\ge h(\lin,\valin)$) for any initial state $(\lin,\valin)$, respectively.  
\end{theorem}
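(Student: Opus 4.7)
Fix an initial state $(\lin,\valin)$ and work in the probability space induced by $\probm_{\valin}$. The plan is to exhibit a supermartingale whose initial and stopped expectations are exactly $h(\lin,\valin)$ and $\llbracket\Pi\rrbracket_\valin(\Rset^{|\pvars|})$, and then apply our OST variant (\cref{thm:ost-variant}) with $\kappa:=T$. Let $\mathcal{F}_n:=\sigma(\widehat\Theta_0,\dots,\widehat\Theta_n)$ be the natural filtration of the semantics process, and define
\[
X_n\;:=\;\widehat w_n\cdot h(\widehat\loc_n,\widehat\pv_n).
\]
Recall that $\widehat w_0=1$ and $\widehat\loc_0=\lin$, so $\expv(X_0)=h(\lin,\valin)$; moreover, on $\{T<\infty\}$ we have $\widehat\loc_T=\lout$, so by (C2), $h(\widehat\loc_T,\widehat\pv_T)=1$ and hence $X_T=\widehat w_T$, giving $\expv(X_T)=\llbracket\Pi\rrbracket_\valin(\Rset^{|\pvars|})$. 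Thus, once OST applies, the theorem will follow for PUWFs; for PLWFs we simply apply the same argument to $\{-X_n\}$, which is a supermartingale under (C1').

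\paragraph{Supermartingale property.} First I would show $\{X_n\}$ is an $\mathcal{F}$-supermartingale. For any $n$, on the event $\{\widehat\loc_n\neq\lout\}$, letting $\tau=\langle\widehat\loc_n,\phi,F_1,\dots,F_k\rangle$ be the unique guard-satisfying transition and $F_j=\langle\loc'_j,p_j,\upd_j,\wet_j\rangle$, the semantics gives $\widehat w_{n+1}=\widehat w_n\cdot\wet_j(\widehat\pv_n,\rv)$ and $\widehat\pv_{n+1}=\upd_j(\widehat\pv_n,\rv)$ with probability $p_j$, and $\widehat w_n$ is $\mathcal{F}_n$-measurable. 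Pulling $\widehat w_n$ out of the conditional expectation and invoking \eqref{eq:ewt} and (C1) yields
\[
\condexpv{X_{n+1}}{\mathcal{F}_n}=\widehat w_n\cdot\ewt(h)(\widehat\loc_n,\widehat\pv_n)\le \widehat w_n\cdot h(\widehat\loc_n,\widehat\pv_n)=X_n.
\]
On $\{\widehat\loc_n=\lout\}$ the semantics is stationary, so $X_{n+1}=X_n$. Integrability of $X_n$ will follow from the bounds established below.

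\paragraph{Verifying OST hypotheses.} Assumption (A1) of \cref{thm:ost-variant} for $\kappa=T$ is exactly (E1). For (A2), I would derive explicit growth bounds. By (E2), $\wet_j\le e^{c_3}$ everywhere, so inductively $\widehat w_n\le e^{c_3 n}$ almost surely. By the bounded-update property with constant $\varkappa$, and using that $\supp{\mu_{\mathrm{init}}}$ is bounded (say by some $K$ in sup-norm), we get $\|\widehat\pv_n\|_\infty\le K+n\varkappa$ almost surely. Since $h$ is a polynomial in $\pv$ (and $|L|<\infty$), there is a polynomial $Q$ with $|h(\widehat\loc_n,\widehat\pv_n)|\le Q(n)$ almost surely. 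Combining these, $|X_n|\le e^{c_3 n}\cdot Q(n)$, whence
\[
|X_{n+1}-X_n|\;\le\;|X_{n+1}|+|X_n|\;\le\;2\,e^{c_3(n+1)}\,Q(n+1),
\]
which has the form $b_1\,n^{b_2}\,e^{c_3 n}$ required by (A2). The hypothesis $c_2>c_3$ is precisely what makes our OST variant applicable with these constants; this strict inequality is the key tension the OST variant was designed to resolve, as noted after \cref{thm:ost-variant}.

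\paragraph{Conclusion and main obstacle.} With both (A1) and (A2) verified, \cref{thm:ost-variant} yields $\expv(X_T)\le\expv(X_0)$, i.e., $\llbracket\Pi\rrbracket_\valin(\Rset^{|\pvars|})\le h(\lin,\valin)$, completing the PUWF case; the PLWF case is symmetric via $\{-X_n\}$. The main obstacle, and the step that deserves care, is the verification of (A2): one must simultaneously control the \emph{polynomial} growth of $h$ along trajectories (which relies crucially on both bounded-update and the polynomial form assumed in the statement) and the \emph{exponential} growth of $\widehat w_n$ (controlled by (E2)), producing a bound that fits the template in (A2) with the \emph{same} constant $c_3$ so that the strict gap $c_2>c_3$ in (E1) does the cancellation. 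The polynomial assumption on $h$ cannot be dropped here, since the argument fails for $h$ growing faster than any polynomial.
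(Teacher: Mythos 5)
Your proposal is correct and follows essentially the same route as the paper's proof: your process $X_n=\widehat w_n\cdot h(\widehat\loc_n,\widehat\pv_n)$ is exactly the paper's $Y_n=h(\loc_n,\pv_n)\cdot\prod_{i=0}^{n-1}W_i$, and you verify the supermartingale property via (C1), match (A1)/(A2) with (E1)/(E2) plus bounded-update in the same way, and conclude via (C2) and the OST variant, with the lower-bound case handled by negation just as in the paper.
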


\begin{proof}[Proof Sketch.]
For upper bounds, we define the stochastic process $\{X_n\}_{n=0}^\infty$ as $X_n:=h(\loc_n,\pv_n)$ where $(\loc_n,\pv_n)$ is the program state at the $n$-th step of a program run. Then we construct another stochastic process $\{Y_n\}_{n=0}^\infty$ such that $Y_n:=X_n\cdot \prod_{i=0}^{n-1} W_i$ where $W_i$ is the weight at the $i$-th step of the program run. We consider the termination time $T$ of $\Pi$ and prove that $\{Y_n\}_{n=0}^\infty$ satisfies the prerequisites of our OST variant (\cref{thm:ost-variant})
by matching (A1) with (E1) and (A2) with (E2). 
Then by \cref{thm:ost-variant}, we obtain that $\expect{Y_T}\le \expect{Y_0}$. By (C2) in \cref{def:puwf}, we have that $Y_T=h(\loc_T,\pv_T)\cdot \prod_{i=0}^{T-1} W_i=\widehat{w}_T$. Thus, we have that $\llbracket \Pi\rrbracket_{\valin} (\Rset^{|\pvars|})=\expectdist{\valin}{\widehat{w}_T}=\expect{\prod_{i=0}^{T-1} W_i}\le \expect{Y_0}=h(\lin,\valin)$. Lower bounds are derived similarly. The detailed proof is relegated to~\cref{app:ost}.
\end{proof}

\begin{remark}
Our OST approach can handle programs with unbounded score values as caused by weights greater than $1$ that appear inside loop bodies, but with prerequisites in Theorem~\ref{thm:puwf-normalizing} (i.e., bounded update, (E1) and (E2)) to ensure the integrability to apply our OST variant. Note that our OST approach can directly handle non-score-recursive WPTS's 
as in such WPTS the score value  inside loops is bounded by one. \qed
\end{remark}

\section{Algorithmic Approaches}\label{sec:algorithm}
In this section, we present algorithms for our fixed-point and OST approaches. 
Recall the task is to compute bounds for unrestricted expected weights over Bayesian probabilistic programs. 

\subsection{Fixed-point Approach for Score-at-end Programs}

Our algorithm for the fixed-point approach solves a polynomial template $h$ w.r.t the fixed point conditions in \cref{thm:fix-point-bounds}. The details of our algorithm (inputs and stages) are as follows.

\smallskip
\noindent\textbf{Inputs:} The inputs include a score-at-end WPTS $\Pi=(\pvars, \rvars,  L,\lin,\lout,\mu_{\mathrm{init}}, \rdvarjdis,\transset)$ parsed from a Bayesian probabilistic program $P$ written in our PPL, and extra parameters $d,m$, for which $d$ is the degree of the polynomial to be solved and $m$ is the number of partitions that divides the support $\supp{\mu_{\mathrm{init}}}$ of the initial distribution uniformly into $m$ pieces for which our algorithm solves a polynomial for each piece. Note that whether a WPTS is score-at-end or not can be done by checking the score statement and applying techniques from \cite{DBLP:conf/cav/ChakarovS13,DBLP:conf/popl/ChatterjeeFNH16} to check AST. 

\smallskip
\noindent\textbf{Stage 1: Pre-processing.} Our algorithm has the following pre-processing to obtain auxiliary information for the input WPTS. 

\smallskip
\noindent{\em Invariant}: To have an over-approximation of the set of reachable program states, our algorithm leverages external invariant generators (such as~\cite{SankaranarayananSM04}) to generate numerical invariants for the WPTS. We denote the generated invariant at each location $\loc$ by $I(\loc)$ and treat each invariant $I(\loc)$ directly as the set of program valuations satisfying $I(\loc)$. 

\smallskip
\noindent{\em Bounded range}: Our algorithm calculates a bounded subset $B$ of program valuations encoded as a logical formula $\Phi_B$ (so that $B=\{\pv\mid \pv\models\Phi_B\}$) by heuristics. A simple heuristic here would be to run the WPTS for a small number of transitions and determine the bounded range of each program variable as the range covered by these transitions.
The input WPTS will be truncated onto the bounded range $B$  to increase the accuracy of the polynomial solving. 
Moreover, our algorithm calculates an extended bounded range $B'$ of $B$ by examining all transitions from $B$, i.e., for all $\pv\in B$, all locations $\loc\ne \lout$ and all weighted forks $F=\langle \loc'_j, p_j, \upd_j,\wet_j \rangle$ in some transition with source location $\loc$ and all $\rv\in \supp{\rdvarjdis}$, we have that $\upd_j(\pv,\rv)\in B'$. 
Note that the exact choice of the bounded range is irrelevant to the soundness of our fixed-point approach.

\smallskip
\noindent {\em Polynomial approximation}: In the case that the score function $g$ at the termination is non-polynomial, our algorithm leverages external polynomial interpolators to calculate a piecewise polynomial approximation of $g$ over $B'$ with an error bound 
$\epsilon> 0$ such that $|g(\pv)-g'(\pv)|\le \epsilon$ for all $\pv\in B'$. 
Our algorithm then replaces $g$ with $g'$ to avoid non-polynomial arithmetic.
We prove that the replacement is sound up to the additive error $\epsilon$, see~\cref{app:error-analysis}. 

\begin{example}\label{ex:pedes-algo-pre}
Recall the Pedestrian example in \cref{sec3:pedestrian} and its WPTS $\Pi$ in \cref{fig:pedestrian-wpts}. We choose the algorithm parameters as $d=1$ and $m=30$ to exemplify our algorithm.
We derive an invariant $I$ simply from the loop guard so that $I(\lin)=pos\ge 0$ and $I(\lout)=pos<0$. 
As the program variables $start,step$ simply receive samples, we only consider the key program variables $pos,dist$.
The bounded range is denoted by $B=\{(pos,dis)\mid pos\in [0,5], dis\in [0,5]\}$,
and the extended bounded range is given by $B'=\{(pos,dis)\mid pos\in [-1,6],dis\in [0,6]\}$.
Since the program is score-at-end and its score function $g$ at the termination is non-polynomial (i.e., $g(dis)=pdf(normal(1.1,0.1),dis)$), we choose a polynomial approximation $g'$ of $g$ with the error bound $\epsilon=10^{-5}$. \qed
\end{example}

\smallskip
\noindent\textbf{Stage 2: Partition.}
Our algorithm splits the set $\calV:=\supp{\mu_{\mathrm{init}}}$ of initial program valuations uniformly into $m\ge 1$ disjoint partitions $\calV_1,\dots,\calV_m$ and construct a set $\calW=\{\pv_1,\dots,\pv_m\}$ such that each $\pv_i\in\calV_i$. Our approach tackles each $\calV_i$ ($1\le i\le m$) separately to obtain polynomial bounds $l_i,u_i$ such that $l_i(\pv)\le \measureSem{\Pi}_{\pv}(\Rset^{|\pvars|})\le u_i(\pv)$ for all $\pv\in\calV_i$. It follows that the interval bound for $\llbracket \Pi \rrbracket (\Rset^{|\pvars|})$ 
can be derived by integrals of polynomial bounds over all $\calV_i$'s, that is, 
    \begin{align*}\label{eq:interval-analysis}
        \tag{$\clubsuit$}
        \sum_{i=1}^m \int_{\calV_i} l_i(\pv) \mu_{\mathrm{init}}(\mathrm{d} \pv) \le \llbracket \Pi \rrbracket (\Rset^{|\pvars|})= \int_{\calV} \measureSem{\Pi}_{\pv}(\Rset^{|\pvars|})\cdot \mu_{\mathrm{init}}(\mathrm{d} \pv)\le \sum_{i=1}^m \int_{\calV_i} u_i(\pv) \mu_{\mathrm{init}}(\mathrm{d} \pv) .
    \end{align*}

\begin{example}\label{ex:pedes-algo-part}
We obtain the set $\calV=\{(pos,dis)\mid pos\in [0,3],dis=0\}$, and partition $\calV$ uniformly into $m=30$ disjoint subsets on the dimension $pos$, i.e., $\calV_1=\{(pos,dis)\mid pos\in [0,0.1],dis=0\},\dots,\calV_{30}=\{(pos,dis)\mid pos\in [2.9,3],dis=0\}$. We calculate the midpoints of the dimension $pos$ for all $\calV_i$'s, and construct the set $\calW=\{(0.05,0),(0.15,0),\dots,(2.95,0)\}$.\qed
\end{example}

\smallskip
\noindent\textbf{Stage 3: Truncation.} 
Our algorithm performs a truncation operation to improve the accuracy. Intuitively, the truncation operation restricts the program values into the bounded range $B$ calculated from the pre-processing, and over-approximates the expected weight outside the bounded range by a truncation approximation. 
A \emph{truncation approximation} is a function $\calM:\mathbb{R}^{|\pvars|}\to [0,\infty)$ such that each $\calM(\pv)$ is intended to be an over- or under-approximation of the expected weight 
$\llbracket \Pi\rrbracket_{\pv} (\Rset^{|\pvars|})$ 
outside the bounded range $B$.  
The truncation operation is given as follows. 

\begin{definition}[Truncation Operation]\label{def:truncation}
Given the bounded range $B$ and a truncation approximation $\calM$,the \emph{truncated} WPTS $\Pi_{B,\calM}$ is defined as
	$\Pi_{\trunc,\calM}:=( \pvars, \rvars, L\cup\{\#\}, \lin, \lout,\mu_{\mathrm{init}},\rdvarjdis, \transset_{\trunc,\calM})$
	where $\#$ is a fresh termination location and the transition relation $\transset_{\trunc,\calM}$ is given by
	\begin{align*}
	&\transset_{\trunc,\calM}:=\{\langle \loc, \phi\wedge \Phi_\trunc, F_1, \dots, F_k \rangle\mid \langle \loc, \phi, F_1,\dots, F_k \rangle\in\transset\mbox{ and } \loc\ne \lout\}\\
 &\quad\cup \{\langle \loc, \phi\wedge (\neg\Phi_\trunc), F^{\calM,\sharp}_1, \dots, F^{\calM,\sharp}_k \rangle\mid \langle \loc, \phi, F_1,\dots, F_k \rangle\in\transset\mbox{ and } \loc\ne \lout\}\tag{\ddag}\\
 &\quad\cup\{\langle \lout, \mathbf{true}, F_{\lout}\rangle, \langle \sharp, \mathbf{true}, F_\sharp\rangle \}
\end{align*}
for which (a) we have $F_\loc:=\langle \loc,1,\mbox{\sl id},\overline{\mathbf{1}} \rangle$ ($\loc\in\{\lout,\sharp\}$) where $\mbox{\sl id}$ is the identity function and $\overline{\mathbf{1}}$ is the constant function that always takes the value $1$, and (b) for a weighted fork $F=\langle \loc', p, \mbox{\sl upd}, \wet\rangle$ in the original WPTS $\Pi$ we have $F^{\calM,\sharp}:=F$ if $\loc'=\lout$ and $F^{\calM,\sharp}:=\langle \sharp, p, \mbox{\sl upd}, \calM\rangle$ otherwise.
\end{definition}
Informally, we obtain the truncated WPTS by restraining each transition to the bounded range $B$ and redirecting 
all transitions jumping out of the bounded range but not into the location $\lout$ to the fresh termination location $\sharp$. We add the self-loop $\langle \sharp, \mathbf{true}, F_\sharp\rangle$ to ensure determinism and totality. 

We call a truncation approximation $\calM$ \emph{upper} (reps. \emph{lower}) if for all reachable state $(\loc,\pv)$ such that $\pv\not\in B$, it holds that $\llbracket \Pi\rrbracket_{\pv}(\Rset^{|\pvars|})\le \calM(\pv)$ (resp. $\llbracket \Pi\rrbracket_{\pv}(\Rset^{|\pvars|})\ge \calM(\pv)$). 
We prove  the correctness that $\llbracket \Pi\rrbracket_{\valin} (\Rset^{|\pvars|})\le \llbracket \Pi_{B,\calM}\rrbracket_{\valin}(\Rset^{|\pvars|})$ for all initial program valuations $\valin$ if $\calM$ is an upper truncation approximation, and $\llbracket \Pi\rrbracket_{\valin} (\Rset^{|\pvars|})\ge \llbracket \Pi_{\trunc,\calM}\rrbracket_{\valin}(\Rset^{|\pvars|})$ if $\calM$ is lower. 
See details in~\cref{app:truncaion}. 

To calculate truncation approximations, our algorithm takes a bound $M$ for the score function and has $0$ and $M$ as the trivial lower and upper truncation approximations. To sharpen the truncation approximations, our algorithm either utilizes the monotonicity of program variables to tighten the estimation of the values of the score function at the termination, or derives polynomial truncation approximations by applying polynomial solving to our fixed-point approach without truncation.

\begin{example}\label{ex:pedes-algo-trunc}
We choose the bounded range 
$[0,5]\times [0,5]$ that specifies $[0,5]$ for both the variable $pos,dis$. The truncation approximations are set to $\calM_{\mathrm{up}}=2.1\times 10^{-330}$ and $\calM_{\mathrm{low}}=0$. The values $2.1\times 10^{-330},0$ are calculated by the monotonicity of the density function $pdf(normal(1.1,0.1),dis)$ and the values of $dis$ at $5,6$, respectively. We obtain two truncated WPTS's $\Pi_{\trunc,\calM_{\mathrm{up}}}$ and $\Pi_{\trunc,\calM_{\mathrm{low}}}$. 
\qed
\end{example}

\begin{remark}
Recall that in our OST approach, we require the bounded update of program variables. This rules out samplings from 
unbounded probability distributions such as normal distribution. By using truncation, we could extend our approach to unbounded update by a suitable over-approximation function $\calM$ that handles unbounded update jumping outside the truncated range. The truncation operation could also handle hierarchical models such as $\textbf{score}(normal(0, 0.1 + Y), dist)$ with  $Y$ sampled from the beta distribution $\textbf{beta}(0,1)$ by a truncation that includes the hierarchical variable $Y$, and unbounded initial distribution by truncating the distribution into a bounded distribution with suitable approximations.  \qed
\end{remark}

\smallskip
\noindent\textbf{Stage 4: Polynomial Solving. }
Our algorithm establishes $d$-degree polynomial templates for $\Pi_{\trunc,\calM_{\mathrm{up}}}$ and 
$\Pi_{\trunc,\calM_{\mathrm{low}}}$, and derives polynomial upper and lower bounds for the expected weights $\llbracket \Pi_{\trunc,\calM_{\mathrm{up}}}\rrbracket_{\pv_i}(\Rset^{|\pvars|})$ and $\llbracket \Pi_{\trunc,\calM_{\mathrm{low}}}\rrbracket_{\pv_i}(\Rset^{|\pvars|})$ for each initial program valuation $\pv_i\in\calV_i$ in $\calW$ by solving the templates w.r.t the PUWF and PLWF constraints (i.e., (C1), (C2), (C1') from \cref{def:puwf}), respectively. 
The correctness of this stage follows from \cref{thm:fix-point-bounds} and that polynomials are bounded over a bounded range. 
Below we present the details in \textbf{Step A1} -- \textbf{A3}. 
We focus on $\Pi_{\trunc,\calM_{\mathrm{up}}}$ and polynomial upper bounds. The case of lower bounds 
considers $\Pi_{\trunc,\calM_{\mathrm{low}}}$ and is similar.

\smallskip
\noindent\textbf{Step A1.} In this step, for each location $\loc\not\in \{\lout,\sharp\}$, our algorithm
sets up a $d$-degree polynomial template $h_\loc$ over the program variables $\pvars$. Each template is a summation of all monomials in the program variables of degree no more than $d$, for which each monomial is multiplied with an unknown coefficient.    
For $\loc\in \{\lout,\sharp\}$, our algorithm assumes $h_\loc\equiv 1$.

\smallskip
\noindent\textbf{Step A2.} In this step, our algorithm establishes constraints for the templates $h_\loc$'s from (C1), (C1') in \cref{def:puwf} (as (C2) is satisfied directly by the form of $h_\loc$). 
For every location $\loc \in\locs\setminus\{\lout,\sharp\}$, 
we have the following relaxed constraints of (C1) to synthesize a PUWF over $\Pi_{\trunc,M_{up}}$: 
\begin{itemize}
\item[(D1)] 
For every program valuation $\pv\in I(\loc) \cap B$, we have that $\ewt(h)(\loc,\pv) \le h(\loc, \pv)$.  
\item[(D2)] For every 
program valuation $\mathbf{v}\in I(\loc) \cap (B'\setminus B)$, we have that $\calM_{\mathrm{up}}(\pv)\le h(\loc,\pv)$.  
\end{itemize}

For lower bounds over $\Pi_{\trunc,M_{low}}$, our algorithms have the relaxed PLWF constraints (D1') and (D2') which are obtained from (D1) and resp. (D2) by replacing ``$\ewt(h)(\loc,\pv) \le h(\loc, \pv)$'' with ``$\ewt(h)(\loc,\pv) \ge h(\loc, \pv)$'' in (D1) and resp. ``$\calM_{\mathrm{up}}(\pv)\le h(\loc,\pv)$'' with ``$\calM_{\mathrm{low}}(\pv)\ge h(\loc,\pv)$'' in (D2), respectively. 
We have that (D1) and (D2) together ensure (C1)  since $\calM_{\mathrm{up}}(\pv)\le h(\loc,\pv)$ implies that $\ewt(h)(\loc,\pv) \le h(\loc, \pv)$ for every location $\loc \in\locs\setminus\{\lout,\sharp\}$ and program valuation $\pv\in I(\loc) \cap (B'\setminus B)$. The same holds for (D1') and (D2'). 

Note that in (D1), the calculation of $\ewt(h)(\loc,\pv)$ has the piecewise nature that different sampling valuations $\rv$ may cause the next program valuation to be within or outside the bounded range, and to satisfy or violate the guards of the transitions in the WPTS. In our algorithm, we have a refined treatment for (D1) that enumerates all possible situations for a sampling valuation $\rv$ that satisfy different guards in the calculation of $\ewt(h)(\loc,\pv)$, for which we use an SMT solver (e.g., Z3~\cite{Z3paper}) to compute the situations. As for (D2), we use (D2) to avoid handling the piecewise feature in the computation of $\ewt(h)(\loc,\pv)$ from within/outside the bounded range (i.e., the computation is a direct computation over a single-piece polynomial), so that the amount of computation of $\ewt(h)(\loc,\pv)$ is reduced by ignoring the piecewise feature. The use of (D2) to reduce the computation follows from the extended bounded range in the pre-processing. 
The same holds for (D1') and (D2').

\begin{example}\label{ex:pedes-algo-A3}
Recall \cref{ex:pedes-algo-pre,ex:pedes-algo-part,ex:pedes-algo-trunc}, we set up the $1$-degree template $h_{\lin}(pos,dis)=a_1\cdot pos+a_2\cdot dis+a_3$ withe unknown coefficients $a_1,a_2,a_3\in\Rset$ and $h_{\loc}(pos,dis)=1$ for $\loc=\{\lout,\sharp\}$.
Then the bounded range $\calD_1=I(\lin)\cap B$ in (D1) is defined such that $ pos\in [0,5]\wedge dis\in [0,5]$. Consider to derive upper bounds from $\Pi_{\trunc,\calM_{\mathrm{up}}}$. We make a fine-grained treatment for (D1) by splitting the range $\calD_1$ and enumerating all possible situations for the sampling valuation $\textbf{sample uniform}(0,1)$ that the next program valuation satisfies or violates the loop guard ``$pos\ge 0$''. In detail, $\calD_1$ is split into $\calD_{11}=\{(pos,dis)\mid pos\in [0,1), dis\in [0,5]\}$ and $\calD_{12}=\{(pos,dis)\mid pos\in (1,5], dis\in [0,5]\}$, where $\calD_{11}$ stands for the situation that with different sampling valuations the next program valuation may satisfy the loop guard (so that the next location is $\lin$) or violate the loop guard (so that the next location is directed to $\lout$), and  $\calD_{12}$ stands for the situation that the next program valuation will definitely satisfy the loop guard and the next location is $\lin$.

We first show the PUWF constraints for $\Pi_{\trunc,\calM_{\mathrm{up}}}$. Recall that the program has two probabilistic branches with probability $0.5$. When the current program valuation is in $\calD_{11}$, we observe that 
(a) if the loop takes the branch $pos:=pos+step$, then the next value of $pos$ remains to be non-negative and the loop continues, and 
(b) if the loop takes the branch $pos:=pos-step$, 
then the next value of $pos$ either satisfies or violates the loop guard, depending on the exact value of $step\in [0,1]$. 

Then we have the constraint (D1.1) over $\calD_{11}$ that has expectation over a piecewise function on $step$ derived from whether the loop terminates in the next iteration or not, as follows:
\begin{itemize}
\item[(D1.1)] $\forall pos,dis.$ $pos\in [0,1)\wedge dis\in [0,5] \Rightarrow$
	\begin{align*}
	&0.5\cdot\expectdist{step}{[pos-step\ge 0]\cdot h_{\lin}(pos-step,dis+step)+[pos-step<0]\cdot g'(dis+step)}\\ +& 0.5\cdot\expectdist{step}{h_{\lin}(pos+step,dis+step)}\le h_{\lin}(pos,dis).
	\end{align*} 
  \end{itemize}
When the current program valuation is in $\calD_{12}$, we observe that the next program valuation is guaranteed to satisfy the loop guard, and hence we have the constraint (D1.2) over $\calD_{12}$ as follows: 
\begin{itemize}
      \item[(D1.2)] $\forall pos,dis.$ $pos\in (1,5]\wedge dis\in [0,5]\Rightarrow$
	\begin{align*}
	0.5\cdot\expectdist{step}{h_{\lin}(pos-step,dis+step)}+0.5\cdot\expectdist{step}{h_{\lin}(pos+step,dis+step)}\le h_{\lin}(pos,dis).	
	\end{align*}
  \end{itemize}
The range $\calD_2=I(\lin) \cap (B'\setminus B)$ in (D2) is represented by the disjunctive formula $\Phi:=(pos\in [0,6]\wedge dis\in [5,6])\vee (pos\in [5,6]\wedge dis\in [0,6])$. For the program valuation in $\calD_2$, its next location is $\loc_\sharp$. Recall the truncation approximation $\calM_{\mathrm{up}}=2.1\times 10^{-330}$. 
From (D2), we have two constraints from the disjunctive clauses in $\Phi$ as follows:
\begin{itemize}
    \item[(D2.1)] $\forall pos,dis.$ $pos\in [0,6]\wedge dis\in [5,6]\Rightarrow 2.1\times 10^{-330}\le h_{\lin}(pos,dis)$. 
	\item[(D2.2)] $\forall pos,dis.$ $pos\in [5,6]\wedge dis\in [0,6]\Rightarrow 2.1\times 10^{-330}\le h_{\lin}(pos,dis)$. 
\end{itemize}
For $\Pi_{\trunc,\calM_{\mathrm{low}}}$, the PLWF constraints (D1'.1) to (D2'.2) are obtained from the PUWF constraints above by replacing ``$\le$'' with ``$\ge$'' and $2.1\times 10^{-330}$ with $0$. \qed
\end{example}

\smallskip
\noindent\textbf{Step A3.} In this step, for every initial program valuation $\pv_i\in\calV_i$ in $\calW$ where $\calV_i$'s and $\calW$ are obtained from \textbf{Stage 2}, our algorithm solves the unknown coefficients in the templates $h_\loc$ ($\loc\in \locs\setminus \{\lout,\sharp\}$) via the well-established methods of Putinar's Positivstellensatz~\cite{putinar} or Handelman's Theorem~\cite{handelman1988representing}.
In detail, our algorithm minimizes the objective function $h_{\lin}(\pv_i)$ for each $\pv_i$ in $\calW$ which subjects to the PUWF constraints from the previous step to derive polynomial upper bounds over $\Pi_{\trunc,\calM_{\mathrm{up}}}$. For lower bounds, we use PLWF and consider to maximize the objective function. 

Note that the PUWF or PLWF constraints from \textbf{Step A2} can be represented as a conjunction of formulas in the form $\forall\pv\in \calP.(\mathbf{g}(\pv)\ge 0)$ where the set $\calP$ is defined by a conjunction of polynomial inequalities in the program variables and $\mathbf{g}$ is a polynomial over $\pvars$ whose coefficients are affine expressions in the unknown coefficients from the templates, and such formulas can be guaranteed by the sound forms of Putinar's Positivstellensatz and Handelman's Theorem. The application of Putinar's Positivstellensatz results in semidefinite constraints and can be solved by semidefinite programming (SDP), while the application of Handelman's Theorem is restricted to the affine case (i.e., every condition and assignment in the WPTS or the original program is affine) and leads to linear constraints and can be solved by linear programming (LP). 

We refer to~\cref{app:putinar,app:Handelman} for the details on the application of Putinar's Positivstellensatz and Handelman's Theorem. 

\begin{example}\label{ex:pedes-algo-A4}
Recall \cref{ex:pedes-algo-part,ex:pedes-algo-trunc,ex:pedes-algo-A3}, we have that $\calV_1=\{(pos,dis)\mid pos\in [0,0.1],dis=0\},\dots,\calV_{30}=\{(pos,dis)\mid pos\in [2.9,3],dis=0\}$ and the set $\calW=\{(0.05,0),(0.15,0),\dots,(2.95,0)\}$. 
Pick a point $\pv_1=(0.05,0)\in\calV_1$ from $\calW$, and to synthesize the upper bound for $\llbracket \Pi_{\trunc,\calM_{\mathrm{up}}}\rrbracket_{\pv_1}(\Rset^{|\pvars|})$,
we solve the following optimization problem whose objective function is $h_{\lin}(0.05,0)$, i.e., 
$$\textstyle \textbf{Min}~~~~~ 0.05\cdot a_1+a_3 \qquad \textstyle \textbf{s.t.}
 ~~~~~\text{constraints (D1.1)--(D2.2)}$$
Dually, the lower bound for $\llbracket \Pi_{\trunc,\calM_{\mathrm{low}}}\rrbracket_{\pv_1}(\Rset^{|\pvars|})$ is solved by 
$\textstyle \textbf{Max}~~~~~ 0.05\cdot a_1+a_3~\textstyle \textbf{s.t.} ~\text{(D1'.1)--(D2'.2)}$.
Although the constraints are universally quantified, the universal quantifiers can be soundly (but not completely) removed and relaxed into semidefinite constraints over the unknown coefficients $a_i$'s by applying Putinar's Positivstellensatz, where
we over-approximate all strict inequalities (e.g., ''$<$'') by non-strict ones (e.g., ``$\le$''). Then we call an SDP solver to solve the two optimization problems and find the solutions of $a_i$'s, which will generate two polynomial bound functions $Up_1,Lw_1$ such that $\llbracket \Pi_{\trunc,\calM_{\mathrm{up}}}\rrbracket_{\pv}(\Rset^{|\pvars|}) \le Up_1(\pv)+\epsilon$ and $Lw_1(\pv)-\epsilon \le \llbracket \Pi_{\trunc,\calM_{\mathrm{low}}}\rrbracket_{\pv}(\Rset^{|\pvars|})$ for all program valuation $\pv\in\calV_1$ with the polynomial approximation error  $\epsilon=10^{-5}$.    \qed
\end{example}

\smallskip
\noindent\textbf{Stage 5: Integration.}
As a consequence of \textbf{Stage 4}, our algorithm obtains
polynomial upper bounds  $Upper=\{Up_1,\dots,Up_m\}$ for the expected weights $\{ \llbracket \Pi_{\trunc,\calM_{\mathrm{up}}} \rrbracket_{\pv_1}(\Rset^{|\pvars|}), \dots,  \llbracket \Pi_{\trunc,\calM_{\mathrm{up}}} \rrbracket_{\pv_m}(\Rset^{|\pvars|}) \}$ w.r.t. the $m$ partitions $\calV_i,\dots,\calV_m$. 
Then our algorithm integrates these polynomial upper bounds to derive the upper bound for $\llbracket \Pi_{\trunc,\calM_{\mathrm{up}}} \rrbracket (\Rset^{|\pvars|})$.  
In detail, we have that
\begin{align}\label{eq:nc-upper}
\textstyle\llbracket \Pi_{\trunc,\calM_{\mathrm{up}}} \rrbracket (\Rset^{|\pvars|})\le \sum_{i=1}^m \int_{\calV_i} Up_i(\pv) \cdot \frac{1}{m} \mathrm{d}\pv=:u
\end{align}
Similarly,
the lower bound for $\llbracket \Pi_{\trunc,\calM_{\mathrm{low}}} \rrbracket (\Rset^{|\pvars|})$ is given by
\begin{align}\label{eq:nc-lower}
\textstyle l:=\sum_{i=1}^m \int_{\calV_i} Lw_i(\pv) \cdot \frac{1}{m} \mathrm{d}\pv \le \llbracket \Pi_{\trunc,\calM_{\mathrm{low}}} \rrbracket (\Rset^{|\pvars|})  
\end{align}
with the polynomial lower bounds $Lower=\{Lw_1,\dots,Lw_m\}$ generated in \textbf{Stage 4}. Note that $[l,u]$ is also the interval bound for the original $\llbracket \Pi \rrbracket (\Rset^{|\pvars|})$ as $\llbracket \Pi_{B,\calM_{\mathrm{low}}}\rrbracket_{\valin}(\Rset^{|\pvars|}) \le \llbracket \Pi\rrbracket_{\valin} (\Rset^{|\pvars|})\le \llbracket \Pi_{B,\calM_{\mathrm{up}}}\rrbracket_{\valin}(\Rset^{|\pvars|})$ for all initial program valuations $\valin$ (see~\cref{app:truncaion}).

If the score function $g$ at the termination is non-polynomial, our algorithm integrates the polynomial approximation error $\varsigma=\mathrm{volume}(\calV)\cdot\epsilon$ caused by its polynomial approximation $g'$ to the two bounds, i.e, $l'=l-\varsigma$ and $u'=u+\varsigma$, where $\mathrm{volume}(\calV)$ is the volume of $\calV$ and $\epsilon$ is the error bound. In practice, to ensure the tightness of the interval bounds, we can control the amount of $\epsilon$ so that the approximation error $\varsigma$ is at least one magnitude smaller than the values of $l,u$. 

The correctness of our algorithm is stated as follows. 
The pseudo code is in~\cref{alg:fixed-point}. 

\begin{algorithm}
    \SetKwData{Left}{left}\SetKwData{This}{this}\SetKwData{Up}{up}
    \SetKwFunction{Union}{Union}\SetKwFunction{FindCompress}{FindCompress}
    \SetKwInOut{Input}{Input}\SetKwInOut{Output}{Output}

    \Input{A score-at-end WPTS $\Pi$, Parameters $d, m$}
     \Output{The upper bound $u$ and the lower bound $l$ for $\llbracket\Pi \rrbracket (\Rset^{|\pvars|})$}
    \BlankLine
    \textbf{Pre-processing:} 
    \begin{enumerate}
        \item Invariant generation $I$; 
        \item Bounded range $B$ and extended bounded range $B'$;
        \item Polynomial approximation for non-polynomial score function at termination.
    \end{enumerate}
    \textbf{Partition:} Splits the set $\calV:=\supp{\mu_{\mathrm{init}}}$ into $m\ge 1$ disjoint partitions $\calV_1,\dots,\calV_m$ and construct a set $\calW=\{\pv_1,\dots,\pv_m\}$ \textbf{s.t.} $\pv_i\in\calV_i$. \\
    \textbf{Truncation:}  
    \begin{enumerate}
        \item Restricts the program values into the bounded range $B$;
        \item Over (resp. under) -approximate the expected weight outside the bounded range $B$ with truncation approximation $\calM_{\mathrm{up}}$ (resp. $\calM_{\mathrm{low}}$);
        \item Construct the truncated WPTS $\Pi_{B,\calM_{\mathrm{up}}}$ (resp. $\Pi_{B,\calM_{\mathrm{low}}}$) for upper (resp. lower) bounds.
    \end{enumerate}
    \textbf{Polynomial Solving:} 
    Establish $d$-degree template $h$ and constraints over $h$, \\
    \For{each $\loc$}{
        Establish constraints (D1) (within truncate range $I(\loc) \cap B$) and (D2) (outside truncated range  $I(\loc) \cap (B'\setminus B)$) \tcc*{upper bound} 
        Establish constraints (D1') (within truncate range $I(\loc) \cap B$) and (D2') (outside truncated range $I(\loc) \cap (B'\setminus B)$) \tcc*{lower bound} 
    }
    Optimize $h_{\lin}(\pv_i)$ for each $\pv_i$ in $\calW$ with the constraints above, and produce polynomial upper and lower bounds $Upper=\{Up_1,\dots,Up_m\}$, $Lower=\{Lw_1,\dots,Lw_m\}$. \\
    \textbf{Integration:} Integrate the upper and lower bounds,\\
    $\textstyle \llbracket \Pi_{\trunc,\calM_{\mathrm{up}}} \rrbracket (\Rset^{|\pvars|})\le \sum_{i=1}^m \int_{\calV_i} Up_i(\pv) \mathrm{d}\pv=:u$,\quad
    $\llbracket \Pi_{\trunc,\calM_{\mathrm{low}}} \rrbracket (\Rset^{|\pvars|}) \ge \sum_{i=1}^m \int_{\calV_i} Lw_i(\pv) \mathrm{d}\pv=:l$
    \caption{The Algorithm for Fixed-Point Approach}   \label{alg:fixed-point}
\end{algorithm}

\begin{theorem}[Soundness]\label{thm:soundness}
	If our algorithm finds valid solutions for the unknown coefficients in the polynomial templates, then it returns correct interval bounds for $\llbracket \Pi \rrbracket (\Rset^{|\pvars|})$. 
\end{theorem}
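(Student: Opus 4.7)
The plan is to prove soundness by chaining together the reductions performed in each stage of the algorithm, since each stage produces a sound over-/under-approximation of the quantity handled by the next stage. Concretely, I would establish the chain
\[
l - \varsigma \;\le\; \llbracket \Pi_{\trunc,\calM_{\mathrm{low}}} \rrbracket(\Rset^{|\pvars|}) \;\le\; \llbracket \Pi \rrbracket(\Rset^{|\pvars|}) \;\le\; \llbracket \Pi_{\trunc,\calM_{\mathrm{up}}} \rrbracket(\Rset^{|\pvars|}) \;\le\; u + \varsigma,
\]
where $l,u$ are the numeric bounds produced in Stage~5 and $\varsigma$ is the polynomial-approximation correction. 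The middle two inequalities come from the correctness of truncation (\cref{def:truncation}, appealed to the lemma alluded to in \cref{app:truncaion}), and the outer inequalities will be justified below. The polynomial-approximation error bound from \cref{app:error-analysis} handles the replacement of the non-polynomial score $g$ by $g'$ and contributes the $\varsigma$ term.

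Next I would focus on the upper side and show $\llbracket \Pi_{\trunc,\calM_{\mathrm{up}}}\rrbracket(\Rset^{|\pvars|}) \le u$; the lower case is symmetric. The core step is to argue that, once the SDP/LP solver of Stage~4 returns coefficients satisfying (D1) and (D2) (for every location $\loc\in \locs\setminus\{\lout,\sharp\}$), the resulting function $h$ is a PUWF for the truncated WPTS $\Pi_{\trunc,\calM_{\mathrm{up}}}$ in the sense of \cref{def:puwf}. Here I would verify each piece of (C1)/(C2): for $\loc\in\{\lout,\sharp\}$ the template is identically $1$ by construction, matching (C2); for reachable $(\loc,\pv)$ with $\pv\in I(\loc)\cap B$, (D1) gives (C1) directly, taking into account the SMT-driven case split over the sampling valuations $\rv$ used to compute $\ewt(h)$ faithfully; for reachable $(\loc,\pv)$ with $\pv\in I(\loc)\cap (B'\setminus B)$, the transitions in $\Pi_{\trunc,\calM_{\mathrm{up}}}$ redirect to $\sharp$ with weight $\calM_{\mathrm{up}}(\pv)$, so (D2) together with $h_\sharp\equiv 1$ yields (C1). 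The soundness of the relaxations produced by Putinar's Positivstellensatz or Handelman's Theorem (\cref{app:putinar,app:Handelman}) ensures that a feasible SDP/LP solution really satisfies the universally quantified constraints. Having established that $h$ is a PUWF, \cref{thm:fix-point-bounds} yields $\llbracket \Pi_{\trunc,\calM_{\mathrm{up}}}\rrbracket_{\pv}(\Rset^{|\pvars|})\le h(\lin,\pv)$ for every initial valuation $\pv\in \calV$.

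The final piece is the integration argument of Stage~5. For each partition cell $\calV_i$, Stage~4 solves the optimization at the representative $\pv_i\in \calV_i$, but the synthesized polynomial $Up_i$ is actually a valid pointwise upper bound on $\llbracket \Pi_{\trunc,\calM_{\mathrm{up}}}\rrbracket_{\pv}(\Rset^{|\pvars|})$ for all $\pv\in \calV_i$, because the PUWF constraints (D1),(D2) do not depend on the choice of $\pv_i$ and constrain $h_{\lin}$ throughout the whole bounded range. Integrating against the uniform density on $\calV$ and applying the inequality ($\clubsuit$) from Stage~2 then gives $\llbracket \Pi_{\trunc,\calM_{\mathrm{up}}}\rrbracket(\Rset^{|\pvars|})\le u$. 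Combining with the truncation inequality and the polynomial-approximation correction $\varsigma$ closes the upper half of the chain; running the dual argument with PLWFs, $\calM_{\mathrm{low}}$, (D1'),(D2'), and Theorem~\ref{thm:fix-point-bounds}'s lower-bound clause closes the lower half.

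The main obstacle I anticipate is the careful bookkeeping around the boundary $B'\setminus B$: we need to verify that the piecewise definition of $\ewt$ in the truncated WPTS, combined with the SMT-based case split used inside (D1), faithfully covers every transition that can occur from a reachable state in $B$, and that the extended range $B'$ is large enough that every one-step successor of such a state lies in $B'$ (so that (D2) governs all out-of-$B$ successors). A secondary obstacle is making precise the claim that a single synthesized $Up_i$ (obtained by minimizing only at the single point $\pv_i$) is in fact a valid PUWF uniformly over $\calV_i$; this should follow because the constraint system is independent of the objective, but it deserves an explicit argument. The remaining steps (Putinar/Handelman soundness, error propagation of $\varsigma$, and monotonicity of expectation in the truncation lemma) are by now standard and can be cited from the appendices.
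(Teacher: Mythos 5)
Your proposal is correct and follows essentially the same route as the paper's (much terser) proof: the solved templates satisfy the PUWF/PLWF conditions on the truncated WPTS, \cref{thm:fix-point-bounds} gives pointwise bounds, and the truncation, integration ($\clubsuit$), and polynomial-approximation corrections are chained together. The one detail the paper's proof makes explicit that you should add when invoking \cref{thm:fix-point-bounds} is that the hypothesis there is a \emph{bounded} PUWF/PLWF, which holds here because the reachable states of the truncated WPTS lie in the bounded set $B'$ and a polynomial is bounded on a bounded range.
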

\begin{proof}
Let $h_\loc$'s ($\loc\in\locs$) be the solved polynomial templates. Since the extended range $B'$ is bounded, we can find a bound $M$ such that all the values these $h_\loc$'s take fall in $[-M,M]$ within $B'$. By applying \cref{thm:fix-point-bounds} to the truncated WPTS and \cref{eq:interval-analysis}, we obtain the desired result. 
\end{proof}

\subsection{OST Approach for Bounded-Update Score-Recursive Programs}

The algorithm for our OST approach over bounded-update score-recursive programs follows similar stages as for our fixed-point approach. 
The main difference lies at the pre-processing and truncation stages. In the pre-processing stage, the difference includes the following:
\begin{itemize}
\item To apply \cref{thm:puwf-normalizing}, our approach checks the prerequisites (E1) and (E2) by external approaches. For example, the condition (E1) can be checked by existing approaches in concentration and tail bound analysis~\cite{DBLP:journals/toplas/ChatterjeeFNH18,DBLP:conf/pldi/WangS0CG21}, and the condition (E2) by SMT solvers. 
\item Our current algorithmic approach does not tackle score statements with non-polynomial score functions inside a loop body. In general, non-polynomial score functions inside the loop body can also be handled by piecewise polynomial approximation. A direct way is to provide upper and lower polynomial bounds for the original non-polynomial function, which avoids the calculation of the propagation of approximation errors along loop iterations.
The same applies to sampling variables inside loop bodies with non-polynomial probability density (or mass) functions.
\end{itemize}
In the truncation, the difference is that our OST approach derives polynomial truncation approximations 
by applying polynomial solving to our OST approach without truncation. 
The soundness of the algorithm (that produces correct bounds for expected weights) follows directly from~\cref{thm:puwf-normalizing}.

\section{Experimental Results}\label{sec:experiment}
In this section, we present the experimental evaluation of our approach over a variety of benchmarks.
First, we show that our approach can handle novel examples that cannot be addressed by existing tools such as \cite{DBLP:conf/cav/GehrMV16,DBLP:conf/pldi/GehrSV20,DBLP:conf/atva/HuangDM21,DBLP:conf/flops/NarayananCRSZ16}. 
Then we compare our approach with the state-of-the-art tool GuBPI \cite{Beutner2022b} over score-at-end Bayesian programs. (Note that GuBPI could only handle score-at-end programs.) Finally, even though the problem of path probability estimations is not the focus of our work, we demonstrate that our approach works well for this problem, for which we also compare the performance of our approach with GuBPI.
We implement our algorithms in Matlab. All results are obtained on an Intel Core i7 (2.3 GHz) machine with 16 GB of memory, running MS Windows 10. 

\subsection{Experimental Setup}\label{sec:6-1}

Since our approach is completely orthogonal to GuBPI, we conservatively have the experimental setup in order to minimize the advantage of the external inputs to our algorithms in the comparison.

\noindent\textbf{Inputs.} All benchmarks are in the form of a single while loop. We set 
two locations for each benchmark, i.e., $\lin$ for the entry of the while loop and $\lout$ for termination.
We denote the loop guard by $\phi$. 
We implement a parser from probabilistic programs into WPTS's in F\#. We conduct our experiments with various parameter combinations of $d, m$ and present the results in \cref{table:1,table:2,table:3}. 

\smallskip
\noindent{\textbf{Stage 1: Pre-processing.}} The pre-processing is illustrated as follows. 

\smallskip\noindent {\em Invariant:} We take the conservative setting that simply derives the invariant from the loop guard $\phi$ so that $I(\lin)=\phi$ and $I(\lout)=\neg\phi$.

\smallskip
\noindent {\em Bounded range:} To minimize the advantage of the choice of the bounded range to our approach, we have a conservative setting that has the bounded range to cover a majority part of program executions. Having a smaller bounded range would result in more accurate results, as polynomial solving is more accurate over a smaller bounded range. 
In detail, for each program variable $x$, we have $B''_x$ as the interval that is the projection of the support of the initial distribution onto the variable $x$. 
Then we choose a large deviation $\delta$ for all $B''_x$ ($x\in \pvars$) to get intermediate intervals $IB_x$, i.e., each ${IB}_x$ is given by ${IB}_x:=[\zeta_1-\delta,\zeta_2+\delta]$ where $[\zeta_1,\zeta_2]=B''_x$. The final bounded range $B$ is given as the intersection of the Cartesian product of all $IB_x$'s and the loop guard. 

\smallskip
\noindent {\em Polynomial approximation:} In the case that the input is a score-at-end Bayesian program and the score function is non-polynomial, we use the polynomial interpolator in Matlab to obtain a piecewise polynomial approximation for the score function. We do not have polynomial approximations in our OST approach since the benchmarks considered do not have non-polynomial score functions. 

\smallskip
\noindent{\textbf{Stage 2: Partition.}} We partition the set of initial valuations uniformly into $m$ disjoint subsets $\mathcal{V}_1,\dots,\mathcal{V}_m$, and choose the midpoints $v_i$ of each partition $\mathcal{V}_i$. 

\smallskip
\noindent{\textbf{Stage 3: Truncation.}} 
Our approach calculates the truncation approximations as described in~\cref{sec:algorithm}. For score-at-end programs, our approach either gets them by the direct bounds from the score function, further improves them by heuristics such as monotonicity, or derives polynomial truncation approximations by applying polynomial solving to our fixed-point approach without truncation. For score-recursive programs, our approach gets the truncation approximations by directly applying polynomial solving under our OST variant without truncation.

\smallskip
\noindent\textbf{Stage 4: Polynomial solving.} In applying the Positivstellensatz's, we use the LP solver in Matlab (resp. Mosek \cite{mosek}) for solving linear (resp. semidefinite) programming, respectively.  

\subsection{Results}

We focus on unbounded while loops as they distinguish our approach with previous approaches most significantly, and compare with the most relevant tool GuBPI~\cite{Beutner2022b}. To compare our approach with GuBPI fairly, our measurement of the time cost of our approach includes the time taken to generate all the extra inputs. 
 
\smallskip 
\noindent{\em NPD - Novel Examples.} 
We consider $10$ novel examples adapted from the literature, where all $7$ examples with prefix ``\textsc{Pd}'' or ``\textsc{RdWalk}'' are from \cite{Beutner2022b}, the two ``RACE'' examples are from \cite{DBLP:conf/pldi/WangS0CG21}, and the last example is from statistical phylogenetics   \cite{ronquist2021universal} (see also \cref{sec:overview}). Concretely, the ``RACE(V2)'' and ``BIRTH'' examples are both score-recursive probabilistic programs with weights greater than $1$, and thus their integrability condition should be verified by the existence of suitable concentration bounds (see~\cref{thm:puwf-normalizing}); other examples are score-at-end probabilistic while loops with unsupported types of scoring by previous tools (e.g., polynomial scoring \textbf{score}($y$) where $y$ is a single-variable polynomial). Therefore, no existing tools w.r.t. NPD can tackle these novel examples. The results are reported in \cref{table:1}, where the first column is the name of each example, the second column contains the parameters of each example used in our approach (i.e., the degree $d$ of the polynomial template, the number $m$ of partitions and the bounded range of program variables), the third column is the used solver, and the fourth and fifth columns correspond to the runtime of upper and lower bounds computed by our approach, respectively.
Our runtime is reasonable, that is, most examples can obtain tight bounds within $100$ seconds, and the simulation results by Pyro~\cite{bingham2019pyro} ($10^6$ samples per case) match our derived bounds. 
We display part of the comparison in \cref{fig:results1}, see \cref{app:experiments} for other figures.

\begin{table*}
	\caption{Results for Novel Examples}
	\label{table:1}
	\resizebox{\textwidth}{!}{
		\begin{threeparttable}
			\begin{tabular}{|c|c|c|c|c|}
				\hline
				\multicolumn{1}{|c|}{\multirow{2}{*}{\textbf{Benchmark}}}  &
				\multicolumn{1}{c|}{\multirow{2}{*}{\textbf{Parameters}}}      &
				\multicolumn{1}{c|}{\multirow{2}{*}{\textbf{Solver}}}      &
				\multicolumn{1}{c|}{\textbf{Upper}}      &
				\multicolumn{1}{c|}{\textbf{Lower}}   \\ \cline{4-5}
				\multicolumn{1}{|c|}{} & \multicolumn{1}{c|}{}  &  \multicolumn{1}{c|}{}  & \multicolumn{1}{c|}{\textbf{ Time (s)}} & 
				\multicolumn{1}{c|}{\textbf{ Time (s)}}                    \\ \hline \hline
				\multirow{1}{*}{\textsc{Pd(v1)}} 
				& $d=6$, $m = 60$,$pos, dis \in [0,5]$   & SDP & $54.65$  & $52.39$        \\
				\hline
				\multirow{1}{*}{\textsc{Race(v1)}} 
				&$d=6$,$m = 40$,$h,t\in [0,5]$ & LP &$87.43$ & $86.27$  \\  
				\hline
				\multirow{1}{*}{\textsc{Race(v2)}\tnote{*}}
				&$d=6$,$m = 40$, $h,t\in [0,5]$ &  LP & $81.19$ & $81.18$   \\   \hline 
				\multirow{1}{*}{\textsc{RdWalk(v1)}} 
				& $d=6$,$m = 60$, $x,y\in [0,5]$ &  LP  & $46.65$  & $47.72$   \\ 
				\hline
				\multirow{1}{*}{\textsc{RdWalk(v2)}} 
				& $d=6$,$m = 60$, $x,y\in [0,5]$ &  LP  & $97.45$  & $103.65$   \\ 
				\hline
				\multirow{1}{*}{\textsc{RdWalk(v3)}} 
				& $d=6$,$m = 60$, $x,y\in [0,5]$ &  LP & $48.61$  & $49.39$   \\ 
				\hline
				\multirow{1}{*}{\textsc{RdWalk(v4)}} 
				& $d=6$, $m = 60$,$x,y\in [0,5]$ &  LP & $98.75$  & $98.21$    \\ 
				\hline
				\multirow{1}{*}{\textsc{PdMB(v3)}}
				& $d=4$, $m = 60$,$pos,dis\in [0,5]$ &   LP  & $15.26$  & $14.57$   \\  
				\hline
				\multirow{1}{*}{\textsc{PdMB(v4)}}  
				& $d=4$,$m = 60$, $pos,dis\in [0,5]$ &  LP   & $16.07$  & $16.12$   \\  
				\hline
				\multirow{1}{*}{\textsc{Birth}\tnote{*}}& $d=6$, $m = 40$, $lambda \in [0,2], time\in [0,10]$ & LP 	& $14.72$ & $16.66$          \\  
				\hline
			\end{tabular}
			
			\begin{tablenotes}
				\footnotesize
				\item[*] It is a score-recursive probabilistic program with weights greater than $1$.
			\end{tablenotes}
	\end{threeparttable}}
\vspace{-3ex}
\end{table*}

\begin{figure}[t]
\begin{minipage}{0.48\textwidth}
	\centering
	\subfigure[\textsc{Pd(v1)}]{
	\resizebox{0.45\textwidth}{!}{
    	\includegraphics[width=2.5in,height=2in]{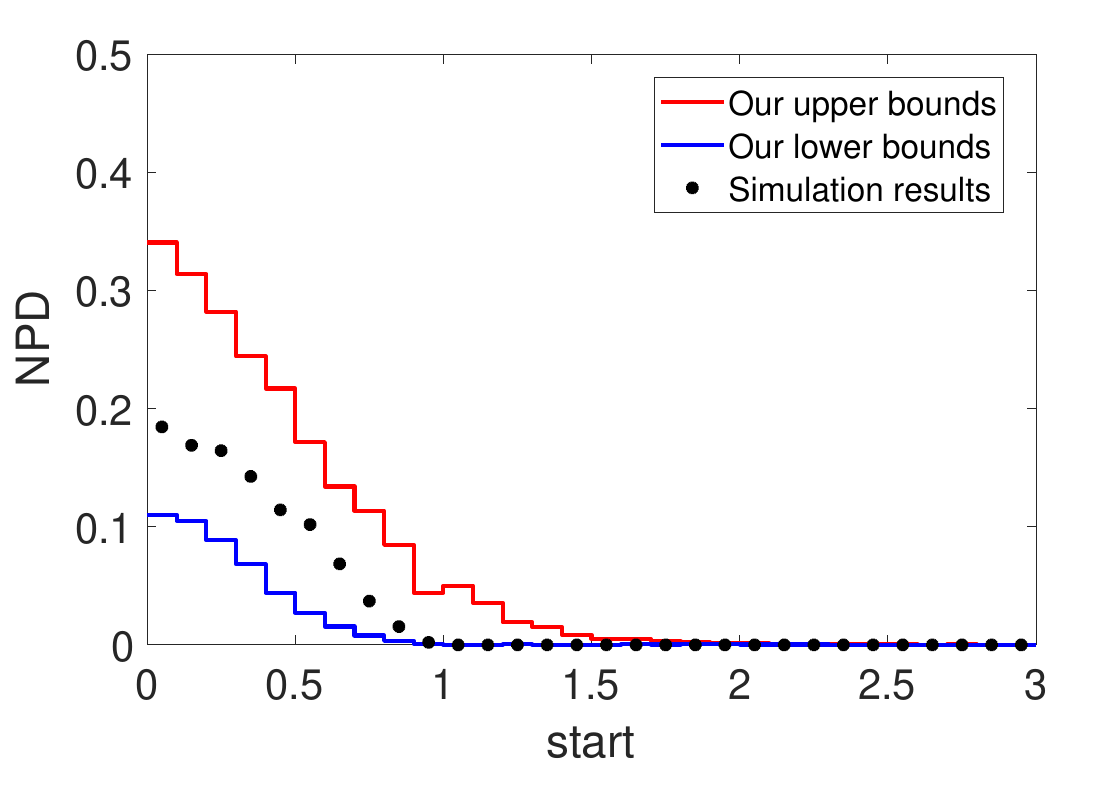}
    }
    }
    \hfill
    \subfigure[\textsc{Race(v2)}]{
        \resizebox{0.45\textwidth}{!}{
        \includegraphics[width=2.5in,height=2in]{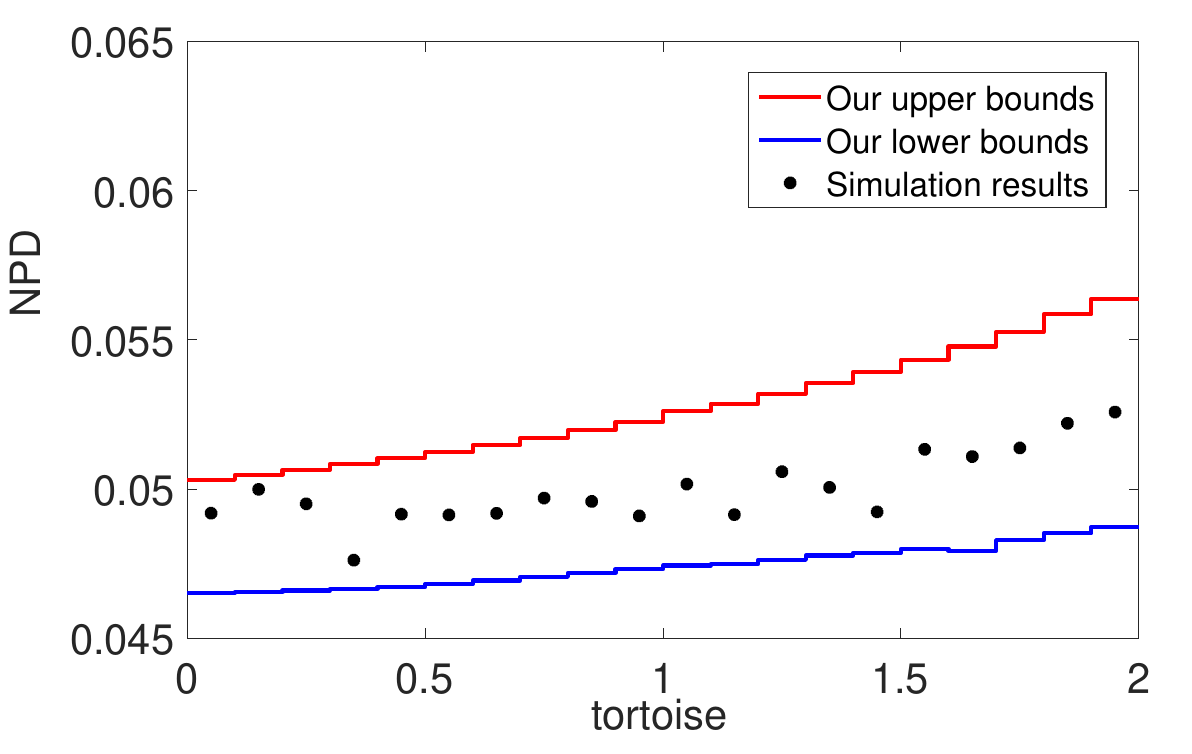}
        }
    }
    \\
	\subfigure[\textsc{RdWalk(v1)}]{
			\resizebox{0.45\textwidth}{!}{
	\includegraphics[width=2.5in,height=2in]{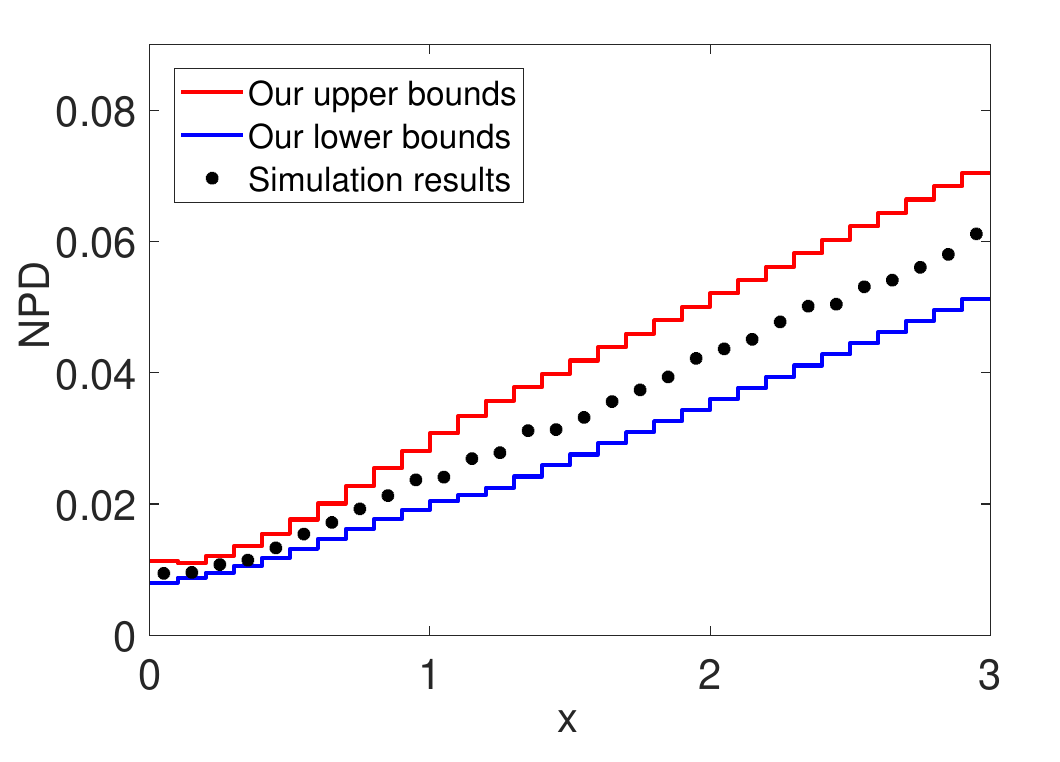}
    }
    }
    \hfill
	\subfigure[\textsc{PdMB(v3)}]{  
			\resizebox{0.45\textwidth}{!}{
	\includegraphics[width=2.5in,height=2in]{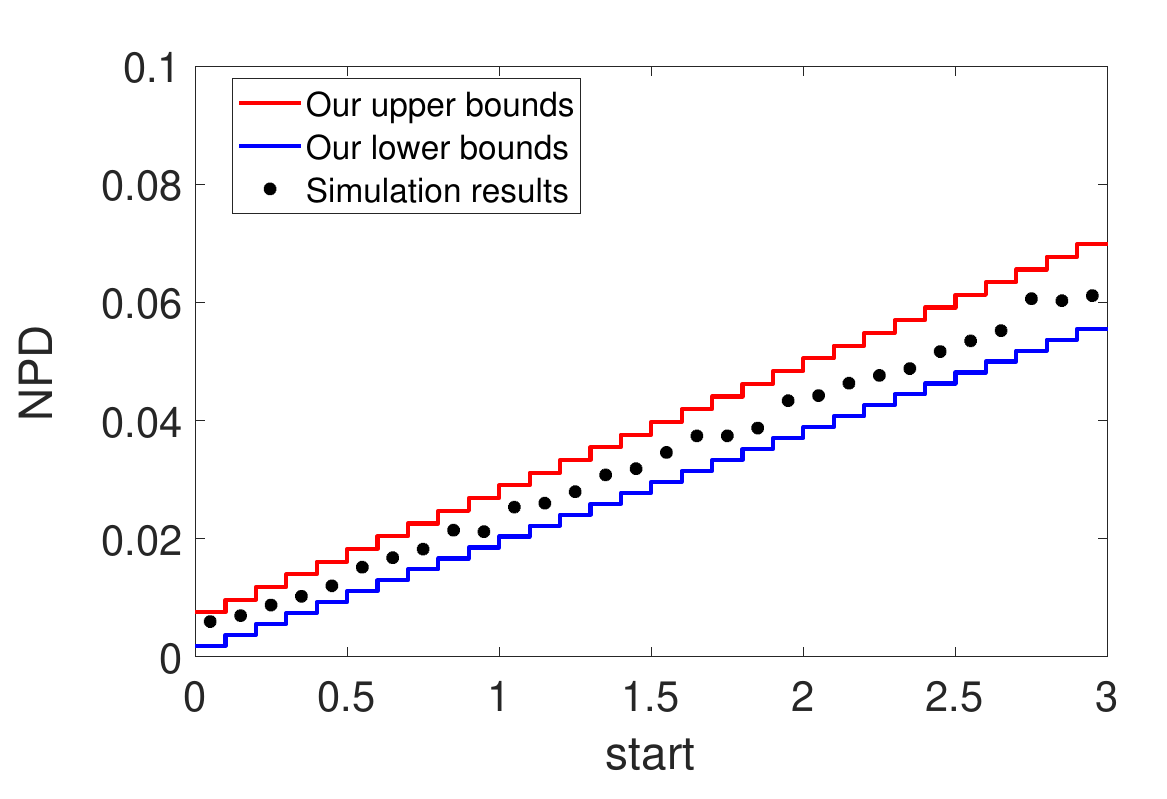}
}
}
\caption{NPD Bounds of Novel Examples}
\label{fig:results1}
\end{minipage}
\hfill
\begin{minipage}{0.48\textwidth}
	\centering
	\subfigure[\textsc{Pd}]{
		\resizebox{0.45\textwidth}{!}{
			\centering
			\includegraphics[width=2.5in,height=2in]{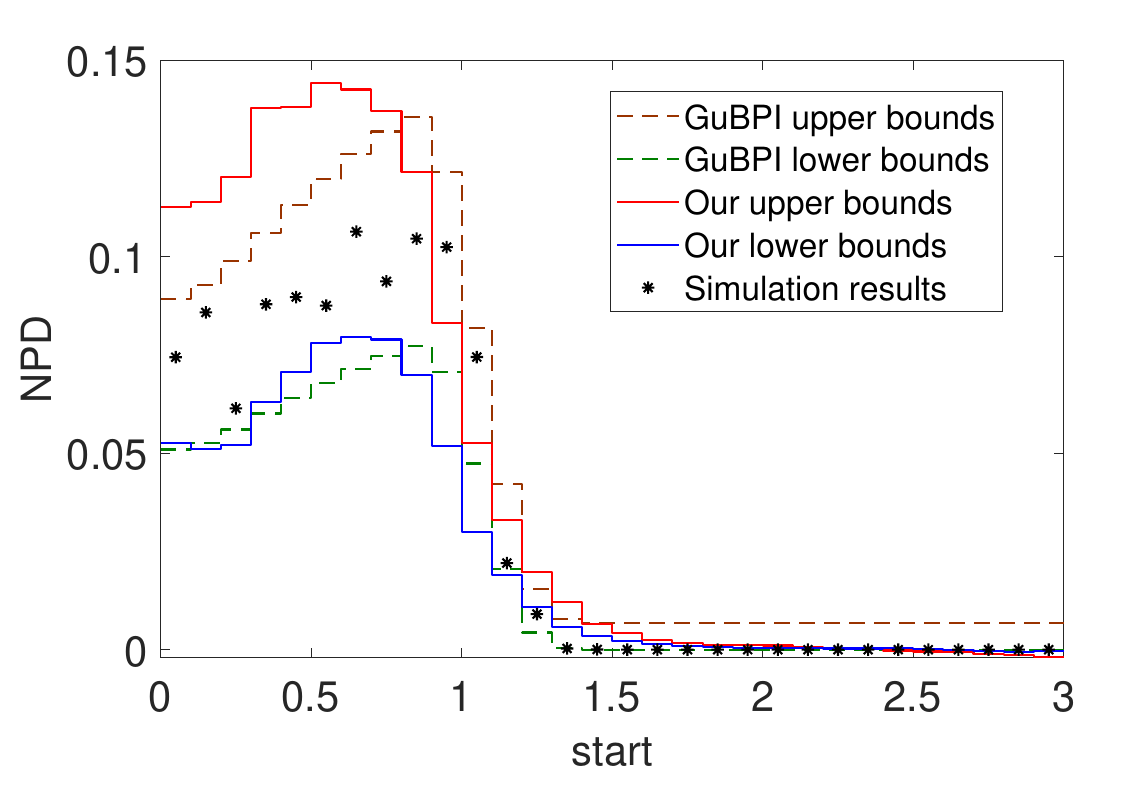}
		}
	}
	\hfill
	\subfigure[\textsc{PdLD}]{ 
		\resizebox{0.45\textwidth}{!}{
			\centering
			\includegraphics[width=2.5in,height=1.9in]{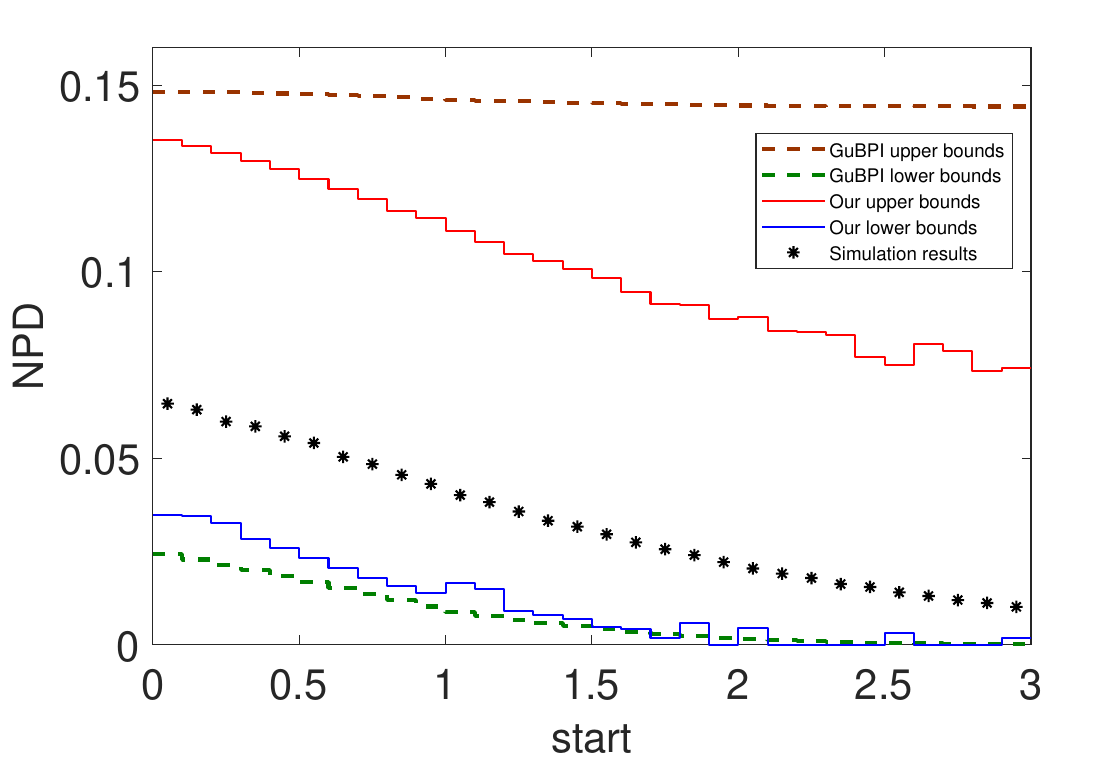}
		}
	}
    \\
	\subfigure[\textsc{PdBeta(v1)}]{
		\centering
		\resizebox{0.45\textwidth}{!}{
			\includegraphics[width=2.5in,height=2in]{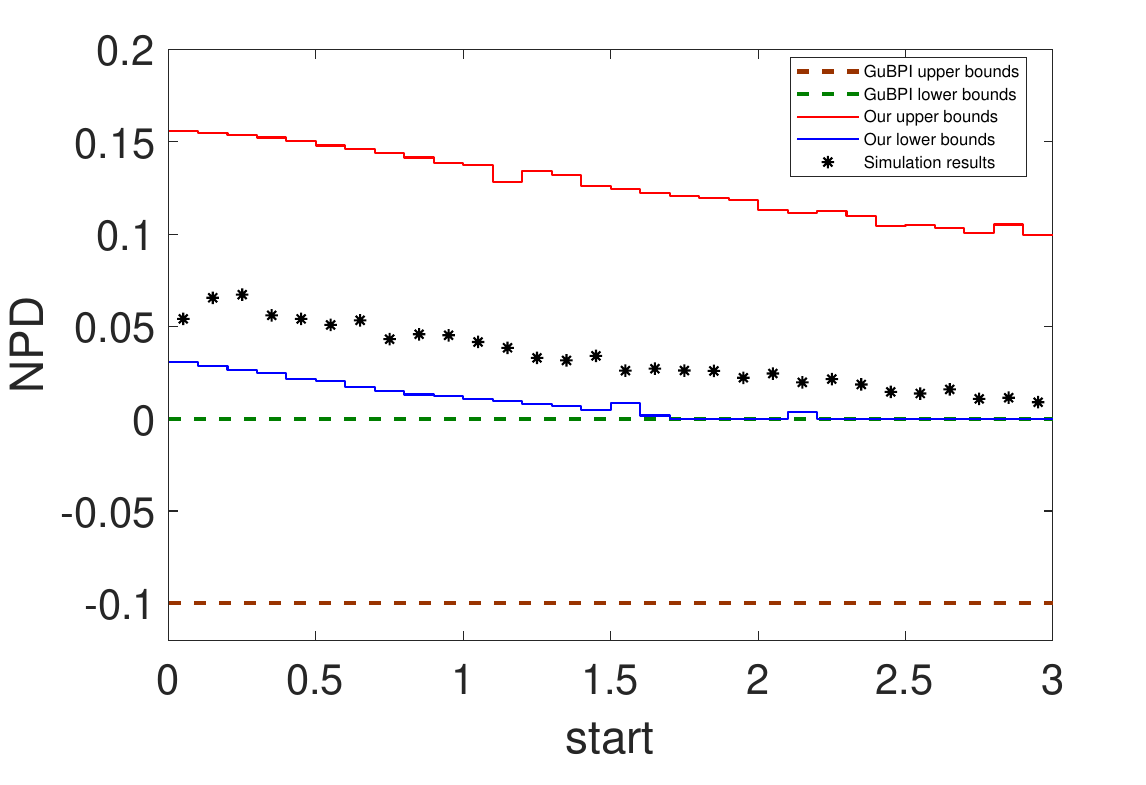}
		}
	}
	\hfill
	\subfigure[\textsc{PdBeta(v2)}]{
		\centering
		\resizebox{0.45\textwidth}{!}{
			\includegraphics[width=2.5in,height=2in]{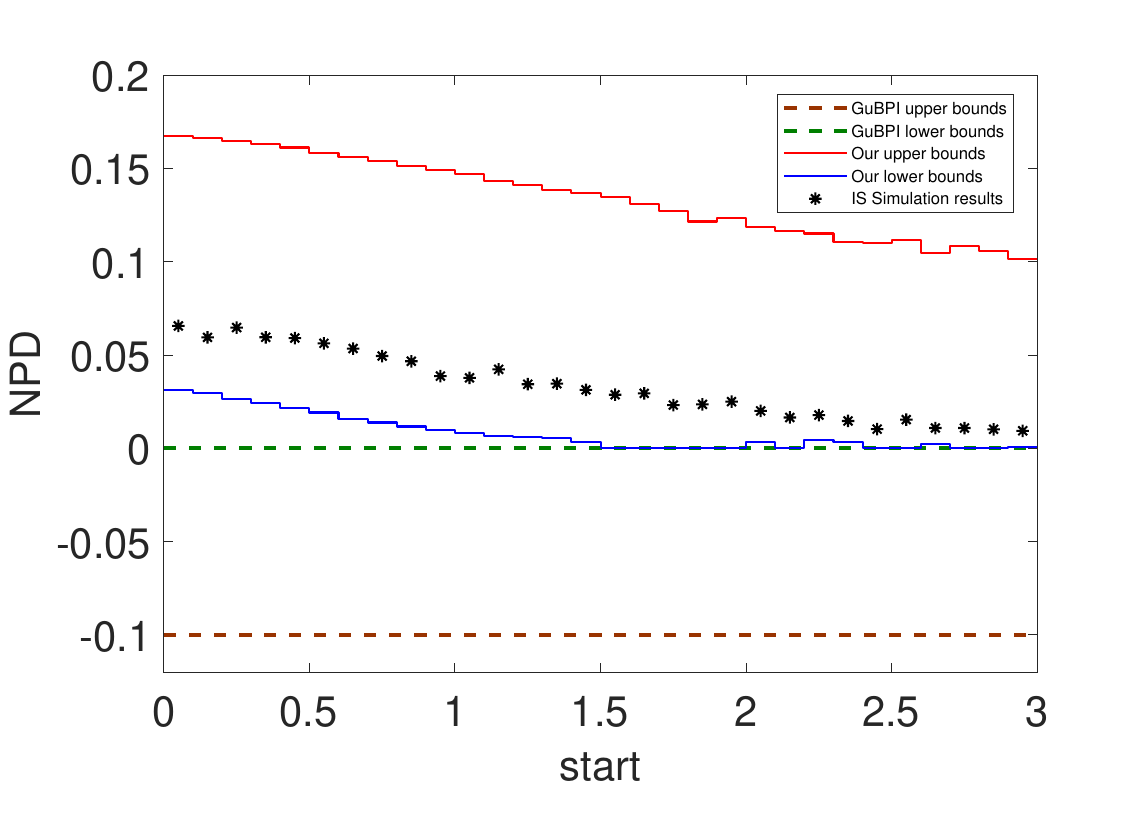}
		}
	}
	\caption{NPD Bounds of Comparison}
	\label{fig:results2}
\end{minipage}

\begin{tablenotes}
    \item The {\color{red} red} and the {\color{blue} blue} lines mark the upper and lower bounds of our results; the black bold stars mark the simulation results; the {\color{brown} brown} and {\color{teal} green} dotted lines mark the upper and lower bounds generated by GuBPI (we denote by $-0.1$ the infinity bounds).
\end{tablenotes}

\vspace{-3ex}
\end{figure}

\smallskip 
\noindent {\em NPD - Comparision with GuBPI \cite{Beutner2022b}.} Since the parameters used in GuBPI and our approach are completely different, it is infeasible to compare the two approaches directly. Instead, we choose the parameters to our algorithms that can achieve at least comparable results with GuBPI. The main parameters are shown in \cref{table:2}.
We consider the Pedestrian example ``\textsc{Pd}'' from \cite{Beutner2022b} (see also \cref{sec3:pedestrian}), and its variants. For the variants of ``\textsc{Pd}'', we enlarged the standard deviation of the observed normal distribution to be $5$ \textbf{in all $6$ variants} whose prefix name are ``\textsc{Pd}''; for the four ``\textsc{PdBeta}'' examples, we also adjust the original uniform sampling $\mathbf{uniform}(0,1)$ in the loop body by different beta distributions. The main purpose to introduce these variants is to test the robustness of our approach.
The last example is from \cite{DBLP:conf/cav/GehrMV16}. 

We report the results in \cref{table:2} whose layout is similar to \cref{table:1} except that the column ``\#'' displays whether or not the bounds are trivial, i.e., $[0,\infty]$.
We also compare our results with GuBPI's and simulation results ($10^6$ samples per case), and show part of the comparison in \cref{fig:results2}, see \cref{app:experiments} for other figures.  Our runtime is up to $15$ times faster than GuBPI while we can still obtain tighter or comparable bounds for all examples.  Specifically, for the first example ``\textsc{Pd}'', our upper bounds are a bit higher than GuBPI's when the value of $start$ falls into $[0,0.7]$ (which is not suprising as the deviation of the normal distribution in this example is quite small, i.e., $0.1$, and our approach constructs over-approximation constraints while GuPBI uses recursion unrolling to search for the feasible space exhaustively), but our lower bounds are greater than GuBPI's, and our NPD bounds are tighter in the following.\footnote{When the value of $start$ approaches $3$, our NPD bounds is close to zero, but the upper bounds may be lower than zero, which is caused by numerical issues of semi-definite programming. The problem of numerical issues is orthogonal to our work and remains to be addressed in both academic and industrial fields.} Note that the simulation results near $1$ deviates largely from our bounds, for which a possible reason is that \textsc{Pd} is a difficult example whose simulation results can have high variances.
For all $6$ variants of ``\textsc{Pd}'',
our NPD bounds are tighter than GuBPI's, in particular, our upper bounds are much lower than GuBPI's. For the four ``\textsc{PdBeta}'' examples, we found that GuBPI produced zero-valued unnormalised lower bounds, and its results w.r.t. NPD are trivial, i.e., $[0,\infty]$. However, we can produce non-trivial results and our runtime is at least 2 times faster than GuBPI. We believe that the main reason why our approach outperforms GuBPI is that GuBPI has widening that may lose precision, while our approach uses polynomial solving with truncation to achieve better precision.

\begin{table*}
	\caption{Comparison with GuBPI}
	
	\label{table:2}
	\resizebox{\textwidth}{!}{
		\begin{threeparttable}
			\begin{tabular}{|c|c|c|c|c|c|c|}
				\hline
				\multicolumn{1}{|c|}{\multirow{2}{*}{\textbf{Benchmark}}}  &
				\multicolumn{4}{c|}{\multirow{1}{*}{\textbf{Our Tool}}}      &
				\multicolumn{2}{c|}{\multirow{1}{*}{\textbf{GuBPI}}}    \\ \cline{2-7}
				\multicolumn{1}{|c|}{} &   \multicolumn{1}{c|}{\textbf{Parameters}}  & \multicolumn{1}{c|}{\textbf{ Solver}} & 
				\multicolumn{1}{c|}{\textbf{ Time (s)}} &
				\multicolumn{1}{c|}{\solvable} &
				\multicolumn{1}{c|}{\textbf{ Time (s)}}   &    
				\multicolumn{1}{c|}{\solvable}            \\ \hline \hline
                \multirow{1}{*}{\textsc{Pd}} & $d=10$, $m = 60$, $pos, dis \in [0,5]$   &  SDP & $3176.685$   & $\bullet$ & $5266.063$   &   $\bullet$    \\ 
				\hline
				\multirow{1}{*}{\textsc{PdLD}} & $d=6$, $m = 60$, $pos, dis \in [0,5]$   &  LP & $41.99$ & $\bullet$ & $648.151$   &   $\bullet$   \\ 
				\hline
				\multirow{1}{*}{\textsc{PdBeta(v1)}} & $d=6$, $m = 60$,$pos, dis \in [0,5]$   &  LP & $99.86$ & $\bullet$ & $645.055$   &   $\circ$    \\ 
				\hline
				\multirow{1}{*}{\textsc{PdBeta(v2)}} & $d=6$,$m = 60$, $pos, dis \in [0,5]$   &  LP & $228.43$ & $\bullet$ & $653.237$   &   $\circ$   \\ 
				\hline
				\multirow{1}{*}{\textsc{PdBeta(v3)}} & $d=6$,$m = 60$, $pos, dis \in [0,5]$   & LP & $101.36$ & $\bullet$ & $657.645$   &   $\circ$  \\ 
				\hline
				\multirow{1}{*}{\textsc{PdBeta(v4)}} & $d=6$,$m = 60$, $pos, dis \in [0,5]$   &  LP & $208.86$ & $\bullet$ & $686.207$   &   $\circ$   \\ 
				\hline
				\multirow{1}{*}{\textsc{PdMB(v5)}} & $d=6$, $m = 60$,$pos, dis \in [0,5]$   &  LP & $88.41$  & $\bullet$ & $391.772$    &   $\bullet$    \\ 
				\hline
				\multirow{1}{*}{\textsc{Para-recur}} & $d=8$, $m = 60$, $p\in [0,1]$   & LP & $36.61$  & $\bullet$ & $253.728$    &   $\bullet$   \\ 
				\hline
			\end{tabular}
			\begin{tablenotes}
				\footnotesize
				\item[*] $\circ$ marks the trivial bound $[0, \infty]$, while $\bullet$ marks the non-trivial ones.
			\end{tablenotes}
			
	\end{threeparttable}}
\end{table*}

\smallskip 
\noindent {\em Path Probability Estimation.} 
We consider five recursive examples in \cite{Beutner2022b,DBLP:conf/pldi/SankaranarayananCG13}, which were also cited from the PSI repository \cite{DBLP:conf/cav/GehrMV16}. Since all five examples are non-paramteric and with unbounded numbers of loop iterations, PSI cannot handle them as mentioned in \cite{Beutner2022b}.  We estimated the path probability of certain events, i.e., queries over program variables, and thus constructed a new bounded range for each query $Q$ by the conjunction of the corresponding $B$ (see \textbf{Stage 3} in~\cref{sec:6-1}) and query $Q$. The results are reported in \cref{table:3} where the second column corresponds to different queries, and the parameters in the third column are the degree $d$, the number $m$ and the bounded range, respectively.
For the first three examples, we obtained tighter lower bounds than GuBPI and same upper bounds, while our runtime is at least 2 times faster than GuBPI. Moreover, we found a potential error of GuBPI. That is, the fourth example ``\textsc{cav-ex-5}" in \cref{table:3} is an AST program with no scores, which means its normalising constant should be exactly one. However, the upper bound of the normailising constant obtained by GuBPI is smaller than $1$ (i.e., $0.6981$). A simulation using $10^6$ samples yielded the results that fall within our bounds but violate those by GuBPI. Thus, GuBPI possibly omitted some valid program runs of this example and produced wrong results.\footnote{We reported this error to the authors of GuBPI, and the bug was fixed afterward.}  
All our results match the simulation ($10^6$ samples per case).

\begin{table*}
	\caption{Results for Path Probability Estimation}
	
	\label{table:3}
	\resizebox{\textwidth}{!}{
		\begin{threeparttable}
			\begin{tabular}{|c|c|c|c|c|c|c|c|}
				\hline
				\multicolumn{1}{|c|}{\multirow{2}{*}{\textbf{Benchmark}}}  &
				\multicolumn{1}{c|}{\multirow{2}{*}{\textbf{Query}}}      &
				\multicolumn{3}{c|}{\multirow{1}{*}{\textbf{Our Tool}}}      &
				\multicolumn{2}{c|}{\multirow{1}{*}{\textbf{GuBPI}}}  &
				\multicolumn{1}{c|}{\multirow{2}{*}{\textbf{Simul}}} 
				\\ \cline{3-7}
				\multicolumn{1}{|c|}{} & \multicolumn{1}{c|}{} & 
    \multicolumn{1}{c|}{\textbf{Parameters}}  & 
				\multicolumn{1}{c|}{\textbf{ Time (s)}} &
				\multicolumn{1}{c|}{\textbf{ Bounds}} &
				\multicolumn{1}{c|}{\textbf{ Time (s)}}   &    
				\multicolumn{1}{c|}{\textbf{ Bounds}} &            \\ \hline \hline
				\multirow{2}{*}{\textsc{cav-ex-7}}&Q1 & $ 6, 1, [0,30], [0,4] $&    $15.062$  &  $[0.9698,1.0000]$ &  $38.834$   & $[0.7381,1.0000]$&   $0.9938$\\ \cline{2-8}
				&Q2 & $ 6,1, [0,40], [0,4] $&    $16.321$   & $[0.9985,1.0000]$  &  $37.651$   & $[0.7381,1.0000]$ &  $0.9993$ \\ 
				\hline
				\multirow{2}{*}{\textsc{AddUni(L)}}&Q1 & $ 6,1,[0,10], [0,1] $&     $8.85$  &  $[0.9940,1.0000]$ &  $21.064$   & $[0.9375,1.0000]$&   $0.9991$\\ \cline{2-8}
				&Q2 & $ 6,1, [0,15], [0,1] $&   $8.80$   & $[0.9995,1.0000]$  &  $14.941$   & $[0.9375,1.0000]$ &  $0.9999$ \\ 
				\hline
				\multirow{1}{*}{\textsc{RdBox}}&Q1 & $ 4,1, [-0.8,0.8], [0,10] $&     $25.87$  &  $[0.9801,1.0000]$ &  $173.535$   & $[0.9462,1.0000]$&  $0.9999$ \\
				\hline
				\multirow{2}{*}{\textsc{cav-ex-5}  \tnote{*}}&Q1 & $ 6,1, [20,\infty], [0,10] $&     $33.17$  &  $[0.8251,0.9351]$ &  $229.623$   & $[0.5768,0.6374]$&   $0.9098$ \\ \cline{2-8}
				&Q2 & $ 6,1, [20,\infty], [0,20] $&    $38.373$   &  $[0.9405,1.0000]$ &  $224.504$   & $[0.5768,0.6375]$ &  $0.9645$  \\ 
				\hline
				\multirow{2}{*}{\textsc{GWalk}  \tnote{**}}&Q1 & $ 8,1, [1,\infty], [0,0.1] $&    $7.255$  &  $[0.0023,0.0023]$ &  $33.246$   & $[0.0023,0.0024]$&  $0.0023$ \\ \cline{2-8}
				&Q2 & $ 8, 1, [1,\infty], [0,0.2] $&   $8.197$   & $[0.0025,0.0025]$  &  $31.728$   & $[0.0025,0.0025]$ &  $0.0025$  \\ 
				\hline
			\end{tabular}
			\begin{tablenotes}
				\footnotesize
				\item[*] GuBPI's result contradicts ours, and we found GuBPI produces wrong results for this example.
				\item[**] As we care about path probabilities, we compared bounds of unnormalised distributions for this example (the NPD can be derived in the same manner above).
			\end{tablenotes}
			
	\end{threeparttable}}
\vspace{-2ex}
\end{table*}

\section{Related Works}
Below we compare our results with the most related work in the literature.

\paragraph{Approximate methods.} Statistical approaches such as MCMC~\cite{rubinstein2016simulation,gamerman2006markov},  variational inference~\cite{blei2017variational} and 
Hausdorff measure~\cite{DBLP:conf/aistats/RadulA21}
cannot provide formal guarantees for the outcomes w.r.t. posterior distributions in a finite time limit. 
The work \cite{DBLP:journals/corr/abs-2101-01502} proposes a novel sampling framework by combining control-data separation and logical condition propagation, which is actually approximate methods. 
In contrast, our approach has formal guarantees on the derived NPD bounds.

\paragraph{Guaranteed NPD inference.}  Most works on guaranteed inference (such as  $(\lambda)$PSI~\cite{DBLP:conf/cav/GehrMV16,DBLP:conf/pldi/GehrSV20}, AQUA~\cite{DBLP:conf/atva/HuangDM21}, Hakaru~\cite{DBLP:conf/flops/NarayananCRSZ16}, SPPL~\cite{DBLP:conf/pldi/SaadRM21}, etc.) are restricted to specific kinds of programs, e.g., programs with closed-form solutions to NPD or without continuous distributions, and none of them can handle unbounded while-loops/recursion.
The most relevant work ~\cite{Beutner2022b} infers the NPD bounds by recursion unrolling. Our approach circumvents the path explosion problem from recursion unrolling by polynomial solving, and outperforms this approach over various benchmarks.  
Several recent works~\cite{oopsla24,DBLP:conf/nips/ZaiserMO23} consider efficient posterior inference in Bayesian Probabilistic Programming. Note that PGF-based inference methods~\cite{oopsla24,DBLP:conf/nips/ZaiserMO23} can only be applied in discrete probabilistic programs. \cite{oopsla24} presents an extended denotational semantics for discrete Bayesian probabilistic while loops and performs exact inference for loop-free programs and a syntactic class of AST programs, while~\cite{DBLP:conf/nips/ZaiserMO23}  calculates the PGFs in Bayesian probabilistic programs efficiently using approximation and automatic differentiation, but cannot handle while loop. Our methods can handle continuous Bayesian probabilistic programs and derive tight bounds for posterior distributions.

\paragraph{Static analysis of probabilistic programs.} In recent years, there have been an abundance of works on the static analysis of probabilistic programs. Most of them address fundamental aspects such as 
termination~\cite{DBLP:conf/cav/ChakarovS13,DBLP:conf/popl/ChatterjeeFNH16,DBLP:conf/vmcai/FuC19}, sensitivity~\cite{DBLP:journals/pacmpl/BartheEGHS18,DBLP:journals/pacmpl/WangFCDX20}, expectation~\cite{DBLP:conf/pldi/NgoC018,cost2019wang,DBLP:conf/tacas/BatzCJKKM23}, tail bounds~\cite{kura2019tail,DBLP:conf/pldi/Wang0R21,wang2022tail}, assertion probability~\cite{DBLP:conf/pldi/SankaranarayananCG13,DBLP:conf/pldi/WangS0CG21}, etc. Compared with these results, we have that: 
(a) Our work focuses on normalised posterior distribution in Bayesian probabilistic programming, and hence is an orthogonal objective. 
(b) Although our algorithms follow the previous works on polynomial template solving~\cite{DBLP:conf/cav/ChakarovS13,cost2019wang,DBLP:journals/toplas/ChatterjeeFNH18,ChatterjeeFG16}, we have a truncation operation to increase the accuracy which to our best knowledge is novel.
(c) Our approach extends the classical OST as the previous works~\cite{cost2019wang,DBLP:conf/pldi/Wang0R21} do, but we consider a multiplicative variant, while the work ~\cite{cost2019wang} considers only additive variants.

\begin{acks}
We thank the anonymous reviewers of PLDI 2024 for their valuable comments and helpful suggestions. This work was supported by the National Research Foundation, Singapore, under its RSS Scheme (NRF-RSS2022-009), the Engineering and Physical Sciences
Research Council (EP/T006579/1), and the National Natural Science Foundation of China (NSFC) under Grant No. 62172271. 
\end{acks}

\clearpage
\bibliographystyle{ACM-Reference-Format}
\bibliography{references}
%

\clearpage
\appendix
\section{Supplementary Material for Section~\ref{sec:prelim}}\label{app:prelim}

\subsection{Basics of Probability Theory}
A \emph{measurable space} is a pair $(U,\Sigma_U)$ where $U$ is a nonempty set and $\Sigma_U$ is a $\sigma$-algebra on $U$, i.e., a family of subsets of $U$ such that $\Sigma_U\subseteq \mathcal{P}(U)$ contains $\emptyset$ and is closed under complementation and countable union. Elements of $\Sigma_U$ are called \emph{measurable} sets. A function $f$ from a measurable space $(U_1,\Sigma_{U_1})$ to another measurable space $(U_2,\Sigma_{U_2})$ is \emph{measurable} if $f^{-1}(A)\in\Sigma_{U_1}$ for all $A\in\Sigma_{U_2}$.

A \emph{measure} $\mu$ on a measurable space $(U,\Sigma_U)$ is a mapping from $\Sigma_U$ to $[0,\infty]$ such that (i) $\mu(\emptyset)=0$ and (ii) $\mu$ is countably additive:
for every pairwise-disjoint set sequence $\{A_n\}_{n\in\Nset}$ in $\Sigma_U$, it holds that $\mu(\bigcup_{n\in\Nset}A_n)=\sum_{n\in\Nset}\mu(A_n)$. We call the triple $(U,\Sigma_U,\mu)$ a \emph{measure space}. 
If $\mu(U)=1$, we call $\mu$ a \emph{probability measure}, and $(U,\Sigma_U,\mu)$ a \emph{probability space}.
The Lebesgue measure $\lambda$ is the unique measure on $(\Rset,\Sigma_{\Rset})$ satisfying $\lambda([a,b))=b-a$ for all valid intervals $[a,b)$ in $\Sigma_{\Rset}$. For each $n\in\Nset$, we have a measurable space $(\Rset^n,\Sigma_{\Rset^n})$ 
and
a unique product measure $\lambda_n$ on $\Rset^n$ satisfying $\lambda_n(\prod_{i=1}^n A_i)=\prod_{i=1}^n \lambda(A_i)$ for all $A_i\in\Sigma_{\Rset}$.

The \emph{Lebesgue} integral operator $\int$ is a partial operator that maps a measure $\mu$ on $(U,\Sigma_U)$ and a real-valued function $f$ on the same space $(U,\Sigma_U)$ to a real number or infinity, which is denoted by $\int f \mathrm{d}\mu$ or $\int f(x)\mu(\mathrm{d}x)$. 
The detailed definition of Lebesgue integral is somewhat technical, see \cite{rankin1968real,rudin1976principles} for more details. 
Given a measurable set $A\in\Sigma_U$, the integral of $f$ over $A$ is defined by $\int_A f(x)\mu(\mathrm{d} x):=\int f(x) \cdot [x\in A] \mu(\mathrm{d}x)$ where $[-]$ is the Iverson bracket such that $[\phi]=1$ if $\phi$ is true, and $0$ otherwise. If $\mu$ is a probability measure, then we call the integral as the \emph{expectation} of $f$, denoted by $\expectdist{x\sim\mu;A}{f}$, or $\expv[f]$ when the scope is clear from the context.

For a measure $v$ on $(U,\Sigma_U)$, a measurable function $f:U\to \Rset_{\ge 0}$ is the \emph{density} of $v$ with respect to $\mu$ if $v(A)=\int f(x)\cdot [x\in A] \mu(\mathrm{d} x)$ for all measurable $A\in\Sigma_U$, and $\mu$ is called the \emph{reference measure} (most often $\mu$ is the Lebesgue measure). Common families of probability distributions on the reals, e.g., uniform, normal distributions, are measures on $(\Rset,\Sigma_{\Rset})$. Most often these are defined in terms of probability density functions with respect to the Lebesgue measure. That is, for each $\mu_D$ there is a measurable function $\text{pdf}_D:\Rset\to\Rset_{\ge 0}$ that determines it: $\mu_D(A):=\int_A \text{pdf}_D (\mathrm{d}\lambda) $. As we will see, density functions such as $\text{pdf}_D$ play an important role in Bayesian inference.

Given a probability space $\pspace$, a \emph{random variable} is an $\mathcal{F}$-measurable function $X: \Omega \rightarrow \Rset \cup \{+\infty,-\infty\}$. The expectation of a random variable $X$, denoted by $\expv(X)$, is the Lebesgue integral of $X$ w.r.t. $\probm$, i.e., $\int X\,\mathrm{d}\probm$. A \emph{filtration} of $\pspace$ is an infinite sequence $\{ \mathcal{F}_n \}_{n=0}^{\infty}$ such that for every $n\ge 0$, the triple $(\Omega, \mathcal{F}_n, \probm)$ is a probability space and $\mathcal{F}_n \subseteq \mathcal{F}_{n+1} \subseteq \mathcal{F}$. A \emph{stopping time} w.r.t. $\{ \mathcal{F}_n \}_{n=0}^{\infty}$ is a random variable $T: \Omega \rightarrow \Nset \cup \{0, \infty\}$ such that for every $n \geq 0$, the event \{$T \leq n$\} is in $\mathcal{F}_n$. 

A \emph{discrete-time stochastic process} is a sequence $\Gamma = \{X_n\}_{n=0}^\infty$ of random variables in $\pspace$. The process $\Gamma$ is \emph{adapted} to a filtration $\{ \mathcal{F}_n \}_{n=0}^{\infty}$, if for all $n \geq 0$, $X_n$ is a random variable in $(\Omega, \mathcal{F}_n, \probm)$. A discrete-time stochastic process $\Gamma=\{X_n\}_{n=0}^\infty$ adapted to a filtration $\{\mathcal{F}_n\}_{n=0}^\infty$ is a \emph{martingale} (resp. \emph{supermartingale}, \emph{submartingale})
if for all $n \geq 0$, $\expv(|X_n|)<\infty$ and it holds almost surely (i.e.,~with probability $1$) that
$\condexpv{X_{n+1}}{\mathcal{F}_n}=X_n$ (\mbox{resp. } $\condexpv{X_{n+1}}{\mathcal{F}_n}\le X_n$, $\condexpv{X_{n+1}}{\mathcal{F}_n}\ge X_n$).
See~\cite{williams1991probability} for details.
Applying martingales to qualitative and quantitative analysis of probabilistic programs is a well-studied technique~\cite{SriramCAV,ChatterjeeFG16,ChatterjeeNZ2017}.

\subsection{Details for WPTS Semantics}\label{app:wpts-semantics}
We denote by $\Lambda$ the set of all states, by $\Delta$ the set of all weighted states, and by $\Sigma_\Delta$ the product $\sigma$-algebra (on $\Delta$) among the discrete $\sigma$-algebra $(L, 2^{L})$ for locations, the $\sigma$-algebra $\Sigma_{\mathbb{R}^{|\pvars|}}$ for program valuations, and the $\sigma$-algebra $\Sigma_{\mathbb{R}}$ for the multiplicative likelihood weight. 
We define $\Sigma^n_\Delta$ (for $n\ge 1$) as the set $\{A_1\times \dots \times A_n\mid \forall 1\le i\le n. (A_i\in \Sigma_{\Delta})\}$, and $\Delta^\infty$ as the set of all infinite sequences of weighted states.

The probability space for the WPTS $\Pi$ is defined such that its sample space is the set of all program runs, its $\sigma$-algebra is generated by the countable union $\bigcup_{n\ge 1} \{B\times \Delta^\infty\mid B\in \Sigma^n_\Delta\}$), and its probability measure $\probm$ is the unique one such that (i) $\probm(A\times \Lambda^\infty)=\mu_\mathrm{init}(\{\pv\mid (\lin, \pv, 1)\in A\})$ for all $A\in\Sigma_\Delta$, and (ii) $\probm(A\times B\times \Delta^\infty)$ (for every $A\in\Sigma_\Delta, B\in \Sigma^n_\Delta$ ($n\ge 1$)) equals the probability w.r.t the sampling of $\mu_\mathrm{init}$ (for the initial program valuation) and $\rdvarjdis$ (for a sampling valuation in each step until the $(n+1)$-th step) that a program run $\{\Theta_n\}_{n\ge 0}$ is subject to $\Theta_0\in A$ and $(\Theta_1,\dots,\Theta_{n+1})\in B$. 
For each program valuation $\pv$, we denote by $\mathbb{P}_\pv$ the probability measure of $\Pi$ when the initial distribution is changed to the Dirac distribution at $\pv$.

\subsection{Sampling-based Semantics}\label{app:sampling-semantcis}
We recall one prominent semantics in the literature, i.e., the sampling-based semantics~\cite{borgstrom2016lambda,DBLP:conf/lics/StatonYWHK16}. In the next section, we will show that the transition-based semantics in our work is equivalent to the widely-used sampling-based semantics in Bayesian statistical probabilistic programming. 

The sampling-based semantics by Borgstr{\"{o}}m et al.~\cite{borgstrom2016lambda,Beutner2022b} interprets a probabilistic program as a deterministic program parameterized by a sequence of random draws sampled during the execution of the program. 
A \emph{sampling trace} is a finite sequence $\tr= \langle r_1, \dots, r_n \rangle$ of real numbers, and we define $\mathcal{T}:=\bigcup_{n\in\Nset} \Rset^n$ as the set of all sampling traces. 
Given a probabilistic program $P$, a \emph{configuration} $\sigma$ under the semantics is a tuple $\langle \pv,S,w,\tr \rangle$ where $\pv\in \val{\mathrm{p}}$ 
, $S$ is the statement to be executed, $w\in [0,\infty)$ is the multiplicative likelihood weight variable whose value expresses how well the current computation matches the observations, and $\tr$ is a sampling trace. We denote by $\Sigma$ the set of all configurations. 

The semantics operates on the configurations, where an execution of the program is initialized with $\sigma_0=\tuple{\pv_0,P, 1,\tr}$, and the termination configurations have the form of $\tuple{\_,\textbf{skip}, \_, []}$, for which $\_$ is a ``wildcard'' character that matches everything and $[]$ represents an empty set. 
\cref{fig:OperationalSemantics} shows the corresponding one-step reduction relation $\to$ (note that $\Downarrow$ is the usual big-step semantics for deterministic Boolean and arithmetic expressions, so we omit it here). We do not give the reduction rule for the probabilistic branching statement in our PPL (see~\cref{fig:syntax}) as it can be represented by a sequential composition of a boolean statement and a conditional statement.

Let $\to^*$ be the reflexive transitive closure of the one-step reduction $\to$ in \cref{fig:OperationalSemantics}. Given a probabilistic program $P$, we call a sampling trace $\tr$ \emph{terminating} if $\tuple{\pv,P,1,\tr}\to^* \tuple{\pv',\text{skip},w,[]}$ for some valuations $\pv, \pv'\in \val{\mathrm{p}}$ and weight $w\in \Rset_{\ge 0}$, i.e.,~the program $P$ terminates under the samples drawn as in $\tr$.

\begin{figure}
 	
 	\begin{minipage}{0.5\columnwidth}
 		\begin{scprooftree}{1}
 			\AxiomC{}
 			\UnaryInfC{$\tuple{\pv,\textbf{skip}, w,\tr} \to  \tuple{
 					\pv,\textbf{skip},w,\tr
 				} $}
 		\end{scprooftree}
 	\end{minipage}%
 	\begin{minipage}{0.5\columnwidth}
 		\begin{scprooftree}{1}
 			\AxiomC{$\pv\vdash E \Downarrow r$}
 			\UnaryInfC{$ \tuple{\pv,x:=E,w,\tr}  \to  \tuple{\pv[x\mapsto r],\textbf{skip},w,\tr}$}
 		\end{scprooftree}
 	\end{minipage}

 	\begin{minipage}{1\columnwidth}
 		\begin{scprooftree}{1}
 			\AxiomC{$\pv\vdash B\Downarrow b$, $b=\textbf{true}$}
 			\UnaryInfC{$\tuple{\pv, \textbf{if}\, B\, \textbf{then}\, S_1\ \textbf{else}\ S_2\ \textbf{fi},w,\tr } \to  	\tuple{
 					\pv,S_1,w,\tr
 				} $}
 		\end{scprooftree}
 	\end{minipage}%
  
 	\begin{minipage}{1\columnwidth}
 		\begin{scprooftree}{1}
 			\AxiomC{$\pv\vdash B\Downarrow b$, $b=\textbf{false}$}
 			\UnaryInfC{$\tuple{\pv, \textbf{if}\, B\, \textbf{then}\, S_1\ \textbf{else}\ S_2\ \textbf{fi},w  ,\tr} \to  	\tuple{
 					\pv,S_2,w,\tr
 				} $}
 		\end{scprooftree}
 	\end{minipage}
 	
 		\begin{prooftree}
 			\AxiomC{$w'=\text{pdf}_D(r)\ge 0$}
 			\UnaryInfC{$\tuple{\pv,x:=\textbf{sample}\ D,w,r::\tr}
 				\rightarrow
 				\tuple{\pv[x\mapsto r], \textbf{skip}, w\cdot w',\tr} $}
 		\end{prooftree}
 	
 	\begin{prooftree}
 		\AxiomC{$\pv\vdash B\Downarrow b$, $b=\textbf{true}$}
 		\UnaryInfC{$\tuple{
 				\pv,\textbf{while}\ B\ \textbf{do}\ S\ \textbf{od},w,\tr
 			}
 			\rightarrow
 			\tuple{
 				\pv,S;\textbf{while}\ B\ \textbf{do}\ S\ \textbf{od},w,\tr
 			} $}
 	\end{prooftree}

 	\begin{minipage}{1\columnwidth}
 		\begin{scprooftree}{1}
 			\AxiomC{$\pv\vdash B\Downarrow b$, $b=\textbf{false}$}
 			\UnaryInfC{$\tuple{
 					\pv,\textbf{while}\ B\ \textbf{do}\ S\ \textbf{od},w,\tr
 				}
 				\rightarrow
 				\tuple{
 					\pv,\textbf{skip},w,\tr
 				} $}
 		\end{scprooftree}
 	\end{minipage}%

   	\begin{minipage}{1\columnwidth}
  \centering
 		\begin{scprooftree}{1}
 			\AxiomC{$w' \geq 0$}
 			\UnaryInfC{$\tuple{
 					\pv,\textbf{score}(w'),w,\tr
 				}
 				\rightarrow
 				\tuple{
 					\pv,
 					\textbf{skip},w\cdot w',\tr
 				} $}
 		\end{scprooftree}
 	\end{minipage}
 	
 	\begin{minipage}{0.5\columnwidth}
 		\begin{scprooftree}{1}
 			\AxiomC{$\tuple{\pv,S_1,w,\tr}\to\tuple{\pv',S'_1,w',\tr'}$, $S'_1\neq \textbf{skip}$}
 			\UnaryInfC{$	\tuple{
 					\pv,S_1;S_2,w,\tr
 				}
 				\rightarrow
 				\tuple{
 					\pv',S'_1;S_2, w',\tr'
 				} $}
 		\end{scprooftree}
 	\end{minipage}%
 	\begin{minipage}{0.5\columnwidth}
 		\begin{scprooftree}{1}
 			\AxiomC{$\tuple{\pv,S_1,w,\tr}\to\tuple{\pv',S'_1,w',\tr'}$, $S'_1=\textbf{skip}$}
 			\UnaryInfC{$	\tuple{
 					\pv,S_1;S_2,w,\tr
 				}
 				\rightarrow
 				\tuple{
 					\pv',S_2, w',\tr'
 				} $}
 			
 		\end{scprooftree}
 	\end{minipage}

 	\begin{prooftree}
 		
 		\AxiomC{}
 		\UnaryInfC{$\tuple{
 				\pv,\textbf{return}\ x,w,\tr
 			}
 			\rightarrow
 			\tuple{
 				\pv,\textbf{skip},w,t
 			} $}
 	\end{prooftree}

 	\caption{One-step reduction for probabilistic programs.  \vspace{-0.3cm}}
 	\label{fig:OperationalSemantics}
\end{figure}

Then the notion of normalised posterior distributions by the sampling-based semantics is given as follows. 
From the one-step reduction rules (in \cref{fig:OperationalSemantics}), we can reason about the global behavior of probabilistic programs in terms of the sampling traces they produce. 
That is, given a probabilistic program $P$, and a terminating trace $\tr$ such that $\tuple{\pv,P,1,\tr}\to^* \tuple{\pv',\text{skip},w,[]}$ for valuations $\pv, \pv'\in\val{\mathrm{p}}$ and weight $w\in \Rset_{\ge 0}$, we define the \emph{value function} $\valueSem P$
and the \emph{weight function} $\weightSem P$ as follows: 
\begin{align}
 \valueSem P(\pv,\tr) &= \begin{cases}
 \pv' &\text{if }\tuple{\pv,P,1,\tr}\to^* \tuple{\pv',\text{skip},w,[]}\\
 \text{unspecified} &\text{otherwise},
 \end{cases}\\
 \weightSem P(\pv,\tr) &= \begin{cases}
 w &\text{if }\tuple{\pv,P,1,\tr}\to^* \tuple{\pv',\text{skip},w,[]}\\
 0 &\text{otherwise}.
 \end{cases}
 \end{align}
 Moreover, we denote the return variable by $\valueSem{P,\mathit{ret}}(\pv,\tr)$, i.e.,~ $\valueSem{P,\mathit{ret}}(\pv,\tr):=\pv'[\mathit{ret}]$.
We also consider the measure space $(\mathcal{T},\Sigma_{\mathcal{T}},\mu_{\mathcal{T}})$ where $\mathcal{T}=\bigcup_{n\in\Nset} \Rset^n$ (as mentioned previously), $\Sigma_{\trans}:=\{\bigcup_{n\in\Nset} U_n\mid U_n\in \Sigma_{\Rset^n}\}$  and $\mu_{\trans}(U):=\sum_{n\in\Nset}\lambda_n(U\cap \Rset^n)$. 
By definition, the measure space $(\mathcal{T},\Sigma_{\mathcal{T}},\mu_{\mathcal{T}})$ specifies the probability values for sets of sampling traces.
 
\paragraph{Posterior Distributions.} Given a probabilistic program $P$, an initial program valuation $\pv\in \val{\mathrm{p}}$ and a measurable set $U\in\Sigma_{\Rset^{|\pvars|}}$, we define the set of terminating traces where the value of the return variable falls into $U$ as
 
 \[
 \mathcal{T}_{P, \pv, U} := \{\tr\in\mathcal{T} \mid \tuple{\pv,P,1,\tr}\to^*\tuple{\pv',\text{skip},w,[]}, \pv'\in U\}
 \]
 and the set of all terminating traces as
 \[
 \mathcal{T}_{P, \pv} := \{\tr\in\mathcal{T} \mid \tuple{\pv,P,1,\tr}\to^*\tuple{\pv',\text{skip},w,[]}\}.
 \]
Note that $\mathcal{T}_{P, \pv} = \mathcal{T}_{P, \pv, \Rset^{|\pvars|}}$. Therefore, we can define the \emph{unnormalised density} w.r.t $P,\pv,U$ as

\begin{align}\label{eq:unnormalised-posterior}
\measureSem{P}_{\pv}(U) := \int_{\mathcal{T}_{P, \pv, U}} \weightSem P(\pv,\tr)  \,\mu_\trans(\mathrm{d} \tr).
\end{align}
That is, the integral takes all traces $\tr$ on which $P$ starts from $\pv$ and evaluates to a valuation in $U$, weighting each $\tr$ with the weight $\weightSem P(\pv,\tr)$ of the corresponding execution. The \emph{normalising constant} is thus defined by 
\begin{align}\label{eq:normalising-constant}
 Z_{P,\pv} := \int_{\mathcal{T}_{P, \pv}} \weightSem P(\pv,\tr)  \,\mu_\trans(\mathrm{d} \tr).
\end{align}
 Therefore, the  \emph{normalised posterior distribution} is defined as $\posterior_P (\pv,U):=\frac{\measureSem{P}_\pv(U)}{Z_{P,\pv}}$.

We call a program $P$ \emph{integrable} if its normalised constant is finite, i.e,~$0<Z_{P,\pv}<\infty$ for any $\pv\in\val{\mathrm{p}}$. 
Given an integrable program, we are interested in deriving lower and upper bounds on the posterior distribution.

\subsection{Equivalence between The Two Semantics}
In this subsection, we prove the equivalence of the sample-based semantic and our WPTS transition-based semantics over posterior distributions. Fix a probabilistic program $P$ and its WPTS $\Pi$.

\begin{lemma}\label{lem:density-composition}
	For all non-negative bounded measurable function $g:\Rset^{|\pvars|}\to \Rset$, and a program state $\Xi=(\loc,\pv)$, and a statement $P:=S_1;S_2$, we have that	
	\begin{align*}
	&\int_{\mathcal{T}_{P, \pv}} \weightSem P(\pv,\tr) \cdot g(\valueSem P(\pv,\tr)) \,\mu_\trans(\mathrm{d} \tr) \\
	& {}=\int_{\mathcal{T}_{S_1, \pv}} \weightSem{S_1}(\pv,\tr)  \,\mu_\trans(\mathrm{d} \tr)   \Big([\valueSem{S_1}(\pv,\tr)=\pv']\\ 
	& \quad {}\cdot \int_{\mathcal{T}_{S_2, \pv'}} \weightSem{S_2}(\pv',\tr') \cdot g(\valueSem{S_2}(\pv',\tr')) \,\mu_\trans(\mathrm{d} \tr') \Big)\\
	\end{align*}
\end{lemma}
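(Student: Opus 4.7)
The plan is to reduce the lemma to a structural decomposition of the sequential execution together with a Fubini/Tonelli argument on the trace measure $\mu_\trans$. Fix $\pv$ and $g$. The idea is that every terminating trace of $S_1;S_2$ uniquely splits into a prefix consumed while reducing $S_1$ and a suffix consumed while reducing $S_2$, and that weights and values decompose multiplicatively/compositionally under this split.

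First I would establish a trace-splitting lemma from the one-step reduction rules in Figure~\ref{fig:OperationalSemantics}: by induction on the length of the derivation of $\langle \pv, S_1;S_2, 1, \tr_1\cdot\tr_2\rangle \to^* \langle \pv'', \textbf{skip}, w, []\rangle$, using the two sequential-composition rules (which reduce the left statement until it becomes $\textbf{skip}$, then pass control to the right), this big-step reduction holds iff there exist unique $\pv'$, $w_1$, $w_2$ with $\langle \pv, S_1, 1, \tr_1\rangle \to^* \langle \pv', \textbf{skip}, w_1, []\rangle$, $\langle \pv', S_2, 1, \tr_2\rangle \to^* \langle \pv'', \textbf{skip}, w_2, []\rangle$, and $w = w_1 \cdot w_2$. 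Immediately from this and the definitions of $\weightSem{\cdot}$, $\valueSem{\cdot}$ I get, for all $\tr_1 \in \mathcal{T}_{S_1,\pv}$ and $\tr_2 \in \mathcal{T}_{S_2,\,\valueSem{S_1}(\pv,\tr_1)}$, the identities $\weightSem{P}(\pv,\tr_1\cdot\tr_2) = \weightSem{S_1}(\pv,\tr_1)\cdot \weightSem{S_2}(\valueSem{S_1}(\pv,\tr_1),\tr_2)$ and $\valueSem{P}(\pv,\tr_1\cdot\tr_2) = \valueSem{S_2}(\valueSem{S_1}(\pv,\tr_1),\tr_2)$, and conversely every $\tr \in \mathcal{T}_{P,\pv}$ arises in exactly one such way.

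Second I would exploit the explicit form of $\mu_\trans$ as $\sum_n \lambda_n$ on the disjoint union $\bigsqcup_n \Rset^n$. For each pair $(n,m)$, concatenation is a measurable bijection $\Rset^n \times \Rset^m \to \Rset^{n+m}$ under which the product Lebesgue measure $\lambda_n \times \lambda_m$ pushes forward to $\lambda_{n+m}$. Hence the left-hand side becomes $\sum_n\sum_m \int_{\Rset^n\times\Rset^m} \mathbf{1}_{\tr_1\in\mathcal{T}_{S_1,\pv}} \cdot \mathbf{1}_{\tr_2\in\mathcal{T}_{S_2,\,\valueSem{S_1}(\pv,\tr_1)}}\cdot \weightSem{S_1}(\pv,\tr_1)\cdot \weightSem{S_2}(\valueSem{S_1}(\pv,\tr_1),\tr_2)\cdot g(\valueSem{S_2}(\valueSem{S_1}(\pv,\tr_1),\tr_2))\,(\lambda_n\times\lambda_m)(\mathrm{d}\tr_1,\mathrm{d}\tr_2)$. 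Since all integrands are non-negative (weights are non-negative, $g$ is non-negative), Tonelli's theorem justifies performing the inner integral in $\tr_2$ first and then summing back over $m$ to obtain $\int_{\mathcal{T}_{S_2,\pv'}} \weightSem{S_2}(\pv',\tr')\cdot g(\valueSem{S_2}(\pv',\tr'))\,\mu_\trans(\mathrm{d}\tr')$ with $\pv' = \valueSem{S_1}(\pv,\tr_1)$, and finally summing over $n$ gives the outer integral in $\tr_1$ against $\mu_\trans$, which is exactly the right-hand side.

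The main obstacle will be the measurability bookkeeping and the uniqueness part of the trace splitting. Measurability of $\tr \mapsto \weightSem{S}(\pv,\tr)$, $\tr \mapsto \valueSem{S}(\pv,\tr)$, and $(\pv,\tr) \mapsto \weightSem{S}(\pv,\tr)$ follows by a routine structural induction on $S$ over the reduction rules, and disjointness of the decomposition follows from the determinism of $\to$ (so a terminating trace of $S_1$ is never a proper prefix of another terminating trace of $S_1$ from the same $\pv$). Boundedness of $g$ together with non-negativity of $\weightSem{\cdot}$ ensures that the Tonelli exchange is unconditionally valid, so no integrability side conditions need to be tracked.
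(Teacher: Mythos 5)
The paper does not actually prove this lemma: its ``proof'' consists of the single sentence that the argument is straightforward and resembles Lemma 4.8 of the cited work of Lee et al. Your proposal supplies precisely the argument such a proof has to contain, and it is correct. The three ingredients are all the right ones: the unique prefix/suffix splitting of a terminating trace of $S_1;S_2$ (justified by the two sequencing rules, the left-to-right consumption of the trace by \textbf{sample}, and determinism of $\to$, which gives the prefix-freeness you invoke for disjointness of the decomposition); the identification of $\mu_\trans$ on the image of concatenation with the pushforward of $\lambda_n\times\lambda_m$; and a Tonelli exchange for the non-negative measurable integrand. Two small points are worth making explicit when you write this out. First, the weight decomposition $w=w_1\cdot w_2$ silently uses the auxiliary fact that the final weight of a run is linear in its initial weight (every rule either preserves the weight or multiplies it by a factor independent of its current value), because in the run of $S_1;S_2$ the statement $S_2$ is launched with weight $w_1$, not with weight $1$ as in the definition of $\weightSem{S_2}$; this is an easy induction but it is a separate step. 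Second, only non-negativity of the integrand is needed for Tonelli; boundedness of $g$ plays no role there (it matters only if one wants finiteness of the resulting integrals). Neither point affects correctness, and your route is the one the omitted proof would have to take.
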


The proof is straightforward and resembles that of \cite{LeeYRY20}[Lemma 4.8].

\begin{proposition}\label{prop:posterior-weight}
	For all non-negative bounded measurable function $g:\Rset^{|\pvars|}\to \Rset$, a probabilistic program $P$ and an initial program state $\Xi = (\loc,\pv)\in\Lambda$, we have that
	\begin{align*}
		\expectdist{\pv}{\widehat{w}_T\cdot g(\widehat{\pv}_T)}= \int_{\mathcal{T}_{P, \pv}} \weightSem P(\pv,\tr) \cdot g(\valueSem P(\pv,\tr)) \,\mu_\trans(\mathrm{d} \tr).
	\end{align*}
\end{proposition}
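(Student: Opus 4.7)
The plan is to prove Proposition~\ref{prop:posterior-weight} by structural induction on the program $P$, matching the clauses of the sampling-based big-step semantics in Figure~\ref{fig:OperationalSemantics} with the step-by-step evolution of the WPTS $\Pi$ associated to $P$. Since the right-hand side is an integral over all terminating sampling traces weighted by $\weightSem{P}$, and the left-hand side is an expectation over program runs of $\Pi$ stopped at the termination time $T$, the essence of the argument is that the two semantics assign the same mass to each corresponding ``family'' of program behaviors. I would first set up the induction by noting that both sides are linear in $g$ and, by standard approximation, it suffices to verify the identity for indicator functions $g = \mathbf{1}_U$ with $U \in \Sigma_{\Rset^{|\pvars|}}$ (or, equivalently, keep the general $g$ throughout).

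For the base cases, I would unwind the definitions directly. For \textbf{skip} and \textbf{return}~$x$, $T = 0$, $\widehat{w}_T = 1$, $\widehat{\pv}_T = \pv$, the unique terminating trace is the empty one, and both sides reduce to $g(\pv)$ (or $g(\pv[\mathit{ret} \mapsto \pv(x)])$, tracked by bookkeeping). For $x := E$, the single transition of the WPTS deterministically updates $\pv$ to $\pv[x \mapsto \pv(E)]$, and the sampling-trace side does the same with no density factor. For \textbf{score}$(w')$, both sides produce the same multiplicative factor $w'$ on $g(\pv)$. The key base case is $x := \textbf{sample}\, D$: on the sampling-trace side one obtains $\int \text{pdf}_D(r) \cdot g(\pv[x \mapsto r]) \, \lambda(\mathrm{d}r)$ via the measure $\mu_{\trans}$ restricted to length-one traces, and on the WPTS side the expectation $\expectdist{r \sim \rdvarjdis(r)}{g(\pv[x \mapsto r])}$ agrees with this by the definition of the reference density of $D$.

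For the inductive step on sequential composition $P = S_1; S_2$, I would apply Lemma~\ref{lem:density-composition} to split the trace integral into an outer integral over $S_1$-traces and an inner integral over $S_2$-traces starting from the intermediate valuation. On the WPTS side, the corresponding decomposition is given by the strong Markov property at the first time $\widehat{\loc}_n$ hits the entry location of $S_2$; applying the induction hypothesis to $S_1$ with the function $\pv' \mapsto \expectdist{\pv'}{\widehat{w}^{S_2}_{T_{S_2}} \cdot g(\widehat{\pv}^{S_2}_{T_{S_2}})}$ (itself equal to the inner trace integral by the induction hypothesis for $S_2$) yields the desired equality. The conditional \textbf{if}-\textbf{then}-\textbf{else} case is immediate by splitting on $\pv \models B$.

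The main obstacle is the \textbf{while}-loop case $P = \mbox{\textbf{while}}\, B\, \textbf{do}\, S\, \textbf{od}$, because termination may require an arbitrary number of iterations and the set of terminating traces is a countable union over iteration counts. My plan is to write $P$ semantically as the supremum of its bounded unrollings $P_k$ (defined by replacing the loop body with a nested $k$-fold unfolding that aborts after $k$ iterations). For each $P_k$, the previous cases (sequential composition, conditional, and the base cases) give the identity by a finite induction on $k$. On the sampling side, $\mathcal{T}_{P,\pv}$ is the increasing union $\bigcup_k \mathcal{T}_{P_k,\pv}$ of traces that terminate within $k$ iterations, and $\weightSem{P_k} \nearrow \weightSem{P}$ pointwise on this union (being zero off $\mathcal{T}_{P_k,\pv}$ otherwise). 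On the WPTS side, letting $T_k := \min(T,k)$ and using the fact that $\widehat{w}_{T_k} \cdot g(\widehat{\pv}_{T_k}) \cdot \mathbf{1}_{T \le k}$ increases pointwise to $\widehat{w}_T \cdot g(\widehat{\pv}_T) \cdot \mathbf{1}_{T < \infty}$, the monotone convergence theorem on both sides (which applies since $g \ge 0$) lifts the finite-unrolling identity to $P$. Boundedness of $g$ and the integrability used to justify passing to the limit are the delicate points here, but monotone convergence avoids needing any a priori finiteness beyond nonnegativity.
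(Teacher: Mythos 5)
Your proposal is correct and follows the paper's overall strategy: a structural induction over program syntax, with the base cases (\textbf{skip}, assignment, \textbf{sample}, \textbf{score}, \textbf{return}) unwound directly, the conditional handled by case-splitting on the guard, and sequential composition handled by Lemma~\ref{lem:density-composition} together with the Markov decomposition of the run at the intermediate valuation --- all exactly as in the paper. Where you genuinely diverge is the \textbf{while}-loop case. The paper unfolds the loop one step, rewrites the expectation as a guard-indexed combination of the expectations for $S;P$ and \textbf{skip}, and then invokes the claimed identity for $S;P$ --- but $S;P$ contains the loop $P$ itself, so that step is not licensed by a well-founded structural induction; it is really an implicit fixed-point/unfolding argument. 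Your treatment via bounded unrollings $P_k$, the increasing union $\mathcal{T}_{P,\pv}=\bigcup_k \mathcal{T}_{P_k,\pv}$, and monotone convergence applied to both sides (valid since $g\ge 0$ and all weights are non-negative, so $\weightSem{P_k}\nearrow\weightSem{P}$ and $\widehat{w}_{T}\cdot g(\widehat{\pv}_{T})\cdot\mathbf{1}_{T\le k}$ increases to the terminating-run integrand) supplies exactly the limiting argument the paper elides, at the modest cost of setting up the unrollings and checking these monotonicities. In short: same skeleton and same key lemma for sequencing, but your loop case is the more careful of the two and closes a gap in the paper's own induction.
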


\begin{proof}
	We prove by induction on the structure of statements.
	
	\begin{itemize}
		\item Case $P\equiv ``\textbf{skip}"$.
		\begin{eqnarray*}
			\expectdist{\pv}{\widehat{w}_T\cdot g(\widehat{\pv}_T)} &=& \int \widehat{w}_T(\omega)\cdot g(\widehat{\pv}_T(\omega)) \probm_{\Xi}(\mathrm{d}\omega)  \\
			&=&   g(\pv)  \\
			&=&  \int_{\mathcal{T}_{P, \pv}} [\tr=[]] \cdot g(\pv) \,\mu_\trans(\mathrm{d} \tr) \\
			&=&   \int_{\mathcal{T}_{P, \pv}} \weightSem P(\pv,\tr) \cdot g(\valueSem P(\pv,\tr)) \,\mu_\trans(\mathrm{d} \tr) \\
		\end{eqnarray*}
		\item Case $P\equiv``x:=E"$.
		\begin{eqnarray*}
			\expectdist{\pv}{\widehat{w}_T\cdot g(\widehat{\pv}_T)} &=& \int \widehat{w}_T(\omega)\cdot g(\widehat{\pv}_T(\omega)) \probm_{\Xi}(\mathrm{d}\omega)  \\
			&=&   g(\pv[x\mapsto \llbracket E\rrbracket(\pv)])  \\
			&=&  \int_{\mathcal{T}_{P, \pv}} [\tr=[]] \cdot g(\pv[x\mapsto \llbracket E\rrbracket(\pv)]) \,\mu_\trans(\mathrm{d} \tr)  \\
			&=&   \int_{\mathcal{T}_{P, \pv}} \weightSem P(\pv,\tr) \cdot g(\valueSem P(\pv,\tr)) \,\mu_\trans(\mathrm{d} \tr) \\
		\end{eqnarray*}
		\item Case $P\equiv``x:=\textbf{sample}\ D"$.
		\begin{eqnarray*}
			\expectdist{\pv}{\widehat{w}_T\cdot g(\widehat{\pv}_T)} &=& \int \widehat{w}_T(\omega)\cdot g(\widehat{\pv}_T(\omega)) \probm_{\Xi}(\mathrm{d}\omega)  \\
			&=&  \int  g(\pv[x\mapsto r]) \mu_D (\mathrm{d}r) \\
			&=&   \int_{\mathcal{T}_{P, \pv}} \weightSem P(\pv,\tr) \cdot g(\valueSem P(\pv,\tr)) \,\mu_\trans(\mathrm{d} \tr) \\
		\end{eqnarray*}
		
		\item Case $P\equiv``\textbf{score}(pdf(D,x))"$.
		\begin{eqnarray*}
			\expectdist{\pv}{\widehat{w}_T\cdot g(\widehat{\pv}_T)} &=& \int \widehat{w}_T(\omega)\cdot g(\widehat{\pv}_T(\omega)) \probm_{\Xi}(\mathrm{d}\omega)  \\
			&=& W_\loc(\pv) \cdot g(\pv) \\
			&=&   \int_{\mathcal{T}_{P, \pv}} \weightSem P(\pv,\tr) \cdot g(\valueSem P(\pv,\tr)) \,\mu_\trans(\mathrm{d} \tr) \\
		\end{eqnarray*}
		\item Case $P\equiv``\textbf{return}\ x"$.
		\begin{eqnarray*}
			\expectdist{\pv}{\widehat{w}_T\cdot g(\widehat{\pv}_T)} &=& \int \widehat{w}_T(\omega)\cdot g(\widehat{\pv}_T(\omega)) \probm_{\Xi}(\mathrm{d}\omega)  \\
			&=&    g(\pv)  \\
			&=&  \int_{\mathcal{T}_{P, \pv}} [\tr=[]] \cdot g(\pv) \,\mu_\trans(\mathrm{d} \tr)  \\
			&=&   \int_{\mathcal{T}_{P, \pv}} \weightSem P(\pv,\tr) \cdot g(\valueSem P(\pv,\tr)) \,\mu_\trans(\mathrm{d} \tr) \\
		\end{eqnarray*}
		
		\item Case $P\equiv``\textbf{if}\ B\ \textbf{then}\ S_1\ \textbf{else}\ S_2\ \textbf{fi}"$. Assume the next state corresponding to the \textbf{then}-branch (resp. \textbf{else}-branch) is $\Xi_1=(\loc_1,\pv)$ (resp. $\Xi_2=(\loc_2,\pv)$). Then we obtain that
		\begin{eqnarray*}
			\expectdist{\pv}{\widehat{w}_T\cdot g(\widehat{\pv}_T)} 
			&=&  [\llbracket B \rrbracket(\pv)=true] \cdot \expectdist{\pv}{\widehat{w}_T\cdot g(\widehat{\pv}_T)} +  [\llbracket B \rrbracket(\pv)=false]\cdot \expectdist{\pv}{\widehat{w}_T\cdot g(\widehat{\pv}_T)} \\
			&=&  [\llbracket B \rrbracket(\pv)=true] \cdot \int_{\mathcal{T}_{S_1, \pv}} \weightSem{S_1}(\pv,\tr) \cdot g(\valueSem{S_1}(\pv,\tr)) \,\mu_\trans(\mathrm{d} \tr) \\
			& &	+[\llbracket B \rrbracket(\pv)=false] \cdot \int_{\mathcal{T}_{S_2, \pv}} \weightSem{S_2}(\pv,\tr) \cdot g(\valueSem{S_2}(\pv,\tr)) \,\mu_\trans(\mathrm{d} \tr)\\
			&=&   \int_{\mathcal{T}_{P, \pv}} \weightSem P(\pv,\tr) \cdot g(\valueSem P(\pv,\tr)) \,\mu_\trans(\mathrm{d} \tr) \\
		\end{eqnarray*}

		\item Case $P\equiv``S_1;S_2"$.
		\begin{eqnarray*}
			\expectdist{\pv}{\widehat{w}_T\cdot g(\widehat{\pv}_T)}
			&=& \int \widehat{w}_T(\omega) \probm_{\Xi}(\mathrm{d}\omega) \int [\omega_T=\Xi'] \cdot \widehat{w}_T(\omega')\cdot g(\widehat{\pv}_T(\omega')) \probm_{\Xi'}(\mathrm{d}\omega') \\
			&=& \int \widehat{w}_T(\omega) \probm_{\Xi}(\mathrm{d}\omega) \left([\omega_T=\Xi']\cdot \int  \cdot \widehat{w}_T(\omega')\cdot g(\widehat{\pv}_T(\omega')) \probm_{\Xi'}(\mathrm{d}\omega')\right) \\
			&=&  \int \widehat{w}_T(\omega) \probm_{\Xi}(\mathrm{d}\omega) \left([\omega_T=\Xi']\cdot \int_{\mathcal{T}_{S_2, \pv'}} \weightSem{S_2}(\pv',\tr') \cdot g(\valueSem{S_2}(\pv',\tr')) \,\mu_\trans(\mathrm{d} \tr') \right)\\
			&=&  \int_{\mathcal{T}_{S_1, \pv}} \weightSem{S_1}(\pv,\tr)  \,\mu_\trans(\mathrm{d} \tr)     \Big([\valueSem{S_1}(\pv,\tr)=\pv']\\ 
			&& \cdot \int_{\mathcal{T}_{S_2, \pv'}} \weightSem{S_2}(\pv',\tr') \cdot g(\valueSem{S_2}(\pv',\tr')) \,\mu_\trans(\mathrm{d} \tr') \Big)\\
			&=&   \int_{\mathcal{T}_{P, \pv}} \weightSem P(\pv,\tr) \cdot g(\valueSem P(\pv,\tr)) \,\mu_\trans(\mathrm{d} \tr) \\
		\end{eqnarray*}
		Here $\omega_i$ is the $i$-th element of the sequence $\omega=\{\Xi_n\}_{n\in\Nset}$, i.e., $\omega_i:=\Xi_i$. $\omega_T$ is the last element of $\omega$, and $\Xi'=(\loc',\pv')$. The third and fourth equalities follow from the induction hypothesis, and the last equality from \cref{lem:density-composition}.
		\item Case $P\equiv``\textbf{while}\ B\ \textbf{do}\ S\ \textbf{od}"$. Assume the next state corresponding to the entry of the loop (resp. the exit of the loop) is $\Xi_1=(\loc_1,\pv)$ (resp. $\Xi_2=(\loc_2,\pv)$). Then we obtain that
		\begin{eqnarray*}
			\expectdist{\pv}{\widehat{w}_T\cdot g(\widehat{\pv}_T)} 
			&=&  [\llbracket B \rrbracket(\pv)=true] \cdot \expectdist{\Xi_1}{\widehat{w}_T\cdot g(\widehat{\pv}_T)} +  [\llbracket B \rrbracket(\pv)=false]\cdot \expectdist{\Xi_2}{\widehat{w}_T\cdot g(\widehat{\pv}_T)} \\
			&=&  [\llbracket B \rrbracket(\pv)=true] \cdot \int \widehat{w}_T(\omega)\cdot g(\widehat{\pv}_T)\probm_{\Xi_1}(\mathrm{d}\omega) +[\llbracket B \rrbracket(\pv)=false] \cdot g(\pv)  \\
			&=&  [\llbracket B \rrbracket(\pv)=true] \cdot \int_{\mathcal{T}_{S;P, \pv}} \weightSem{S;P}(\pv,\tr) \cdot g(\valueSem{S;P}(\pv,\tr)) \,\mu_\trans(\mathrm{d} \tr) \\
			& &	+[\llbracket B \rrbracket(\pv)=false] \cdot  \int_{\mathcal{T}_{\textbf{skip}, \pv}} [\tr=[]] \cdot g(\valueSem{\textbf{skip}}(\pv,\tr)) \,\mu_\trans(\mathrm{d} \tr) \\
			&=&   \int_{\mathcal{T}_{P, \pv}} \weightSem P(\pv,\tr) \cdot g(\valueSem P(\pv,\tr)) \,\mu_\trans(\mathrm{d} \tr) \\
		\end{eqnarray*}
	\end{itemize}
\end{proof}

\begin{theorem}\label{thm:posterior-weight}
	Given a probabilistic program $P$ and its WPTS $\Pi$, an initial program state $\Xi=(\loc,\pv)\in\Lambda$ and a measurable set $U\in\Sigma_{\Rset}$, it holds that $\expectdist{\pv}{\widehat{w}_T\cdot [\widehat{\pv}_T\in \calU]}=\llbracket \Pi \rrbracket_{\pv}(\calU) = \measureSem{P}_{\pv}(\calU)$. Moreover, the expected weight $\expectdist{\pv}{\widehat{w}_T}$ is equivalent to the normalising constant $Z_{P,\pv}$.
\end{theorem}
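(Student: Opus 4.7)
The plan is to derive Theorem~\ref{thm:posterior-weight} as an immediate corollary of the already-established Proposition~\ref{prop:posterior-weight}, which states that $\expectdist{\pv}{\widehat{w}_T\cdot g(\widehat{\pv}_T)} = \int_{\mathcal{T}_{P,\pv}} \weightSem{P}(\pv,\tr)\cdot g(\valueSem{P}(\pv,\tr))\,\mu_\trans(\mathrm{d}\tr)$ for every non-negative bounded measurable $g : \Rset^{|\pvars|} \to \Rset$. The strategy is simply to pick two specific test functions $g$ that specialize this identity into exactly the two equalities asserted in the theorem.

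First I would instantiate $g(\pv') := [\pv' \in \calU]$, the indicator of the measurable set $\calU$. This $g$ is bounded by $1$ and is $\Sigma_{\Rset^{|\pvars|}}$-measurable since $\calU \in \Sigma_{\Rset^{|\pvars|}}$, so Proposition~\ref{prop:posterior-weight} applies. On the left we obtain $\expectdist{\pv}{\widehat{w}_T \cdot [\widehat{\pv}_T \in \calU]}$, which by the definition of $\llbracket \Pi \rrbracket_{\pv}(\calU)$ given in Section~\ref{sec2:NPD} equals $\llbracket \Pi \rrbracket_{\pv}(\calU)$. On the right the integrand is supported exactly on the set $\mathcal{T}_{P,\pv,\calU} = \{\tr \in \mathcal{T}_{P,\pv} \mid \valueSem{P}(\pv,\tr) \in \calU\}$, so the integral reduces to $\int_{\mathcal{T}_{P,\pv,\calU}} \weightSem{P}(\pv,\tr)\,\mu_\trans(\mathrm{d}\tr)$, which by \eqref{eq:unnormalised-posterior} is precisely $\measureSem{P}_\pv(\calU)$. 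Chaining these gives the claimed chain of equalities.

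For the ``moreover'' part, I would instantiate $g \equiv 1$, which is trivially a non-negative bounded measurable function, so Proposition~\ref{prop:posterior-weight} gives $\expectdist{\pv}{\widehat{w}_T} = \int_{\mathcal{T}_{P,\pv}} \weightSem{P}(\pv,\tr)\,\mu_\trans(\mathrm{d}\tr) = Z_{P,\pv}$ by \eqref{eq:normalising-constant}. Alternatively, this is the special case $\calU = \Rset^{|\pvars|}$ of the first equality, since $\mathcal{T}_{P,\pv,\Rset^{|\pvars|}} = \mathcal{T}_{P,\pv}$.

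Since the structural-induction heavy lifting was done in Proposition~\ref{prop:posterior-weight}, there is essentially no remaining obstacle here: the only mild point to verify is that the two chosen test functions $g$ fall within the class (non-negative, bounded, measurable) for which that proposition is stated, which is entirely routine. The theorem thus follows in a few lines.
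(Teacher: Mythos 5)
Your proposal is correct and matches the paper's own proof essentially verbatim: the paper likewise instantiates Proposition~\ref{prop:posterior-weight} with $g(x)=[x\in\calU]$ to obtain the first chain of equalities, and obtains the normalising-constant claim by specialising to $\calU=\Rset^{|\pvars|}$. No gaps.
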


\begin{proof}
	We instantiate \cref{prop:posterior-weight} with $g(x)=[x\in \calU]$. For any initial program state $\Xi=(\loc,\pv)$, we have that
	\begin{eqnarray*}
		\expectdist{\pv}{\widehat{w}_T\cdot [\widehat{\pv}_T\in \calU]} 
		&=&  \int_{\mathcal{T}_{P, \pv}} \weightSem P(\pv,\tr) \cdot [\valueSem P(\pv,\tr)\in \calU] \,\mu_\trans(\mathrm{d} \tr)  \\
		&=&  \int_{\mathcal{T}_{P, \pv, U}} \weightSem P(\pv,\tr)  \,\mu_\trans(\mathrm{d} \tr) \\
		&=& \measureSem{P}_{\pv}(\calU)
	\end{eqnarray*}
	We prove $\expectdist{\pv}{\widehat{w}_T}=Z_{P,\pv}$ by setting $\calU=\Rset^{|\pvars|}$.
\end{proof}

By \cref{thm:posterior-weight}, we show the equivalence of sampling-based semantics and transition-based semantics over posterior distributions. 

\subsection{Supplementary Details for $\llbracket \Pi \rrbracket(\calU)$ }\label{app:sec2-prop}
\begin{proposition}\label{prop:unnorm-norm}
   Given a WPTS $\Pi$ in the form of \eqref{eq:wpts}, a measurable set $\calU\in\Sigma_{\Rset^{|\pvars|}}$ and the WPTS $\Pi_\calU$ constructed as above, we have that $\llbracket \Pi \rrbracket_{\pv}(\calU)=\llbracket \Pi_\calU\rrbracket_{\pv}(\Rset^{|\pvars|})$ for any $\pv\in\calV=\supp{\mu_{\mathrm{init}}}$. Furthermore,
   if there exist intervals $[l_1,u_1],[l_2,u_2]\subseteq [0,\infty)$ such that $\llbracket \Pi_\calU\rrbracket_{\pv}(\Rset^{|\pvars|})\in [l_1,u_1]$ and $\llbracket \Pi\rrbracket_{\pv}(\Rset^{|\pvars|})\in [l_2,u_2 ]$ for any $\pv\in\calV$, then we have two intervals $[l_\calU,u_\calU],[l_Z,u_Z]\subseteq [0,\infty)$ such that the unnormalised posterior distribution $\llbracket \Pi\rrbracket (\calU)\in [l_\calU,u_\calU]$ and the normalising constant $Z_\Pi\in [l_Z,u_Z]$. Moreover, if $\Pi$ is integrable, i.e., $[l_Z,u_Z]\subseteq (0,\infty)$, then we can obtain the NPD $\posterior_{\Pi}(\calU)\in [\frac{l_\calU}{u_Z},\frac{u_\calU}{l_Z}]$.
\end{proposition}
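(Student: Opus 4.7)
The proposition consists of three claims that I would handle in sequence, with the first being the only one that requires a non-trivial semantic argument.

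\emph{For the first equality}, I would argue directly from the WPTS construction. By definition, $\Pi_\calU$ is obtained from $\Pi$ by appending an ``$\textbf{if}\,\widehat{\pv}_T \notin \calU\,\textbf{then}\,\textbf{score}(0)\,\textbf{fi}$'' block after the termination location $\lout$. Since the appended block performs no updates to program variables and terminates deterministically after one transition, the trajectories of $\Pi_\calU$ are in one-to-one correspondence with those of $\Pi$, and the cumulative weight at termination of $\Pi_\calU$ equals $\widehat{w}_T \cdot [\widehat{\pv}_T \in \calU]$ where $\widehat{w}_T,\widehat{\pv}_T$ are the values for the original $\Pi$. Taking expectation under the initial Dirac at $\pv$ yields
\[
\llbracket \Pi_\calU\rrbracket_{\pv}(\Rset^{|\pvars|}) \,=\, \expectdist{\pv}{\widehat{w}_T\cdot[\widehat{\pv}_T\in\calU]} \,=\, \llbracket \Pi\rrbracket_{\pv}(\calU),
\]
which is exactly the definition of the restricted expected weight. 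The only care needed here is to verify that the added self-loop at the new termination location preserves the semantics under the convention that the post-termination state stays constant, which follows from the semantics of WPTS given in Section~\ref{sec:prelim}.

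\emph{For the bound on $\llbracket\Pi\rrbracket(\calU)$ and $Z_\Pi$}, I would use the definition $\llbracket\Pi\rrbracket(\calU)=\int_\calV \llbracket\Pi\rrbracket_\pv(\calU)\,\mu_{\mathrm{init}}(\mathrm{d}\pv)$ from Definition~\ref{def:npd}. By part one, the integrand equals $\llbracket\Pi_\calU\rrbracket_\pv(\Rset^{|\pvars|})\in[l_1,u_1]$ for every $\pv\in\calV$, and since $\mu_{\mathrm{init}}$ is a probability measure on $\calV$ with $\mu_{\mathrm{init}}(\calV)=1$, monotonicity of the Lebesgue integral gives $\llbracket\Pi\rrbracket(\calU)\in[l_1,u_1]$; so we may set $l_\calU:=l_1$ and $u_\calU:=u_1$. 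The same argument applied to $\llbracket\Pi\rrbracket_\pv(\Rset^{|\pvars|})\in[l_2,u_2]$ yields $Z_\Pi=\llbracket\Pi\rrbracket(\Rset^{|\pvars|})\in[l_2,u_2]$, so $l_Z:=l_2$, $u_Z:=u_2$.

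\emph{For the NPD bound}, I would invoke the definition $\posterior_\Pi(\calU)=\llbracket\Pi\rrbracket(\calU)/Z_\Pi$ and apply elementary interval arithmetic: if the numerator lies in $[l_\calU,u_\calU]\subseteq[0,\infty)$ and, by integrability, the denominator lies in $[l_Z,u_Z]\subseteq(0,\infty)$, then the quotient is bounded below by $l_\calU/u_Z$ and above by $u_\calU/l_Z$. No step of this proposition is technically hard; the main conceptual point is justifying that the score-zero post-processing in $\Pi_\calU$ faithfully implements the indicator restriction to $\calU$, which reduces the bound analysis of restricted expected weights to the (already studied) bound analysis of unrestricted expected weights.
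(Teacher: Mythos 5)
Your proposal is correct and follows essentially the same route as the paper: the key identity $\widehat{w}_{T'}=[\widehat{\pv}_{T'}\in\calU]\cdot\widehat{w}_{T}$ with unchanged terminal valuation for the appended score-zero branch, followed by integrating the pointwise bounds over $\calV$ against $\mu_{\mathrm{init}}$ and elementary interval arithmetic for the quotient. The only cosmetic difference is that the paper splits the argument into sub-propositions (and adds a practically-motivated case distinction on whether the return variable is fixed at initialization), none of which is needed for the statement as given.
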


We prove the proposition by dividing it to the following sub-propositions and complete the proofs.

\begin{proposition}\label{prop:norm}
  Given a WPTS in the form of \eqref{eq:wpts}, the interval-bound analysis of $Z_\Pi$ can be reduced to the interval-bound analysis of the expected weight $\llbracket \Pi \rrbracket_{\pv}(\Rset^{|\pvars|})$ for all $\pv\in \calV$.
\end{proposition}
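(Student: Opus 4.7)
The plan is to reduce the interval-bound analysis of $Z_\Pi$ to that of the pointwise expected weight $\llbracket \Pi \rrbracket_\pv(\Rset^{|\pvars|})$ via the integral representation of $Z_\Pi$ over initial valuations, combined with the monotonicity of the Lebesgue integral. First I would unfold Definition~\ref{def:npd} to obtain
\begin{equation*}
Z_\Pi \;=\; \measureSem{\Pi}(\Rset^{|\pvars|}) \;=\; \int_{\calV} \measureSem{\Pi}_\pv(\Rset^{|\pvars|}) \, \mu_{\mathrm{init}}(\mathrm{d}\pv),
\end{equation*}
where $\calV = \supp{\mu_{\mathrm{init}}}$. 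This already expresses $Z_\Pi$ directly as the integral (w.r.t.\ the initial distribution) of the unrestricted expected weight starting from each $\pv \in \calV$.

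Next, suppose we are given (pointwise or constant) interval bounds $l(\pv) \le \llbracket \Pi \rrbracket_\pv(\Rset^{|\pvars|}) \le u(\pv)$ for all $\pv \in \calV$. By monotonicity of the Lebesgue integral,
\begin{equation*}
\int_\calV l(\pv) \, \mu_{\mathrm{init}}(\mathrm{d}\pv) \;\le\; Z_\Pi \;\le\; \int_\calV u(\pv) \, \mu_{\mathrm{init}}(\mathrm{d}\pv),
\end{equation*}
which furnishes an interval bound $[l_Z, u_Z]$ for $Z_\Pi$. In the special case where the bounds are uniform constants $l_2 \le \llbracket \Pi \rrbracket_\pv(\Rset^{|\pvars|}) \le u_2$, one recovers $Z_\Pi \in [l_2, u_2]$ since $\mu_{\mathrm{init}}(\calV) = 1$. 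More generally, when the algorithm partitions $\calV$ into $\calV_1,\ldots,\calV_m$ and derives piecewise polynomial bounds $l_i, u_i$ on each $\calV_i$, splitting the integral over the partition produces exactly the two-sided estimate of ($\clubsuit$), which is the form used in \textbf{Stage 5}.

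There is essentially no major obstacle here; the result is a direct consequence of the definition of $Z_\Pi$ and integral monotonicity. The only minor technicalities to verify are (i) measurability of the map $\pv \mapsto \llbracket \Pi \rrbracket_\pv(\Rset^{|\pvars|})$, which follows from the standard Markov-kernel construction of the WPTS probability space in Appendix~\ref{app:wpts-semantics} (so that Fubini/Tonelli legitimises writing $\measureSem{\Pi}(\Rset^{|\pvars|})$ as an iterated integral), and (ii) well-definedness of the outer integral, which is guaranteed by the assumption that $\mu_{\mathrm{init}}$ has bounded support and by the convention $\measureSem{\Pi}_\pv(\cdot) \in [0,\infty]$. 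These points suffice to justify the reduction claimed by the proposition.
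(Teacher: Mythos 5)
Your proposal is correct and matches the paper's own proof: both unfold $Z_\Pi$ via Definition~\ref{def:npd} as the integral of $\llbracket \Pi \rrbracket_{\pv}(\Rset^{|\pvars|})$ over $\calV = \supp{\mu_{\mathrm{init}}}$ and then apply monotonicity of the Lebesgue integral to pointwise bound functions. The extra remarks on measurability and the partitioned form ($\clubsuit$) are harmless additions beyond what the paper records.
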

\begin{proof}
Fix a WPTS $\Pi$ in the form of \eqref{eq:wpts}.
By \cref{def:npd}, the normalising constant $Z_\Pi=\measureSem{\Pi}(\Rset^{|\pvars|})=\int_{\calV} \measureSem{\Pi}_{\pv}(\Rset^{\pvars})\cdot \mu_{\mathrm{init}}(\mathrm{d} \pv)$ is the integral of the expected weight function $\llbracket \Pi \rrbracket_{\pv}(\Rset^{|\pvars|})$ over the finite set $\calV=\supp{\mu_{\mathrm{init}}}$. Assume there exist two integrable bound functions $Up,Lw$ such that $Lw(\pv)\le \llbracket \Pi \rrbracket_{\pv}(\Rset^{|\pvars|})\le Up(\pv)$ for all $\pv\in\calV$. 
Then the interval bounds $l_Z,u_Z$ for $Z_\Pi$ can be derived by the integrals of $Up,Lw$ over $\calV$, that is, 
\begin{align*}
l_Z:= \int_{\calV} Lw(\pv) \cdot \mu_{\mathrm{init}}(\mathrm{d} \pv) \le \measureSem{\Pi}(\Rset^{|\pvars|})=\int_{\calV} \measureSem{\Pi}_{\pv}(\Rset^{|\pvars|})\cdot \mu_{\mathrm{init}}(\mathrm{d} \pv)\le \int_{\calV} Up(\pv) \cdot \mu_{\mathrm{init}}(\mathrm{d} \pv)=:u_Z .
\end{align*}
\end{proof}

\begin{proposition}\label{prop:unnorm}
  Given a WPTS in the form of \eqref{eq:wpts} and any measurable set $\calU\subsetneqq \Rset^{|\pvars|}$, the interval-bound analysis of $\llbracket \Pi\rrbracket (\calU)$ can be reduced to the interval-bound analysis of the expected weight $\llbracket \Pi \rrbracket_{\pv}(\Rset^{|\pvars|})$ for all $\pv\in \calV'\subseteq \calV$. 
\end{proposition}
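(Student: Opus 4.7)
The plan is to reduce this proposition to the previous one (Proposition~\ref{prop:norm}) by exploiting the $\calU$-restricted WPTS construction $\Pi_\calU$ described just before the statement of the NPD-bounds framework. Recall that $\Pi_\calU$ is obtained from $\Pi$ by appending the conditional fragment ``\textbf{if} $\pv_T\notin\calU$ \textbf{then} \textbf{score}($0$) \textbf{fi}'' right after the termination location of $\Pi$; this fragment multiplies the cumulative likelihood weight by $0$ whenever the terminal program valuation falls outside $\calU$, and leaves it unchanged otherwise. Since the initial distribution $\mu_{\mathrm{init}}$ and the set $\calV = \supp{\mu_{\mathrm{init}}}$ are inherited unchanged by $\Pi_\calU$, and since the transformation is purely at termination, $\Pi_\calU$ is a well-defined WPTS.

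The first key step is to establish the pointwise identity $\llbracket \Pi\rrbracket_{\pv}(\calU) = \llbracket \Pi_\calU\rrbracket_{\pv}(\Rset^{|\pvars|})$ for every $\pv \in \calV$. By the definition of the expected-weight restricted to $\calU$,
\[
\llbracket \Pi\rrbracket_{\pv}(\calU) = \expectdist{\pv}{[\widehat{\pv}_T\in \calU]\cdot\widehat{w}_T}.
\]
For the transformed WPTS $\Pi_\calU$, the terminal weight is multiplied by $[\widehat{\pv}_T\in \calU]$ by construction, while the termination event is preserved (the added branch only rescales weights, not probabilities of reaching termination). Hence the unrestricted weight in $\Pi_\calU$ equals the $\calU$-restricted weight in $\Pi$, giving the desired equality. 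This reduces the problem of bounding $\llbracket \Pi\rrbracket_{\pv}(\calU)$ to that of bounding the unrestricted expected weight function of $\Pi_\calU$.

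Next, integrating over the initial distribution and applying Fubini/monotonicity of the integral yields
\[
\llbracket \Pi\rrbracket(\calU) = \int_{\calV} \llbracket \Pi\rrbracket_{\pv}(\calU)\,\mu_{\mathrm{init}}(\mathrm{d}\pv) = \int_{\calV} \llbracket \Pi_\calU\rrbracket_{\pv}(\Rset^{|\pvars|})\,\mu_{\mathrm{init}}(\mathrm{d}\pv) = Z_{\Pi_\calU}.
\]
Now Proposition~\ref{prop:norm} applied to the WPTS $\Pi_\calU$ reduces the bound analysis of $Z_{\Pi_\calU}$ to the bound analysis of $\llbracket \Pi_\calU\rrbracket_{\pv}(\Rset^{|\pvars|})$ over $\pv\in\calV$. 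Finally, to obtain the refinement to $\calV'\subseteq\calV$, observe that whenever we can statically determine a subset $\calV''\subseteq\calV$ of initial valuations for which no terminal valuation can lie in $\calU$ (e.g.\ via an invariant), the integrand vanishes on $\calV''$, so the bound analysis only needs to be performed on $\calV' := \calV\setminus\calV''$.

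The main obstacle is the first step, i.e.\ justifying rigorously that the appended \textbf{score}($0$) fragment at termination produces the pointwise identity under the transition-based semantics. This requires invoking the independence of the added termination gadget from the prior probabilistic dynamics, together with the fact that the termination time and terminal valuation distributions of $\Pi_\calU$ coincide with those of $\Pi$ (since no sample or probabilistic branch is introduced). Once this identity is secured, the remainder is a direct application of Proposition~\ref{prop:norm} and basic integration.
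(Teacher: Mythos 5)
Your proposal is correct and its core is the same as the paper's: both rely on the $\Pi_\calU$ construction (appending \textbf{if} $\pv_T\notin\calU$ \textbf{then} \textbf{score}($0$) \textbf{fi} at termination), the pointwise identity $\llbracket \Pi\rrbracket_{\pv}(\calU)=\llbracket \Pi_\calU\rrbracket_{\pv}(\Rset^{|\pvars|})$ (which the paper proves separately via $\widehat{w}_{T'}=[\widehat{\pv}_{T'}\in\calU]\cdot\widehat{w}_T$), and then integration over $\mu_{\mathrm{init}}$ as in Proposition~\ref{prop:norm}. The one genuine difference is how the subset $\calV'\subseteq\calV$ arises. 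The paper splits into two cases: when the return variable is fixed at initialization and never modified by the loop, the indicator $[\widehat{\pv}_T\in\calU]$ collapses to $[\valin\in\calU]$, so no auxiliary WPTS is needed and $\calV'$ is exactly the set of initial valuations whose return component lies in $\calU$, with the reduction target being the \emph{original} $\llbracket \Pi\rrbracket_{\pv}(\Rset^{|\pvars|})$; only when the return variable is loop-dependent does the paper fall back on $\Pi_\calU$ (with $\calV'=\calV$). You instead handle everything uniformly through $\Pi_\calU$ and recover $\calV'$ by a separate static argument that the integrand vanishes on some $\calV''$. Your route is more uniform and avoids the case analysis, at the cost of always reducing to the expected weight of the \emph{modified} system $\Pi_\calU$ rather than of $\Pi$ itself, which is a slight mismatch with the literal wording of the statement (though the paper's own Case~2 has the same mismatch); the paper's Case~1 buys the literal form of the claim, and a cheaper analysis, in the common situation where the queried variable is set once by the prior.
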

\begin{proof}
Fix a WPTS $\Pi$ in the form of \eqref{eq:wpts} and a measurable set $\calU\subsetneqq \Rset^{|\pvars|}$. By~\cref{def:npd}, $\llbracket \Pi\rrbracket (\calU)=\int_{\calV} \llbracket \Pi\rrbracket_{\pv}(\calU)\cdot\mu_{\mathrm{init}}(\mathrm{d}\pv)$. By the definition of expected weights in \cref{sec2:NPD}, $\llbracket \Pi\rrbracket_{\pv}(\calU)=\expectdist{\pv}{[\widehat{\pv}_T\in \calU]\cdot\widehat{w}_T}$ for any initial program valuation $\pv$. As $\widehat{\pv}_T$ is the random (vector) variable of the program valuation at termination and $\calU\neq \Rset^{|\pvars|}$, it is not possible to track the valuation of $\widehat{\pv}_T$ for each program run starting from $\pv$, which makes it challenging to compute interval bounds for $\llbracket \Pi\rrbracket_{\pv}(\calU)$ directly. However, in practice, we only care about the posterior distribution that whether the return variable falls into some target sets of interest.\footnote{ For the sake of simplicity, we consider that the WPTS has only one return variable $ret\in\pvars$ of interest, but it can be straightforwardly extended to general cases.} That is, the set $\calU$ is defined such that $[\widehat{\pv}_T\in \calU]=1$ iff $\widehat{\pv}_T[ret]\in\calU(ret)$ where $\calU(ret)$ is the element in $\calU$ that corresponds to the set w.r.t. the return variable. Below we distinguish the computation into two cases.
\begin{itemize}
	\item[Case 1.] If the value of $ret$ is determined once and unaffected by loop iterations, then $[\widehat{\pv}_T[ret]\in\calU(ret)]$ is equivalent to $[\valin[ret]\in\calU(ret)]$. We can conclude that $\llbracket \Pi\rrbracket_{\valin}(\calU)=\\ \expectdist{\valin}{[\widehat{\pv}_T\in \calU]\cdot\widehat{w}_T}= \expectdist{\valin}{\widehat{w}_T}=\llbracket \Pi\rrbracket_{\valin}(\Rset^{|\pvars|})$ if $\valin[ret]\in\calU(ret)$, and $0$ otherwise.
	\item[Case 2.] If the value of $ret$ is affected during loop iterations, then we can construct a WPTS $\Pi_\calU$ by adding a conditional branch of the form ``\textbf{if} $\pv_T\notin\calU$ \textbf{then} \textbf{score}($0$) \textbf{fi}'' immediately before the termination of $\Pi$. And we can prove that $\llbracket \Pi\rrbracket_{\valin}(\calU)$ is equivalent to $\llbracket \Pi_\calU\rrbracket_{\valin}(\Rset^{|\pvars|})$ for all $\valin\in\calV$.
\end{itemize}
The correctness of Case 1 above is straightforward to be proved, while the correctness of Case 2 follows from the proposition below.
\end{proof}

\begin{proposition}
	Given a WPTS $\Pi$ in the form of \eqref{eq:wpts} and a measurable set $\calU\subsetneqq \Rset^{|\pvars|}$, for any initial program valuation $\valin\in\calV$, the expected weight $\measureSem{\Pi}_{\valin}(\calU)$ is equivalent to $\measureSem{\Pi_\calU}_{\valin}(\Rset^{|\pvars|})$ where $\Pi_\calU$ is constructed by the method described in Case 2 above.
\end{proposition}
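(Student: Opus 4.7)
The plan is to trace through the construction of $\Pi_\calU$ and observe that it differs from $\Pi$ only by an extra deterministic step at termination whose sole effect is to multiply the cumulative weight by the indicator $[\widehat{\pv}_T \in \calU]$. The result will then follow directly from the definition $\measureSem{\Pi}_{\valin}(\calU) := \expectdist{\valin}{[\widehat{\pv}_T \in \calU] \cdot \widehat{w}_T}$ given in Section~\ref{sec2:NPD}.

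First, I would give a formal description of $\Pi_\calU$ as a WPTS. Starting from $\Pi$ in the form of \eqref{eq:wpts}, introduce a fresh location $\lout'$ which will play the role of the new termination location, and redirect every fork in $\transset$ that previously targeted $\lout$ so that it now targets $\lout'$ (preserving update and score functions). Then add a single new transition at $\lout'$, say $\langle \lout', \mathbf{true}, F^{\calU}_1, F^{\calU}_2 \rangle$, where $F^{\calU}_1 = \langle \lout, 1, \mbox{\sl id}, [\pv \in \calU] \rangle$ and $F^{\calU}_2$ is discarded (or equivalently a single fork that multiplies by the indicator of $\calU$ and goes to $\lout$, which is exactly what the guarded score statement ``\textbf{if} $\pv \notin \calU$ \textbf{then} \textbf{score}(0) \textbf{fi}'' compiles to).

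Second, I would exhibit a measure-preserving correspondence between program runs of $\Pi$ and $\Pi_\calU$. Every run $\omega = \{(\loc_n, \pv_n, w_n)\}_{n \ge 0}$ of $\Pi$ that reaches $\lout$ at step $T$ corresponds to a unique run $\omega'$ of $\Pi_\calU$ which agrees with $\omega$ for the first $T$ steps (up to relabelling $\lout$ as $\lout'$) and then takes one more deterministic step. In this extra step, the program valuation is unchanged (since the update function is $\mbox{\sl id}$) and the cumulative weight is multiplied by $[\pv_T \in \calU]$. Since the new transition is deterministic and has probability $1$, this bijection preserves the probability measure on program runs, and hence preserves the distribution of the pair $(\widehat{\pv}_{T'}, \widehat{w}_{T'})$ at the new termination time $T' = T+1$ in $\Pi_\calU$.

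Finally, putting these together, the cumulative weight at termination of $\Pi_\calU$ is $\widehat{w}_{T'} = \widehat{w}_T \cdot [\widehat{\pv}_T \in \calU]$ and the program valuation at termination is $\widehat{\pv}_{T'} = \widehat{\pv}_T$, which always lies in $\Rset^{|\pvars|}$. Therefore
\[
\measureSem{\Pi_\calU}_{\valin}(\Rset^{|\pvars|}) = \expectdist{\valin}{[\widehat{\pv}_{T'} \in \Rset^{|\pvars|}] \cdot \widehat{w}_{T'}} = \expectdist{\valin}{\widehat{w}_T \cdot [\widehat{\pv}_T \in \calU]} = \measureSem{\Pi}_{\valin}(\calU),
\]
which is the desired equality. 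The main obstacle is not mathematical but bookkeeping: one needs to be careful when the original $\Pi$ has several forks targeting $\lout$ (each with its own weight $\wet_j$), to verify that the rewiring above is genuinely deterministic and total and that the bijection on runs is measure-preserving; once the WPTS construction is spelled out cleanly, the rest is immediate from the definition of $\measureSem{-}_{\valin}(-)$.
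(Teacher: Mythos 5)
Your proposal is correct and follows essentially the same route as the paper: both make the one extra deterministic step at termination explicit, observe that it transforms the cumulative weight into $\widehat{w}_{T'}=[\widehat{\pv}_T\in\calU]\cdot\widehat{w}_T$ while leaving the valuation unchanged, and then conclude by the same chain of equalities from the definition of $\measureSem{\Pi}_{\valin}(\calU)$. The only cosmetic difference is that you encode the conditional scoring as a single fork whose score function is the indicator of $\calU$, whereas the paper splits it into two guarded transitions with constant score functions $\overline{0}$ and $\overline{1}$.
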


\begin{proof}
Fix a WPTS $\Pi$ in the form of \eqref{eq:wpts}, a measurable set $\calU\subsetneqq \Rset^{|\pvars|}$ and a new WPTS $\Pi_\calU$ constructed by the method described in Case 2. Let the termination location of $\Pi$ be $\lout$ and the termination location of $\Pi_\calU$ be $\lout'$. Define a predicate $\Phi_\calU$ over $\pvars$  such that $\calU=\{\pv\mid \pv\models\Phi_\calU  \}$. Then $\Pi_\calU$ has two additional transitions than $\Pi$, i.e., $\langle \lout, \Phi_\calU, F_{\top}\rangle$ with $F_\top:=\langle \lout',1, \mbox{\sl id},\overline{0} \rangle$ where $\mbox{\sl id}$ is the identity function and $\overline{0}$ is the constant function that always takes the value $0$, and $\langle \lout, \neg\Phi_\calU, F_{\bot}\rangle$ with $F_\bot:=\langle \lout',1, \mbox{\sl id},\overline{1} \rangle$ where $\overline{1}$ is the constant function that always takes the value $1$. Let the termination time of $\Pi$ be $T$ and the termination time of $\Pi_\calU$ be $T'$.
Therefore, for the random variable of the multiplicative likelihood weight at termination, we have that $\widehat{w}_{T'}=[\widehat{\pv}_{T'}\in \calU]\cdot \widehat{w}_{T}+[\widehat{\pv}_{T'}\notin \calU]\cdot 0$.
By the definition of expected weights,
	\begin{align*}
	    \llbracket \Pi\rrbracket_{\valin}(\calU)&=\expectdist{\valin}{[\widehat{\pv}_T\in \calU]\cdot\widehat{w}_T} \\
     &= \expectdist{\valin}{[\widehat{\pv}_T\in \calU]\cdot\widehat{w}_T+[\widehat{\pv}_T\notin \calU]\cdot 0}  \\
     &=\expectdist{\valin}{[\widehat{\pv}_{T'}\in \calU]\cdot \widehat{w}_{T}+[\widehat{\pv}_{T'}\notin \calU]\cdot 0} \\
     &=\expectdist{\valin}{\widehat{w}_{T'}} \\
     &=\expectdist{\valin}{[\widehat{\pv}_{T'}\in\Rset^{|\pvars|}]\cdot\widehat{w}_{T'}}\\
     &= \llbracket \Pi_\calU \rrbracket_{\valin}(\Rset^{|\pvars|})
	\end{align*}
where the third equality is derived from the fact that $\widehat{\pv}_{T'}=\widehat{\pv}_{T}$ and the fourth equality is obtained from the fact that $[\widehat{\pv}_{T'}\in\Rset^{|\pvars|}]\equiv 1$.
\end{proof}

\section{Supplementary Material for Section~\ref{sec:math}}\label{app:sec4}

\subsection{Basics of Fixed Point Theory}\label{app:fixed-point-materials}

We complement some basic concepts of lattice theory here. Given a partial order $\sle$ on a set $K$ and a subset $K' \subseteq K,$ an \emph{upper bound} of $K'$ is an element $u \in K$ that is no smaller than every element of $K'$, i.e.,~$\forall k' \in K'.~k' \sle u.$ Similarly, a \emph{lower bound} for $K'$ is an element $l$ that is no greater than every element of $K',$ i.e.~$\forall k' \in K'.~l \sle k'.$ The \emph{supremum} of $K',$ denoted by $\bigsqcup K'$, is an element $u^* \in K$ such that $u^*$ is an upper-bound of $K'$ and for every upper bound $u$ of $K',$ we have $u^* \sle u.$ Similarly, the \emph{infimum} $\bigsqcap K'$ is a lower bound $l^*$ of $K'$ such that for every lower-bound $l$ of $K',$ we have $l \sle l^*.$ We define $\bot\!:=\!\bigsqcap K$ and $\top\!:=\!\bigsqcup K.$ In general, suprema and infima may not exist.

A partially-ordered set $(K, \sle)$ is called a \emph{complete lattice} if every subset $K'\subseteq K$ has a supremum and an infimum.
Given a partial order $(K, \sle)$, a function $f: K \to K$ is called \textit{monotone} if for every $k_1 \sle k_2$ in $K$, we have $f(k_1) \sle f(k_2).$

Given a complete lattice $(K, \sle),$ a function $f: K \to K$ is called \emph{continuous} if for every increasing chain $k_0 \sle k_1 \sle \ldots$ in $K,$ we have $f(\bigsqcup \{k_n\}_{n=0}^\infty) = \bigsqcup \{f(k_n)\}_{n=0}^\infty,$ and \emph{cocontinuous} if for every decreasing chain $k_0 \sge k_1 \sge \ldots$ of elements of $K,$ we have $f(\bigsqcap \{k_n\}_{n=0}^\infty) = \bigsqcap \{f(k_n)\}_{n=0}^\infty.$ An element $k \in K$ is called a \emph{fixed-point} if $f(k) = k.$ Moreover, $k$ is a \emph{pre fixed-point} if $f(k) \sle k$ and a \emph{post fixed-point} if $k\sle f(k)$. The \emph{least fixed-point} of $f$, denoted by $\lfp f,$ is the fixed-point that is no greater than every fixed-point under $\sle.$ Analogously, the \emph{greatest fixed-point} of $f$, denoted by $\gfp f$, is the fixed-point that is no smaller than all fixed-points.

\begin{theorem}[\textit{Kleene}~\cite{Sangiorgibook}]
\label{thm:kleene}
Let $(K, \sle)$ be a complete lattice and $f: K \to K$ be a continuous function. Then, we have
	$$\textstyle \lfp\ f = {\textstyle \mathop{\bigsqcup}_{i \ge 0}} \left\{f^{(i)}(\bot)\right\}.$$
Analogously, if $f$ is cocontinuous, then we have
	$$\textstyle \gfp\ f = {\textstyle \mathop{\bigsqcap}_{i \ge 0}} \left\{f^{(i)}(\top)\right\}.$$
\end{theorem}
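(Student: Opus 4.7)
The plan is to proceed by the classical Kleene iteration argument, treating the two halves (least and greatest fixed point) dually, so I will focus on $\lfp f$ and note the cocontinuous case is symmetric.

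First, I would establish that $f$ is monotone (this follows from continuity via a small lemma: for $k_1 \sle k_2$, the two-element chain admits a supremum equal to $k_2$, and applying continuity together with the standard trick of embedding a finite chain into an eventually-constant $\omega$-chain gives $f(k_1) \sle f(k_2)$; alternatively one may simply invoke monotonicity as a standard consequence of continuity). Next, I would verify that $\{f^{(i)}(\bot)\}_{i \ge 0}$ is an increasing chain. The base case $\bot \sle f(\bot)$ is immediate since $\bot$ is the least element of $K$. The inductive step $f^{(i)}(\bot) \sle f^{(i+1)}(\bot)$ follows from monotonicity applied to the previous inequality. Thus the supremum $k^{*} := \bigsqcup_{i \ge 0} f^{(i)}(\bot)$ is well-defined because $(K,\sle)$ is a complete lattice.

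Next, I would show $k^{*}$ is a fixed point by a direct computation using continuity:
\[
f(k^{*}) \;=\; f\!\left(\bigsqcup_{i \ge 0} f^{(i)}(\bot)\right) \;=\; \bigsqcup_{i \ge 0} f\bigl(f^{(i)}(\bot)\bigr) \;=\; \bigsqcup_{i \ge 0} f^{(i+1)}(\bot) \;=\; \bigsqcup_{i \ge 1} f^{(i)}(\bot) \;=\; k^{*},
\]
where the last equality holds because adjoining $f^{(0)}(\bot) = \bot$ to the set does not change its supremum. Then, to show $k^{*}$ is the least fixed point, I take any fixed point $k$ of $f$ and argue by induction that $f^{(i)}(\bot) \sle k$ for every $i$: the base case uses $\bot \sle k$, and the inductive step applies monotonicity together with $f(k)=k$. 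Taking suprema yields $k^{*} \sle k$, as required.

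The cocontinuous case is dual: one shows $\{f^{(i)}(\top)\}_{i \ge 0}$ is a decreasing chain (from $f(\top) \sle \top$ plus monotonicity), defines $k_{*} := \bigsqcap_{i \ge 0} f^{(i)}(\top)$, applies cocontinuity to obtain $f(k_{*}) = k_{*}$, and shows maximality against any fixed point $k$ by induction that $k \sle f^{(i)}(\top)$ for all $i$. The main obstacle is less a technical hurdle than a bookkeeping issue: one must be careful that continuity as stated (on increasing $\omega$-chains) actually suffices to deliver monotonicity and to commute with the supremum after re-indexing from $i \ge 0$ to $i \ge 1$; both are standard, but I would state them explicitly rather than leave them implicit.
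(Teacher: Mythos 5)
Your proof is correct: it is the standard Kleene iteration argument (monotonicity from continuity, the ascending chain $\{f^{(i)}(\bot)\}_{i\ge 0}$, continuity to show the supremum is a fixed point, and induction against an arbitrary fixed point for leastness, with the dual argument for $\gfp$). The paper itself does not prove this theorem but imports it from the cited textbook, so there is no in-paper argument to compare against; your write-up fills that gap with the expected classical proof and correctly flags the only delicate point, namely that continuity on $\omega$-chains suffices both for monotonicity and for commuting $f$ with the re-indexed supremum.
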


\subsection{Proofs for Our Fixed-Point Approach}\label{app:fixedpoint}

Recall that $\mbox{\sl ew}_\Pi(\lin,\valin)=\llbracket \Pi \rrbracket_{\valin}(\Rset^{|\pvars|})$ (see the definition of $\mbox{\sl ew}_\Pi$ in~\cref{sec:fixed-point}).
We first show that the expected weight function $\mbox{\sl ew}_\Pi$ is a fixed point of $\ewt_\Pi$ in the complete lattice $(\mathcal{K}_M,\le)$ when $M>1$ and the WPTS $\Pi$ is score bounded by $M$. 

\begin{proposition}\label{prop:fixedpoint}
    The expected-weight function $\mbox{\sl ew}$ is a fixed point of the expected-weight transformer $\ewt$. 
\end{proposition}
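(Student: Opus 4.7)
The plan is to verify directly that $\ewt_\Pi(\mbox{\sl ew}_\Pi)(\loc,\pv) = \mbox{\sl ew}_\Pi(\loc,\pv)$ for every state $(\loc,\pv)$, by splitting into the two clauses of \eqref{eq:ewt}. First, at the termination location $\lout$, the semantics in \cref{app:wpts-semantics} forces the weighted state to be absorbing (the next weighted state equals the current one), so starting from $(\lout,\pv)$ with initial weight $1$ gives termination time $T=0$ and $\widehat{w}_T \equiv 1$. Hence $\mbox{\sl ew}_\Pi(\lout,\pv) = \expectdist{(\lout,\pv)}{\widehat{w}_T} = 1$, which matches the second clause of $\ewt_\Pi$.

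For a non-terminal state $(\loc,\pv)$, I would use a one-step decomposition of the WPTS dynamics. Since $\Pi$ is deterministic and total, there is a unique transition $\tau = \langle \loc,\phi,F_1,\dots,F_k\rangle$ with $\pv\models \phi$; conditioned on being at $(\loc,\pv,1)$, the next weighted state is $(\loc'_j, \upd_j(\pv,\rv), \wet_j(\pv,\rv))$ where fork $F_j$ is taken with probability $p_j$ and $\rv$ is drawn from $\rdvarjdis$. By the Markov property of the weighted-state process (which is a time-homogeneous Markov chain on $\Lambda \times [0,\infty)$ as discussed in the semantics), the continuation from this new weighted state, composed with the accumulated weight $\wet_j(\pv,\rv)$, gives
\begin{align*}
\mbox{\sl ew}_\Pi(\loc,\pv) \;=\; \sum_{j=1}^k p_j \cdot \expectdist{\rv}{\wet_j(\pv,\rv)\cdot \mbox{\sl ew}_\Pi(\loc'_j,\upd_j(\pv,\rv))},
\end{align*}
which is exactly $\ewt_\Pi(\mbox{\sl ew}_\Pi)(\loc,\pv)$ from the first clause of \eqref{eq:ewt}. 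The crucial identity in this step is the factorisation $\widehat{w}_T = \wet_j(\pv,\rv) \cdot \widehat{w}'_{T'}$ of the terminal weight into the first-step multiplicative contribution and the terminal weight $\widehat{w}'_{T'}$ of the shifted process started at $(\loc'_j,\upd_j(\pv,\rv))$ with weight $1$; this is a direct consequence of the multiplicative weight update in the transition-based semantics.

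The main technical obstacle will be justifying the exchange of the conditional expectation with the sum over forks and the inner expectation over $\rv$, together with invoking the (strong) Markov property rigorously when the first step takes place. For score-at-end WPTS's, the score $\wet_j$ is identically $1$ except on the transition into $\lout$, where it is bounded by $M$, so all expectations are finite and Fubini/Tonelli applies; more generally, since all quantities are non-negative, Tonelli suffices without integrability assumptions. A clean way to organise this is to express $\mbox{\sl ew}_\Pi(\loc,\pv)$ via the probability measure $\mathbb{P}_{(\loc,\pv)}$ on program runs constructed in \cref{app:wpts-semantics}, disintegrate it according to the first transition step, and then reindex the tail of the run as an independent copy governed by $\mathbb{P}_{(\loc'_j,\upd_j(\pv,\rv))}$; the identity then follows by pulling $\wet_j(\pv,\rv)$ outside the tail expectation and summing over $j$.
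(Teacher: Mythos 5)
Your argument is correct, and its computational core --- conditioning on the first transition, factoring the terminal weight as the first-step score times the terminal weight of the shifted run started with weight $1$, and exchanging the resulting integrals by Tonelli --- is exactly the core of the paper's proof. The organisation differs in one genuine respect: the paper does not apply the one-step decomposition directly to $\mbox{\sl ew}$. It first introduces the step-truncated weights $\widehat{w}^n$ (equal to $\widehat{w}_T$ on runs with $T\le n$ and $0$ otherwise) and the corresponding functions $\mbox{\sl ew}^n(\loc,\pv)=\expectdist{(\loc,\pv)}{\widehat{w}^n}$, proves the recurrence $\mbox{\sl ew}^{n+1}=\ewt(\mbox{\sl ew}^n)$ for every finite $n$ via the product-measure decomposition $\mathrm{d}(\mathcal{D}_{\rv_0}\times\probm_{\Xi'})$ and Tonelli--Fubini, and only then passes to the limit on both sides with the Monotone Convergence Theorem, using almost-sure termination to guarantee $\widehat{w}^n\uparrow \widehat{w}_T$. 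Your route is shorter but places the full burden on a direct invocation of the strong Markov property for the untruncated terminal weight, whose pathwise factorisation $\widehat{w}_T=\wet_j(\pv,\rv)\cdot\widehat{w}'_{T'}$ only makes sense off the null set of non-terminating runs; the truncate-then-limit route sidesteps this because the analogous identity for $\widehat{w}^{n+1}$ holds pathwise for every finite $n$ with no reference to termination. Both arguments are sound --- your appeal to non-negativity so that Tonelli applies without integrability hypotheses is precisely what makes the direct version go through --- so the difference is one of bookkeeping rather than substance. If you write yours up in full, isolate the disintegration of $\probm_{(\loc,\pv)}$ at the first step as an explicit lemma, since that is where all of the measure-theoretic content lives.
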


\begin{proof}
Define the step-bounded weight random variable $\widehat{w}^n_{(\loc, \pv)}$ starting from any program state $\Xi=(\loc,\pv)$  for a step bound $n\in\mathbb{N}$ by 

\[
\widehat{w}^n_{(\loc, \pv)}(\omega)=\begin{cases} \widehat{w}_{(\loc, \pv)}(\omega) & \mbox{if $T(\omega)\le n$} \\ 0 & \mbox{otherwise}\end{cases}\enskip.
\]
Since we always assume that the underlying WPTS is almost-surely terminating, 
it follows that the sequence of random variables $\{\widehat{w}^n\}_{n\in\Nset}$ converges non-decreasingly to $W$. 

Given any program state $\Xi=(\loc,\pv)$ with a unique transition $\tau = \langle \loc, \phi_\tau, f_\tau \rangle$ satisfying $\pv\models\phi_{\tau}$, define the step-bounded expected-weight function $\mbox{\sl ew}^n$ by $\mbox{\sl ew}^n(\loc,\pv)=\expectdist{(\loc,\pv)}{\widehat{w}^n}$. 
Without loss of generality, we assume there is only one fork $f_\tau$ in this transition. Assume the next sampling valuation from $\Xi$ is $\rv_0$ and the next program state is $\Xi'=(\loc',\pv')$, i.e., $\pv' = f_\tau(\pv,\rv_0)$. Following the symbols in \cref{sec:prelim}, we denote the probability space of the WPTS $\Pi$ starting from $(\loc, \pv)$, i.e., the program runs starting from $\Xi=(\loc, \pv)$ as $\pspace_{\Xi}$. By Tonelli-Fubini Theorem, we have that for all $n\ge 0$,
\begin{eqnarray*}
\mbox{\sl ew}^{n+1}(\loc,\pv) & = & \int \widehat{w}^{n+1}_{(\loc, \pv)} \,\mathrm{d}\probm_{\Xi} \\
&=& \int \widehat{w}^{n+1}_{(\loc, \pv)} \,\mathrm{d}(\mathcal{D}_{\rv_0} \times \probm_{\Xi'}) \\
&=& \int \wet \cdot W^n_{(\loc', \pv')}(\omega)\,\mathrm{d}(\mathcal{D}_{\rv_0} \times \probm_{\Xi'}) \\
&=& \int_{\rv_0}\int_{\omega_{\Xi'}} \wet\cdot W^n_{(\loc', \pv')}(\omega)\,\mathrm{d}\probm_{\Xi'}\,\mathrm{d}\mathcal{D}_{\rv_0}\\
&=& \int_{\rv_0}\wet\cdot\left(\int_{\omega_{\Xi'}}\widehat{w}^n_{(\loc', \pv')}(\omega)\,\mathrm{d}\probm_{\Xi'}\right)\,\mathrm{d}\mathcal{D}_{\rv_0}\\
&=& \int_{\rv_0}\wet\cdot \mbox{\sl ew}^n(\loc', \pv') \,\mathrm{d}\mathcal{D}_{\rv_0}\\
&=& \expectdist{\rv_0}{ \wet\cdot \mbox{\sl ew}^n(\loc', \pv')} \\
&=& \ewt(\mbox{\sl ew}^n)(\loc,\pv) 
\end{eqnarray*}
By applying MCT to the both sides of the equality above, we have that 
\[
\mbox{\sl ew}(\loc,\pv)=\ewt(\mbox{\sl ew})(\loc,\pv).
\]
\end{proof}

In order to further show the uniqueness of the fixed point, we prove that $\ewt$ is both continuous and cocontinuous. 

\begin{proposition}\label{prop:continuity} If $M\in [0,\infty)$, then the expected-weight transformer $\ewt:\mathcal{K}_M\to \mathcal{K}_M$ is both continuous and cocontinous.
\end{proposition}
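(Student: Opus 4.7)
The plan is to reduce both continuity and cocontinuity to an application of the Bounded Convergence Theorem (BCT, a special case of Dominated Convergence) inside the expectation that defines $\ewt$. Fix any state $(\loc,\pv) \in \Lambda$. If $\loc = \lout$, then $\ewt(h)(\loc,\pv) = 1$ is independent of $h$, so both limits pass through trivially. Otherwise let $\tau = \langle \loc, \phi, F_1, \ldots, F_k \rangle$ be the unique transition with $\pv \models \phi$, and recall
\[
\ewt(h)(\loc,\pv) \,=\, \sum_{j=1}^k p_j \cdot \expectdist{\rv}{\wet_j(\pv,\rv) \cdot h(\loc'_j, \upd_j(\pv,\rv))}.
\]

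First I would exploit the score-at-end assumption to get a uniform bound on the integrand. Every $\wet_j$ is either identically $1$ (for transitions whose destination is not $\lout$) or bounded by $M$ (for the unique terminating transition, by score-boundedness). Since every $h \in \mathcal{K}_M$ satisfies $|h| \le M$ and additionally $h(\lout, \cdot) \in [0,M]$, the integrand $\wet_j(\pv,\rv) \cdot h(\loc'_j, \upd_j(\pv,\rv))$ is bounded in absolute value by $M^2$ for every $\rv$, uniformly across $h \in \mathcal{K}_M$.

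Next, take an increasing chain $h_0 \le h_1 \le \ldots$ with pointwise supremum $h_\infty := \bigsqcup_n h_n \in \mathcal{K}_M$ (the supremum in $(\mathcal{K}_M,\le)$ is pointwise, and membership in $\mathcal{K}_M$ is preserved because $[-M,M]$ and $[0,M]$ are closed under pointwise suprema). For each $j$ and each fixed $\rv$, the real sequence $\wet_j(\pv,\rv) \cdot h_n(\loc'_j, \upd_j(\pv,\rv))$ converges monotonically to $\wet_j(\pv,\rv) \cdot h_\infty(\loc'_j, \upd_j(\pv,\rv))$ (monotonicity uses $\wet_j \ge 0$), and is bounded in absolute value by $M^2$. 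BCT applied to the distribution $\rdvarjdis$ of $\rv$ gives $\expectdist{\rv}{\wet_j \cdot h_n(\loc'_j, \upd_j(\pv,\rv))} \to \expectdist{\rv}{\wet_j \cdot h_\infty(\loc'_j, \upd_j(\pv,\rv))}$; because the outer sum over $j \in \{1,\ldots,k\}$ is finite, we conclude $\ewt(h_n)(\loc,\pv) \to \ewt(h_\infty)(\loc,\pv)$. Since $\ewt$ is monotone (noted right after Definition~\ref{def:ewt}), the sequence $\{\ewt(h_n)(\loc,\pv)\}$ is itself non-decreasing, so its limit equals its pointwise supremum, establishing $\ewt(\bigsqcup_n h_n) = \bigsqcup_n \ewt(h_n)$. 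Cocontinuity follows by the symmetric argument applied to a decreasing chain, using the same uniform $M^2$ bound to validate BCT.

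The only substantive obstacle is the opening uniform-boundedness observation: without the score-at-end hypothesis, the weights $\wet_j$ inside a loop could be unbounded and one would lose the dominating constant required for BCT (this is precisely the integrability pathology that motivates the separate OST approach in Section~\ref{sec:ostapproach}). Once the uniform bound $M^2$ is in hand, the rest of the argument is a single invocation of bounded convergence combined with the finite additivity of expectation over the $k$ forks.
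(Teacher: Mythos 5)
Your proof is correct and follows the same skeleton as the paper's: argue pointwise at each state, dispose of the $\loc=\lout$ case trivially, pass the chain's limit through the expectation over $\rv$, use finiteness of the sum over forks, and convert the limit into a supremum/infimum via monotonicity of $\ewt$. The one substantive difference is the convergence theorem you invoke. The paper applies the Monotone Convergence Theorem to the increasing chain and then remarks that cocontinuity follows "for integrable $h_0$ and decreasing chains" -- the decreasing direction of MCT needs an integrable majorant, which the paper leaves implicit. You instead extract a uniform dominating constant (of order $M^2$, or more precisely $\max(1,M)\cdot M$) from the score-at-end hypothesis together with $|h|\le M$, and run the Bounded Convergence Theorem in both directions. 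This buys a symmetric treatment of continuity and cocontinuity with no side condition on $h_0$, and it makes explicit exactly where the boundedness of the score function is used -- which is the right thing to highlight, since this is precisely the hypothesis that fails for score-recursive programs and forces the separate OST machinery. Both arguments are sound; yours is marginally more self-contained on the cocontinuity half.
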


\begin{proof}
	We first prove that $\ewt$ is well-defined. Given an arbitrary $h\in \mathcal{K}_M$, for any $\Xi=(\loc,\pv)\in\Lambda$,
	\begin{itemize}
		\item When $\loc=\lout$, $\ewt(h)(\loc,\pv)=1$.
		\item When $\loc\neq \lout$, for a unique transition $\tau = \langle \loc, \phi_{\tau}, f_{\tau}, \loc' \rangle$ such that $\pv\models\phi_{\tau}$,
		\begin{align*}
		\ewt(h)(\loc,\pv)&=\expectdist{\rv}{h(\loc',f_{\tau}(\pv,\rv))\cdot W(\loc,\pv)} \\
		&\le M\cdot maxscore \\
		&< \infty  \\
		\end{align*}
	\end{itemize}
where $maxscore$ is the maximum of $W$ given any state $\Xi$.
As $W$ is a non-negative function, we can prove that $\ewt(h)(\loc,\pv)\ge 0$. Thus, $\ewt$ is well defined.
Next, we prove that $\ewt$ is monotone. Given any two functions $h_1,h_2\in \mathcal{K}_M$ such that $h_1\le h_2$, by case analysis on $(\loc,\pv)$,
\begin{itemize}
	\item If $\loc=\lout$, $\ewt(h_1)(\loc,\pv)=1=\ewt(h_2)(\loc,\pv)$.
	\item If $\loc\neq\lout$, given a unique transition $\tau = \langle \loc, \phi_{\tau}, f_{\tau}, \loc' \rangle$ such that $\pv\models\phi_{\tau}$,
	\begin{align*}
	\ewt(h_1)(\loc,\pv)&= \expectdist{\rv}{h_1(\loc',f_{\tau}(\pv,\rv))\cdot W(\loc,\pv)} \\&\le \expectdist{\rv}{h_2(\loc',f_{\tau}(\pv,\rv))\cdot W(\loc,\pv)} \\
	&= \ewt(h_2)(\loc,\pv)
	\end{align*}
\end{itemize}
Therefore, $\ewt(h_1)\le \ewt(h_2)$, hence it is monotone.
Then we prove upper continuity of $\ewt$. Choose any increasing chain $h_0\sle h_1\sle h_2\sle \cdots$ and do another case analysis on $(\loc,\pv)$:
\begin{itemize}
	\item If $\loc = \lout$, then 
	\[
	\ewt(\mathop{\bigsqcup}\limits_{n\ge 0}\left\{h_n\right\})(\loc,\pv)=1=\mathop{\bigsqcup}\limits_{n\ge 0}\left\{\ewt(h_n)\right\}(\loc,\pv).
	\]
	\item Otherwise, for a unique transition $\tau = \langle \loc, \phi_{\tau}, f_{\tau} \rangle$ such that $\pv\models\phi_{\tau}$:
	\begin{align*}
	&\ewt(\mathop{\bigsqcup}\limits_{n\ge 0}\left\{h_n\right\})(\loc,\pv) \\
	=& \expectdist{\rv}{ \wet_j(\pv,\rv) \cdot (\mathop{\bigsqcup}\limits_{n\ge 0}\left\{h_n\right\})(\loc',f_{\tau}(\pv,\rv))} \\
	=& \expectdist{\rv}{\mathop{\text{sup}}\limits_{n\ge 0}\left\{h_n (\loc',f_{\tau}(\pv,\rv))\right\}} \\
	=& \expectdist{\rv}{\mathop{\text{lim}}\limits_{n\to \infty}\left\{h_n (\loc',f_{\tau}(\pv,\rv))\right\}} \\
	\overset{\mathrm{MCT}}{=} &\mathop{\text{lim}}\limits_{n\to \infty} \expectdist{\rv}{h_n (\loc',f_{\tau}(\pv,\rv))} \\
	=& \mathop{\text{lim}}\limits_{n\to \infty} \ewt(h_n)(\loc,\pv)\\
	= &\sup_{n\ge 0}\left\{\ewt(h_n)(\loc,\pv)\right\}\\
	= &\mathop{\bigsqcup}\limits_{n\ge 0}\left\{\ewt(h_n)\right\}(\loc,\pv)
	\end{align*}
\end{itemize}
The ``MCT'' above denotes the monotone convergence theorem. A similar argument establishes cocontinuity for integrable $h_0$ and decreasing chains.
\end{proof}

Now the uniqueness follows from Theorem~\ref{thm:kleene}. 

\begin{theorem}\label{thm:lfp}
Let $\Pi$ be a score-at-end WPTS that is score-bounded by a positive real $M>1$.Then the expected-weight function $\mbox{\sl ew}$ is the unique fixed-point of the higher-order function $\ewt$ on the complete lattice $(\mathcal{K}_{M},\le)$.  

\end{theorem}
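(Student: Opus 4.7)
My plan begins with existence: Proposition~\ref{prop:fixedpoint} already yields that $\mbox{\sl ew}$ is a fixed point of $\ewt$ in $\mathcal{K}_M$ (note that $\mbox{\sl ew}(\lout,\pv) = 1 \in [0,M]$ since $M > 1$, and $|\mbox{\sl ew}| \le M$ elsewhere because $\Pi$ is score-bounded by $M$). For uniqueness, I would show that every fixed point $h \in \mathcal{K}_M$ of $\ewt$ coincides with $\mbox{\sl ew}$ by iterating the identity $h = \ewt^n(h)$ and exploiting the score-at-end property together with AST to kill any non-terminating contribution. Equivalently, one could invoke Kleene's Theorem~\ref{thm:kleene} (applicable thanks to the continuity and cocontinuity from Proposition~\ref{prop:continuity}) to identify $\lfp\ \ewt$ and $\gfp\ \ewt$ as limits of iterates from $\bot$ and $\top$, and then argue that both limits agree with $\mbox{\sl ew}$; I prefer the direct argument because it handles both extremes simultaneously.

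The key intermediate lemma I would establish is an \emph{unrolling identity}: for every $h \in \mathcal{K}_M$, state $(\loc, \pv)$ and $n \in \Nset$,
\begin{align*}
\ewt^n(h)(\loc, \pv) \;=\; \expectdist{(\loc,\pv)}{[T \le n] \cdot \widehat{w}_T} \;+\; \expectdist{(\loc,\pv)}{[T > n] \cdot h(\widehat{\loc}_n, \widehat{\pv}_n)}.
\end{align*}
This is proved by induction on $n$: the base case $n=0$ is immediate, and the inductive step mirrors the one-step Tonelli--Fubini calculation from Proposition~\ref{prop:fixedpoint}, splitting the outer expectation according to whether the destination location of the next fork is $\lout$. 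The score-at-end assumption enters here to guarantee that every non-terminating fork contributes weight exactly $1$, so that the cumulative weight accumulated over the first $n$ non-terminating steps is identically $1$ almost surely; the clause $\ewt(h)(\lout,\pv) = 1$ absorbs the runs that have already terminated within the first $n$ iterates.

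Applying the unrolling identity to any fixed point $h$, the first summand converges upward to $\expectdist{(\loc,\pv)}{\widehat{w}_T} = \mbox{\sl ew}(\loc,\pv)$ by the monotone convergence theorem (the truncated weights $\widehat{w}^n$ from Proposition~\ref{prop:fixedpoint} rise to $\widehat{w}_T$ pointwise by AST), while the residual summand is bounded in absolute value by $M \cdot \probm_{(\loc,\pv)}(T > n)$ and hence vanishes by AST. Letting $n \to \infty$ gives $h(\loc,\pv) = \mbox{\sl ew}(\loc,\pv)$ pointwise, completing the uniqueness claim.

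The main obstacle is the careful formalisation of the unrolling identity, specifically pinpointing where the score-at-end hypothesis enters to ensure that no multiplicative weight is collected before termination. Without this hypothesis, the residual term would carry extra multiplicative factors inside the expectation, and plain AST together with the uniform bound $|h| \le M$ would no longer suffice to kill it; a finer integrability argument in the spirit of the OST variant of Theorem~\ref{thm:ost-variant} would be required instead, which is precisely why Section~\ref{sec:ostapproach} develops separate machinery for score-recursive programs.
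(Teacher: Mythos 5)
Your proof is correct and takes essentially the same route as the paper: the paper invokes Kleene's theorem (via Proposition~\ref{prop:continuity}) to express $\lfp\,\ewt$ and $\gfp\,\ewt$ as limits of the iterates $\ewt^n(\bot)$ and $\ewt^n(\top)$, computes these as $\expectdist{\Xi}{W\cdot[T\le n]}\mp M\cdot\probm(T>n)$, and uses AST to kill the residual so that $\lfp=\gfp$; your unrolling identity is precisely that computation carried out for an arbitrary $h\in\mathcal{K}_M$ and then applied directly to a fixed point. The only cosmetic point is that your identity at $n=0$ requires $h(\lout,\cdot)=1$, which any fixed point satisfies since $\ewt(h)(\lout,\cdot)\equiv 1$.
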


\begin{proof}
The proof follows similar arguments in \cite[Theorem 4.4]{DBLP:conf/pldi/WangS0CG21}. By Proposition~\ref{prop:continuity}, we have that for every state $\Xi=(\loc,\pv)$, 

\begin{itemize}
\item $\lfp\ \ewt(\loc, \pv) = \lim\limits_{n\rightarrow\infty} \ewt^n(\bot)(\loc, \pv)$, and 
\item $\gfp\ \ewt(\loc, \pv) = \lim\limits_{n\rightarrow\infty} \ewt^n(\top)(\loc, \pv)$. 
\end{itemize}

By the definition of $\ewt_M^n$ and Proposition~\ref{prop:continuity}, we have that 

\begin{itemize}
\item $\ewt^n(\bot)(\loc, \pv) = \expectdist{\Xi}{W\cdot [T\le n]}-M\cdot\mathbb{P}(T>n)$, and 
\item $\ewt^n(\top)(\loc, \pv) = \expectdist{\Xi}{W\cdot [T\le n]}+M\cdot \mathbb{P}(T>n)$. 
\end{itemize}

Recall that we assume the underlying PTS to be almost-surely terminating. Hence, $\lim\limits_{n\rightarrow\infty} \mathbb{P}(T >n) = \mathbb{P}(T=\infty)= 0$. 
It follows that $\lfp\ \ewt(\loc, \pv) = \gfp\ \ewt(\loc, \pv)$, i.e., the fixed point is unique. 
\end{proof}

Theorem~\ref{thm:fix-point-bounds} (Fixed-Point Approach)

$\llbracket \Pi\rrbracket_{\valin} (\Rset^{|\pvars|})\le h(\lin,\valin)$ (resp. $\llbracket \Pi\rrbracket_{\valin} (\Rset^{|\pvars|})\ge h(\lin,\valin)$) for any bounded PUWF (resp. PLWF) $h$ over $\Pi$ and initial state $(\lin,\valin)$.

\begin{proof}
Let $M>1$ be a bound for a PUWF $h$. By \cref{thm:lfp}, the expected weight function $\mbox{\sl ew}_\Pi$ is the unique fixed point of $\ewt_\Pi$. Note that $\mbox{\sl ew}_\Pi(\lin,\valin)=\llbracket \Pi \rrbracket_{\valin}(\Rset^{|\pvars|})$. Then by applying Tarski's Fixed Point Theorem (\cref{thm:tarski}) and the definition of potential weight functions (\cref{def:puwf}), the PUWF satisfying the prefixed-point conditions can serve as the upper bound for the expected weight $\llbracket \Pi \rrbracket_{\valin}(\Rset^{|\pvars|})$ for any initial state $(\lin,\valin)$. The proof for a PLWF is completely dual. 
\end{proof}

\subsection{Classical OST}~\label{classical_OST}
Optional Stopping Theorem (OST) is a classical theorem in martingale theory that characterizes the relationship between the expected values initially and at a stopping time in a supermartingale. Below we present the classical form of OST.

\begin{theorem}[Optional Stopping Theorem (OST) \cite{williams1991probability}]
Let $\{X_n\}_{n=0}^\infty$ be a supermartingale adapted to a filtration $\mathcal{F}=\{\mathcal{F}_n\}_{n=0}^\infty$, and $\kappa$ be a stopping time w.r.t. the filtration $\mathcal{F}$. 
Then the following condition is sufficient to ensure that $\expv\left(|X_\kappa|\right)<\infty$ and  
$\expv\left(X_\kappa\right)\le\expv(X_0)$:
\begin{itemize}
\item (\emph{almost-sure termination}) $\expv(\kappa)<\infty$, and
\item (\emph{bounded difference}) there exists a constant $C>0$ such that for all $n\ge 0$, $|X_{n+1}-X_n|\le C$ holds almost surely.
\end{itemize}	
	
\end{theorem}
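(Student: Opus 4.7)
The plan is to apply the standard stopped-process argument: truncate the stopping time at a deterministic level $n$, use the fact that $X_{n\wedge\kappa}$ inherits the supermartingale property, and then pass to the limit $n\to\infty$ via the dominated convergence theorem. The crux will be exhibiting an integrable dominating random variable, for which the bounded-difference hypothesis combined with $\expv(\kappa)<\infty$ is tailored exactly.

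First I would define $Y_n := X_{n\wedge\kappa}$ and invoke the classical stopping lemma, which guarantees that $\{Y_n\}_{n=0}^\infty$ is again a supermartingale adapted to $\{\mathcal{F}_n\}_{n=0}^\infty$. This immediately yields the finite-horizon inequality $\expv(Y_n)\le \expv(Y_0)=\expv(X_0)$ for every $n\in\Nset$. Next I would telescope the increments: since $|X_{k+1}-X_k|\le C$ almost surely for all $k$, it follows almost surely that
\begin{equation*}
|Y_n - X_0| \;=\; \Bigl|\sum_{k=0}^{(n\wedge\kappa)-1}(X_{k+1}-X_k)\Bigr| \;\le\; C\cdot(n\wedge\kappa) \;\le\; C\cdot\kappa.
\end{equation*}
Hence $|Y_n|\le |X_0| + C\cdot\kappa$, and the right-hand side is an integrable random variable because $\expv(|X_0|)<\infty$ (inherited from the supermartingale definition) and $\expv(\kappa)<\infty$ by hypothesis.

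Having produced this uniform integrable dominating function, I would then take $n\to\infty$. Since $\kappa<\infty$ almost surely (a consequence of $\expv(\kappa)<\infty$), the sequence $Y_n = X_{n\wedge\kappa}$ converges pointwise almost surely to $X_\kappa$. The dominated convergence theorem therefore applies to give both $\expv(|X_\kappa|)\le \expv(|X_0|)+C\cdot\expv(\kappa)<\infty$, which establishes integrability, and
\begin{equation*}
\expv(X_\kappa) \;=\; \lim_{n\to\infty}\expv(Y_n) \;\le\; \expv(X_0),
\end{equation*}
which is the desired supermartingale stopping inequality.

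The only delicate point—the main obstacle, such as it is—lies in justifying the dominating bound: one must be a little careful that the telescoped expression is valid on the event $\{\kappa=\infty\}$ as well, but this event has probability zero under $\expv(\kappa)<\infty$, so all almost-sure statements go through. Aside from this, every step is routine and appeals only to standard facts (the stopping lemma and DCT) from discrete-time martingale theory as presented in, e.g., Williams' textbook cited in the paper.
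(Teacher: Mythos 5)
Your proof is correct and is the standard argument for this theorem. The paper itself does not prove this classical result (it only cites Williams), but its proof of the OST variant (Theorem~\ref{thm:ost-variant}) in Appendix~\ref{app:ost-variant-proof} uses exactly the same strategy you do---telescoping the increments of the stopped process $X_{\kappa\wedge n}$ to build an integrable dominating random variable and then passing to the limit via dominated convergence---with your bound $|X_0|+C\cdot\kappa$ replaced there by a sum controlled via the exponential tail of $\kappa$. So your approach is essentially the paper's, specialised to the bounded-difference hypothesis.
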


\subsection{Proof for the OST Variant}\label{app:ost-variant-proof}

Theorem~\ref{thm:ost-variant} (OST Variant)
Let $\{X_n\}_{n=0}^\infty$ be a supermartingale adapted to a filtration $\mathcal{F}=\{\mathcal{F}_n\}_{n=0}^\infty$, and $\kappa$ be a stopping time w.r.t. the filtration $\mathcal{F}$. 
Suppose that there exist positive real numbers $b_1,b_2,c_1,c_2,c_3$ such that $c_2>c_3$ and
\begin{itemize}
\item[(A1)] $\probm(\kappa>n) \leq c_1 \cdot e^{-c_2 \cdot n}$ for sufficiently large $n \in \Nset$, and
\item[(A2)] for all $n \in \Nset$, $\left\vert X_{n+1}-X_n \right\vert \le b_1\cdot n^{b_2}\cdot e^{c_3\cdot n}$ holds almost surely. 
\end{itemize}
Then we have that $\expv\left(|X_\kappa|\right)<\infty$ and $\expv\left(X_\kappa\right)\le\expv(X_0)$.

\begin{proof}
 For every $n\in\Nset_0$,
	\begin{eqnarray*}
		\left|X_{\kappa\wedge n}\right|&=& \left|X_0+\sum_{k=0}^{\kappa\wedge n-1} \left(X_{k+1}-X_k\right)\right| \\
		&=& \left|X_0+\sum_{k=0}^\infty \left(X_{k+1}-X_k\right)\cdot \mathbf{1}_{\kappa>k\wedge n>k}\right|\\
		&\le& \left|X_0\right|+\sum_{k=0}^\infty \left|\left(X_{k+1}-X_k\right)\cdot \mathbf{1}_{\kappa>k\wedge n>k}\right| \\
		&\le& \left|X_0\right|+\sum_{k=0}^\infty \left|\left(X_{k+1}-X_k\right)\cdot \mathbf{1}_{\kappa>k}\right|\enskip. \\
	\end{eqnarray*}
	
	Then
	\begin{eqnarray*}
		& &   \expv\left(\left|X_0\right|+\sum_{k=0}^\infty \left|\left(X_{k+1}-X_k\right)\cdot \mathbf{1}_{\kappa>k}\right|\right) \\
		&=& \mbox{(By Monotone Convergence Theorem)} \\
		& & \expv\left(\left|X_0\right|\right)+\sum_{k=0}^\infty \expv\left(\left|\left(X_{k+1}-X_k\right)\cdot \mathbf{1}_{\kappa>k}\right|\right) \\
		&=& \expv\left(\left|X_0\right|\right)+\sum_{k=0}^\infty \expv\left(\left|X_{k+1}-X_k\right|\cdot \mathbf{1}_{\kappa>k}\right) \\
		&\le & \expv\left(\left|X_0\right|\right)+\sum_{k=0}^\infty \expv\left( b_1\cdot k^{b_2}\cdot e^{c_3\cdot k}\cdot \mathbf{1}_{\kappa>k}\right) \\
		&=& \expv\left(\left|X_0\right|\right)+\sum_{k=0}^\infty b_1 \cdot k^{b_2} \cdot e^{c_3\cdot k}\cdot \probm\left(\kappa>k\right) \\
		&\le & \expv\left(\left|X_0\right|\right)+\sum_{k=0}^\infty b_1 \cdot k^{b_2}\cdot e^{c_3\cdot k} \cdot c_{1}\cdot e^{-c_{2}\cdot k} \\
		&=& \expv\left(\left|X_0\right|\right)+ b_1 \cdot c_1 \cdot \sum_{k=0}^\infty   k^{b_2}\cdot e^{-(c_2-c_3)\cdot k} \\
		&<& \infty\enskip.
	\end{eqnarray*}
	where the first inequality is obtained by Condition (A2), the second inequality is derived from Condition (A1), and the third inequality stems from the fact that $c_2>c_3$ in the definition.

	Therefore, by Dominated Convergence Theorem
	and the fact that $X_\kappa=\lim\limits_{n\rightarrow\infty} X_{\kappa\wedge n}$ a.s.,
	\[
	\expv\left(X_\kappa\right)=\expv\left(\lim\limits_{n\rightarrow\infty} X_{\kappa\wedge n}\right)=\lim\limits_{n\rightarrow\infty}\expv\left(X_{\kappa\wedge n}\right)\enskip.
	\]
	Finally, the result follows from properties for the stopped process $\{X_{\kappa\wedge n}\}_{n\in\Nset_0}$ that 
	\[
	\expv\left(X_\kappa\right) \le \expv\left(X_0\right)\enskip.
	\]
\end{proof}

\subsection{Proofs for Our OST-Based Approach}\label{app:ost}

\noindent\textbf{Theorem}~\ref{thm:puwf-normalizing}. (OST Approach)
Let $\Pi$ be a bounded-update score-recursive WPTS. 
Suppose that there exist real numbers $c_1>0$ and $c_2>c_3>0$ such that 
\begin{itemize}
\item[(E1)] $\probm(T>n) \leq c_1 \cdot e^{-c_2 \cdot n}$ for sufficiently large $n\in\Nset$, and 
\item[(E2)] for each score function $\wet$ in $\Pi$, we have $|\wet|\le e^{c_3}$. 
\end{itemize}
 Then for any polynomial PUWF (resp. PLWF) $h$ over $\Pi$, we have that $\llbracket \Pi\rrbracket_{\valin} (\Rset^{|\pvars|})\le h(\lin,\valin)$ (resp. $\llbracket \Pi\rrbracket_{\valin} (\Rset^{|\pvars|})\ge h(\lin,\valin)$) for any initial state $(\lin,\valin)$, respectively.  

\begin{proof}
\textbf{Part 1. Upper bounds.}  Consider the WPTS $\Pi$ has a $d$-degree polynomial PUWF $h$ (see~\cref{def:puwf}).
	Define the stochastic process $\{X_n\}_{n=0}^\infty$ as $X_n:=h(\loc_n,\pv_n)$ where $(\loc_n,\pv_n)$ is the program state at the $n-$th step of a program run. Then construct a stochastic process $\{Y_n\}_{n=0}^\infty$ such that $Y_n:=X_n\cdot \prod_{i=0}^{n-1} W_i$ where $W_i$ is the weight at the $i-$th step of the program run. According to Condition (C1), we have that $\expect{X_{n+1}\cdot W_n|\mathcal{F}_n}\le X_n$. Therefore, by the “take out what is known” property of conditional
	expectation (see \cite{williams1991probability}), it follows that
	
	\begin{eqnarray*}
		& &\expect{X_{n+1}\cdot \prod_{i=0}^{n}W_i|\mathcal{F}_n} \le X_n\cdot \prod_{i=0}^{n-1} W_i \\
		&\Leftrightarrow& \expect{Y_{n+1}|\mathcal{F}_n}\le \expect{Y_n}, 
	\end{eqnarray*}
	which means that $\expect{Y_{n+1}}\le \expect{Y_n}$ from the basic property of conditional expectation. By an easy induction on $n$, we have that $\expect{Y_n}\le \expect{Y_0}<\infty$ for all $n\ge 0$, thus the conditional expectation is also taken in the normal sense as each $Y_n$ is indeed integrable. Hence, $\{Y_n\}_{n=0}^\infty$ is a supermartingale. Moreover, we have from the bounded-update property that $|X_{n+1}|\le \zeta \cdot (n+1)^d$ for a real number $\zeta>0$. By definition, we obtain that for sufficiently large $n$,
	\begin{eqnarray*}
		\left|Y_{n+1}-Y_n\right|&=&  \left|X_{n+1}\cdot \prod_{i=0}^{n} W_i-X_n\cdot \prod_{i=0}^{n-1} W_i \right|  \\
		&\le& \left\vert X_{n+1}\cdot \prod_{i=0}^{n} W_i\right\vert+ \left|X_n\cdot \prod_{i=0}^{n-1} W_i\right|\\
		&<& e^{c_3\cdot n}\cdot (|X_{n+1}|+|X_n|) \\
		&\le& e^{c_3\cdot n}\cdot [\zeta \cdot (n+1)^d+\zeta \cdot n^d] \\
		&\le& \lambda \cdot n^d\cdot e^{c_3\cdot n}
	\end{eqnarray*}
	where the first inequality is induced by the triangle inequality, and the second inequality is derived from the bounded stepwise weight condition such that each $W_i\in [0,e^{c_3}]$ and the fact $W_0=1$.
	By applying the OST variant (Theorem \ref{thm:ost-variant}), we obtain that $\expect{Y_T}\le \expect{Y_0}$. By definition and Condition (C2) in \cref{def:puwf},
	\begin{eqnarray*}
		Y_T&=&h(\loc_T,\pv_T)\cdot \prod_{i=0}^{T-1} W_i \\
		&=& h(\lout,\pv_T)\cdot  \prod_{i=0}^{T-1} W_i \\
		&=& \prod_{i=0}^{T-1} W_i \\
	\end{eqnarray*}  
	Finally, we have that $\llbracket \Pi \rrbracket_{\valin}(\Rset^{|\pvars|}) =\expect{\prod_{i=0}^{T-1} W_i}\le \expect{Y_0}=h(\lin,\valin)$.

\textbf{Part 2. Lower bounds.}  Consider the WPTS $\Pi$ has a $d$-degree polynomial PLWF $h$ (see~\cref{def:puwf}).
	Define the stochastic process $\{X_n\}_{n=0}^\infty$ as $X_n:=h(\loc_n,\pv_n)$ where $(\loc_n,\pv_n)$ is the program state at the $n-$th step of a program run. Then construct a stochastic process $\{-Y_n\}_{n=0}^\infty$ such that $-Y_n:=-X_n\cdot \prod_{i=0}^{n-1} W_i$ where $W_i$ is the weight at the $i-$th step of the program run. According to Condition (C1'), we have that $\expect{-X_{n+1}\cdot W_n|\mathcal{F}_n}\le -X_n$. Therefore, by the “take out what is known” property of conditional
	expectation (see \cite{williams1991probability}), it follows that
	
	\begin{eqnarray*}
		& &\expect{-X_{n+1}\cdot \prod_{i=0}^{n}W_i|\mathcal{F}_n} \le -X_n\cdot \prod_{i=0}^{n-1} W_i \\
		&\Leftrightarrow& \expect{-Y_{n+1}|\mathcal{F}_n}\le \expect{-Y_n}, 
	\end{eqnarray*}
	which means that $\expect{-Y_{n+1}}\le \expect{-Y_n}$ from the basic property of conditional expectation. By an easy induction on $n$, we have that $\expect{-Y_n}\le \expect{-Y_0}<\infty$ for all $n\ge 0$, thus the conditional expectation is also taken in the normal sense as each $Y_n$ is indeed integrable. Hence, $\{-Y_n\}_{n=0}^\infty$ is a supermartingale. Moreover, we have from the bounded-update property that $|X_{n+1}|\le \zeta \cdot (n+1)^d$ for a real number $\zeta>0$. By definition, we obtain that for sufficiently large $n$,
	\begin{eqnarray*}
		\left|-Y_{n+1}-(-Y_n)\right| &=& \left|X_{n+1}\cdot \prod_{i=0}^{n} W_i-X_n\cdot \prod_{i=0}^{n-1} W_i \right|  \\
		&\le& \left\vert X_{n+1}\cdot \prod_{i=0}^{n} W_i\right\vert+ \left|X_n\cdot \prod_{i=0}^{n-1} W_i\right|\\
		&<& e^{c_3\cdot n}\cdot (|X_{n+1}|+|X_n|) \\
		&\le& e^{c_3\cdot n}\cdot [\zeta \cdot (n+1)^d+\zeta \cdot n^d] \\
		&\le& \lambda \cdot n^d\cdot e^{c_3\cdot n}
	\end{eqnarray*}
	where the first inequality is induced by the triangle inequality, and the second inequality is derived from the bounded stepwise weight condition such that each $W_i\in [0,e^{c_3}]$ and the fact $W_0=1$.
	By applying the variant of Optional Stopping Theorem (Theorem \ref{thm:ost-variant}), we obtain that $\expect{-Y_T}\le \expect{-Y_0}$, so $\expect{Y_T}\ge \expect{Y_0}$. By definition and Condition (C2') in \cref{def:puwf},
	\begin{eqnarray*}
		-Y_T&=&-h(\loc_T,\pv_T)\cdot \prod_{i=0}^{T-1} W_i \\
		&=& -h(\lout,\pv_T)\cdot  \prod_{i=0}^{T-1} W_i \\
		&=& -\prod_{i=0}^{T-1} W_i \\
	\end{eqnarray*}  
	Finally, we have that $\llbracket \Pi \rrbracket_{\valin}(\Rset^{|\pvars|})=\expect{\prod_{i=0}^{T-1} W_i}\ge \expect{Y_0}=h(\lin,\valin)$.
\end{proof}

\section{Supplementary Material for Section~\ref{sec:algorithm}}\label{app:sec5}

\subsection{Error Analysis for Polynomial Approximation}\label{app:error-analysis}

\begin{theorem}\label{app:score-error}
Let $\Pi$ be a score-at-end WPTS with score functions $g_1,\dots,g_k$ on the transitions to the termination location $\lout$. Suppose we have a non-negative real number $\epsilon$ and polynomials $g'_1,\dots,g'_k$ such that for all $x\in \mbox{\sl exit}(\Pi)$ and $1\le j\le k$, $|g'_j(x)-g_j(x)|\le \epsilon$. Then we have that $| \llbracket \Pi \rrbracket_{\pv}(\Rset^{|\pvars|}) - \llbracket \Pi' \rrbracket_{\pv}(\Rset^{|\pvars|}) | \le \epsilon$ for all initial program valuation $\pv$, where $\Pi'$ is obtained from $\Pi$ by replacing each $g_j$ ($1\le j\le k$) with $g'_j$. 
\end{theorem}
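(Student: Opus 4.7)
The plan is to exploit the score-at-end structure to reduce the expected weight to a single expectation over the program valuation at termination, and then apply the pointwise bound $|g_j(x)-g'_j(x)| \le \epsilon$ via the triangle inequality. First I would observe that, because $\Pi$ is score-at-end, every transition other than the ones into $\lout$ has score function constantly equal to $1$. Hence, for any program run $\omega = \{(\loc_n,\pv_n,w_n)\}_{n\ge 0}$, the cumulative multiplicative weight at termination satisfies $\widehat{w}_T(\omega) = g_{j(\omega)}(\widehat{\pv}_T(\omega))$, where $j(\omega) \in \{1,\dots,k\}$ is the index of the terminal transition taken in $\omega$ (note that the update on this fork is the identity, so the valuation does not change at the terminal step and $\widehat{\pv}_T$ is the valuation entering $\lout$, which lies in $\mbox{\sl exit}(\Pi)$ by definition).

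Next I would set up a coupling between $\Pi$ and $\Pi'$. Since $\Pi'$ is obtained from $\Pi$ only by swapping the score functions on the terminal transitions, the guard conditions, destination locations, fork probabilities, and update functions are all unchanged; in particular, the probability distribution over locations and program valuations along a run is identical for $\Pi$ and $\Pi'$ (score values multiply into $\widehat{w}$ but do not feed back into control flow or into sampling). Therefore, both expected weights can be expressed as integrals against a common law: writing $\probm_{\pv}$ for the common probability measure over runs starting from $\pv$,
\begin{align*}
\llbracket \Pi \rrbracket_{\pv}(\Rset^{|\pvars|}) &= \int g_{j(\omega)}(\widehat{\pv}_T(\omega))\, \probm_{\pv}(\mathrm{d}\omega), \\
\llbracket \Pi' \rrbracket_{\pv}(\Rset^{|\pvars|}) &= \int g'_{j(\omega)}(\widehat{\pv}_T(\omega))\, \probm_{\pv}(\mathrm{d}\omega),
\end{align*}
where the almost-sure termination assumption (inherited from score-at-end) ensures $T < \infty$ a.s.\ and the integrands are well-defined on a set of full measure.

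Finally, I would conclude by subtracting and applying the triangle inequality together with the hypothesis $|g_j(x) - g'_j(x)| \le \epsilon$ for all $x \in \mbox{\sl exit}(\Pi)$ and all $j$:
\begin{align*}
\bigl| \llbracket \Pi \rrbracket_{\pv}(\Rset^{|\pvars|}) - \llbracket \Pi' \rrbracket_{\pv}(\Rset^{|\pvars|}) \bigr|
&\le \int \bigl| g_{j(\omega)}(\widehat{\pv}_T(\omega)) - g'_{j(\omega)}(\widehat{\pv}_T(\omega)) \bigr| \, \probm_{\pv}(\mathrm{d}\omega) \\
&\le \epsilon \cdot \probm_{\pv}(T < \infty) \;=\; \epsilon.
\end{align*}

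\textbf{Main obstacle.} The only non-routine point is justifying the coupling rigorously: one needs to argue that the marginal distribution of $(\widehat{\loc}_n, \widehat{\pv}_n)_{n\ge 0}$ does not depend on the score functions, which follows from the fact that in the WPTS semantics of Section~\ref{sec:prelim} the weight component $\widehat{w}_n$ is a purely multiplicative bookkeeping variable that neither gates transitions nor is read by any sampling distribution. A secondary technical point is verifying that $\widehat{\pv}_T \in \mbox{\sl exit}(\Pi)$ almost surely, so that the hypothesis on $|g_j - g'_j|$ applies; this is immediate from the definition of $\mbox{\sl exit}(\Pi)$ as the set of reachable terminal valuations.
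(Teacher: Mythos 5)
Your proposal is correct, but it takes a genuinely different route from the paper's proof. The paper argues via its fixed-point machinery: by Theorem~\ref{thm:lfp} the expected-weight function is the limit of the Kleene iterates $\ewt^n_{\Pi}(\bot)$, and a straightforward induction on $n$ (using $|g'_j-g_j|\le\epsilon$ and the fact that all non-terminal score functions equal $1$, so the one-step operator does not amplify the error) shows $|\ewt^n_{\Pi}(\bot)-\ewt^n_{\Pi'}(\bot)|\le\epsilon$ for all $n$, whence the bound follows in the limit. You instead argue directly on the run semantics: you observe that for a score-at-end WPTS the terminal weight is exactly $g_{j(\omega)}(\widehat{\pv}_T(\omega))$, couple $\Pi$ and $\Pi'$ through their common law on location/valuation trajectories (legitimate, since in the transition-based semantics the weight is pure bookkeeping and never gates control flow or sampling), and conclude by the triangle inequality plus $\probm_{\pv}(T<\infty)=1$. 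Your version is more elementary and makes transparent why the error is exactly additive in $\epsilon$ rather than amplified; the cost is that you must justify the coupling and the identity $\widehat{w}_T=g_{j(\omega)}(\widehat{\pv}_T)$ explicitly, which you do. The paper's version buys uniformity with the rest of Appendix~B (everything is phrased through $\ewt$ and its Kleene iterates, so no separate measure-theoretic argument about score-independence of the run law is needed), though it leaves the inductive step as an exercise. Both arguments rely on the same two structural facts — almost-sure termination and the score being applied only once, at exit — so neither is more general than the other.
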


\begin{proof}
By \cref{thm:lfp}, we have that 
$\lim\limits_{n\rightarrow \infty} \mbox{\sl ewt}^n_{\Pi}(\bot)=\mbox{\sl ew}_{\Pi}$ and 
$\lim\limits_{n\rightarrow \infty} \mbox{\sl ewt}^{n}_{\Pi'}=\mbox{\sl ew}_{\Pi'}$. 
Since $|g'_j-g_j|\le \epsilon$ for every $j$, one can perform a straightforward induction on $n$ to prove that for all $n\ge 0$, it holds that that $|\mbox{\sl ewt}^n_{\Pi}(\bot) - \mbox{\sl ewt}^n_{\Pi'}(\bot)|\le  \epsilon$. By the definition of $\llbracket \Pi \rrbracket_{\pv}(\Rset^{|\pvars|})$, we have that $| \llbracket \Pi \rrbracket_{\pv}(\Rset^{|\pvars|}) - \llbracket \Pi' \rrbracket_{\pv}(\Rset^{|\pvars|}) | \le \epsilon$. 
\end{proof}

\subsection{Truncation over WPTS's}\label{app:truncaion}
In the following, We formally define the truncation operation and supplement some descriptions and proofs for \textbf{Stage 3}.

A truncation operation for a WPTS is to restrict the value of every program variable in the WPTS to a prescribed bounded range. We consider that a bounded range for a program variable could be either $[-R,R]$ ($R> 0$), or $[0,R], [-R,0]$ if the value of the program variable is guaranteed to be non-negative or non-positive. 

To present our truncation operation, we define the technical notions of truncation function and truncation approximations.   
A \emph{truncation function} 
$\trunc$ is a function that maps every program variable $x\in\pvars$ to a bounded interval $\trunc(x)$ in $\Rset$ that specifies the bounded range of the variable $x$. We denote by $\Phi_\trunc$ the formula $\bigwedge_{x\in\pvars} x\in \trunc(x)$ for a truncation function $\trunc$. 
A \emph{truncation approximation} is a function $\calM:\mathbb{R}^{|\pvars|}\to [0,\infty)$ such that each $\calM(\pv)$ ($\pv\in \mathbb{R}^{|\pvars|}$) is intended to be an over- or under-approximation of the expected weight $\llbracket \Pi\rrbracket_{\pv} (\Rset^{|\pvars|})$ outside the bounded range specified by $\Phi_\trunc$. The truncation operation is given by the following definition. 

Our main theorem shows that by choosing an appropriate truncation approximation $\calM$ in the truncation, one can obtain upper/lower approximation of the original WPTS. 

\begin{theorem}\label{thm:upperlower}
Let $\Pi$ be a WPTS in the form of \eqref{eq:wpts}, $\trunc$ a truncation function and $\calM$ a bounded truncation approximation.
Suppose that the following condition ($\ast$) holds:
\begin{itemize}
\item[($\ast$)] for each fork $F^{M,\sharp}=\langle \sharp, p, \mbox{\sl upd}, \calM\rangle$ in the truncated WPTS $\Pi_{\trunc,\calM}$ that is derived from
some fork $F=\langle \loc', p, \mbox{\sl upd}, \wet\rangle$ with the source location $\loc$ in the original WPTS (see sentence (b) in Definition~\ref{def:truncation}),  we have that $\llbracket \Pi\rrbracket_{\pv}(\Rset^{|\pvars|})\le \calM(\pv)$ for all $\pv$ such that the state $(\loc,\pv)$ is reachable and $\pv\not\models\Phi_\trunc$. 
\end{itemize}
Then $\llbracket \Pi\rrbracket_{\valin} (\Rset^{|\pvars|})\le \llbracket \Pi_{\trunc,\calM}\rrbracket_{\valin}(\Rset^{|\pvars|})$ for all initial program valuations $\valin$. 
Analogously, if it holds the condition ($\star$) which is almost the same as ($\ast$) except for that ``$\llbracket \Pi\rrbracket_{\pv}(\Rset^{|\pvars|})\le \calM(\pv)$'' is replaced with ``$\llbracket \Pi\rrbracket_{\pv}(\Rset^{|\pvars|})\ge \calM(\pv)$'', then we have $\llbracket \Pi\rrbracket_{\valin} (\Rset^{|\pvars|})\ge \llbracket \Pi_{\trunc,\calM}\rrbracket_{\valin}(\Rset^{|\pvars|})$ for all initial program valuations $\valin$.
\end{theorem}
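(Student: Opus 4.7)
The plan is to prove the upper approximation statement (the lower one is dual by reversing inequalities throughout), by coupling $\Pi$ and $\Pi_{\trunc,\calM}$ on a common probability space and applying the strong Markov property at the first time the program valuation leaves the bounded region $\Phi_\trunc$.

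First I would realise $\Pi$ and $\Pi_{\trunc,\calM}$ on a single probability space in which both systems share the same sequences of fork choices and sampling valuations. By the first clause of the definition of $\transset_{\trunc,\calM}$, whenever the current program valuation satisfies $\Phi_\trunc$ the truncated system takes exactly the same fork as $\Pi$ does, so the coupled runs evolve identically until the first step at which the source valuation leaves $\Phi_\trunc$. Let $\tau := \min\{n \ge 0 \mid \hat{\pv}_n \not\models \Phi_\trunc\}$, with $\min\emptyset := \infty$; this is a stopping time with respect to the natural filtration. On $\{\tau=\infty\}$ the coupled runs coincide forever and produce the same cumulative weight $\hat{w}^\Pi_T$. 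On $\{\tau<\infty\}$ they agree up to step $\tau$, but then $\Pi_{\trunc,\calM}$ is forced through an $F^{\calM,\sharp}$-fork to the absorbing location $\sharp$ with multiplicative weight $\calM(\hat{\pv}_\tau)$, and is trapped there with weight $1$ thereafter; hence its final weight is $\hat{w}^\Pi_\tau \cdot \calM(\hat{\pv}_\tau)$. Taking expectations gives
\[
\llbracket \Pi_{\trunc,\calM}\rrbracket_{\valin}(\Rset^{|\pvars|}) = \expectdist{\valin}{[\tau=\infty]\cdot \hat{w}^\Pi_T + [\tau<\infty]\cdot \hat{w}^\Pi_\tau \cdot \calM(\hat{\pv}_\tau)}.
\]

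Next I would apply the strong Markov property of the Markov chain underlying $\Pi$ (cf.~Appendix~\ref{app:wpts-semantics}) at the stopping time $\tau$: conditional on $\mathcal{F}_\tau$ and on $\{\tau<\infty\}$, the remainder of the $\Pi$-run is distributed as a fresh run started from state $(\hat{\loc}_\tau, \hat{\pv}_\tau)$. Writing $\mbox{\sl ew}_\Pi(\loc,\pv)$ for the expected termination weight of $\Pi$ started from state $(\loc,\pv)$ (the natural extension of the notation of Section~\ref{sec:fixed-point}), this yields
\[
\llbracket \Pi\rrbracket_{\valin}(\Rset^{|\pvars|}) = \expectdist{\valin}{[\tau=\infty]\cdot \hat{w}^\Pi_T + [\tau<\infty]\cdot \hat{w}^\Pi_\tau \cdot \mbox{\sl ew}_\Pi(\hat{\loc}_\tau, \hat{\pv}_\tau)}.
\]
The condition $(\ast)$, read as $\mbox{\sl ew}_\Pi(\loc,\pv)\le \calM(\pv)$ at every reachable state that serves as the source of an $F^{\calM,\sharp}$-fork, bounds the second term pointwise on $\{\tau<\infty\}$. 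Substituting into the displayed identity for $\llbracket \Pi\rrbracket_{\valin}$ then produces exactly $\llbracket \Pi\rrbracket_{\valin}(\Rset^{|\pvars|})\le \llbracket \Pi_{\trunc,\calM}\rrbracket_{\valin}(\Rset^{|\pvars|})$, and the lower-bound case is obtained by the symmetric argument with the inequality in $(\star)$ reversed.

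The main obstacle I expect is making the strong Markov step fully rigorous, and in particular reconciling the notation in $(\ast)$. As written, $\llbracket\Pi\rrbracket_{\pv}$ is the expected weight starting from $(\lin,\pv)$ per Section~\ref{sec2:NPD}, whereas the argument needs the start-from-$(\loc,\pv)$ version, so the theorem has to be read as implicitly promoting the initial state to $(\loc,\pv)$ where $\loc$ is the source of the fork in question; this will need to be spelled out via the Markov-chain construction on $\Lambda$ of Appendix~\ref{app:wpts-semantics}. A secondary technical point is that on $\{\tau=\infty\}$ the original run may in principle fail to terminate, but since all weights are non-negative this requires only a routine monotone-convergence argument after first truncating the analysis to $\tau \wedge N$ and letting $N\to\infty$, using the boundedness of $\calM$ to keep the expectation defined throughout.
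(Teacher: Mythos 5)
Your proof is essentially correct but follows a genuinely different route from the paper's. The paper argues denotationally: it first shows (via \cref{thm:lfp} and the least-fixed-point characterisation of $\ewt$) that instantiating the truncation approximation with the exact expected-weight function $\mbox{\sl ew}_\Pi(\loc,-)$ reproduces $\Pi$ exactly, i.e.\ $\llbracket \Pi\rrbracket = \llbracket \Pi_{\trunc,\mbox{\sl ew}_\Pi}\rrbracket$, and then concludes $\llbracket \Pi_{\trunc,\mbox{\sl ew}_\Pi}\rrbracket \le \llbracket \Pi_{\trunc,\calM}\rrbracket$ by a monotone induction on the Kleene iterates $\ewt^n(\bot)$, using $\mbox{\sl ew}_\Pi \le \calM$ outside $B$. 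Your coupling-plus-strong-Markov argument replaces that order-theoretic induction with a pathwise decomposition at the exit time $\tau$; it is more operational and makes the ``why'' transparent, at the cost of the measure-theoretic care you already flag (strong Markov property, monotone convergence on $\{\tau=\infty\}$, and the promotion of $\llbracket\Pi\rrbracket_{\pv}$ to arbitrary start states $(\loc,\pv)$ — a notational looseness the paper shares). One detail you should repair: by Definition~\ref{def:truncation}(b), a fork whose destination is $\lout$ is \emph{not} redirected ($F^{\calM,\sharp}=F$ when $\loc'=\lout$), so on $\{\tau<\infty\}$ the truncated run's final weight is $\hat{w}^\Pi_\tau\cdot\calM(\hat{\pv}_\tau)$ only for the redirected forks; for the $\lout$-forks both systems still agree and contribute identically. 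This forces a case split in your displayed identity for $\llbracket \Pi_{\trunc,\calM}\rrbracket_{\valin}$, and means condition $(\ast)$ must really be read as bounding the expected continuation through each redirected fork rather than the aggregate $\mbox{\sl ew}_\Pi(\loc,\pv)$ — an imprecision present in the theorem statement itself and equally elided by the paper's own proof, so it does not disqualify your argument, but it should be spelled out.
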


\begin{proof}
We first prove that when every score function $\mathcal{M}$ in a $F^{\mathcal{M},\sharp}$ derived from a transition with source location $\loc$ is equal to the function $\ewt(\loc,-)$, we have that $\Pi_{\trunc,\mathcal{M}}$ is equal to $\Pi$. 
By \cref{thm:lfp}, the expected weight functions $\mbox{\sl ew}_\Pi$, $\mbox{\sl ew}_{\Pi_{\trunc,\mathcal{M}}}$ are the least fixed point of the higher-order operator $\ewt$ defined in \cref{def:ewt}. 
We prove that both $\mbox{\sl ew}_\Pi\le \mbox{\sl ew}_{\Pi_{\trunc,\mathcal{M}}}$ and  $\mbox{\sl ew}_\Pi\ge \mbox{\sl ew}_{\Pi_{\trunc,\mathcal{M}}}$ holds. Note that since we choose the scoring function to be the exact expected weight function of $\Pi$, it holds that 
$\mbox{\sl ew}_\Pi(-,\pv)=\mbox{\sl ew}_{\Pi_{\trunc,\mathcal{M}}}(-,\pv)$ for all program valuations outside $B$. Thus, the nontrivial part is to consider program valuations inside the truncated range. 

\begin{itemize}
	\item $\mbox{\sl ew}_\Pi\le \mbox{\sl ew}_{\Pi_{\trunc,\mathcal{M}}}$: To show that $\mbox{\sl ew}_\Pi\le \mbox{\sl ew}_{\Pi_{\trunc,\mathcal{M}}}$, it suffices to observe that $\mbox{\sl ew}_{\Pi_{\trunc,\mathcal{M}}}$ satisfies $\ewt_{\Pi}(\mbox{\sl ew}_{\Pi_{\trunc,\mathcal{M}}})=\mbox{\sl ew}_{\Pi_{\trunc,\mathcal{M}}}$. 
    Since $\mbox{\sl ew}_{\Pi}$ is the least fixed point of the higher order equation, we directly obtain that $\mbox{\sl ew}_\Pi\le \mbox{\sl ew}_{\Pi_{\trunc,\mathcal{M}}}$. 
	\item $\mbox{\sl ew}_\Pi\ge \mbox{\sl ew}_{\Pi_{\trunc,\mathcal{M}}}$: To show that $\mbox{\sl ew}_\Pi\ge \mbox{\sl ew}_{\Pi_{\trunc,\mathcal{M}}}$, it suffices to observe that 
	$\mbox{\sl ew}_{\Pi}$ (extended with the $\sharp$ location whose score function is $1$) satisfies the higher-order equation of $\Pi_{\trunc,\mathcal{M}}$. Thus, we directly have that $\mbox{\sl ew}_\Pi\ge \mbox{\sl ew}_{\Pi_{\trunc,\mathcal{M}}}$. 
\end{itemize}

Then we prove the theorem. We only prove the upper-bound case, since the lower-bound case can be proved similarly. The proof follows from \cref{thm:lfp}. Denote $\bot$ as the bottom element of the complete lattice 
$(\mathcal{K}_M, \le)$. Then by \cref{thm:lfp}, we have that $\lim\limits_{n\rightarrow \infty} \ewt_\Pi^n(\bot)=\mbox{\sl ew}_{\Pi}$ and $\lim\limits_{n\rightarrow \infty} \ewt_{\Pi_{\trunc,\mathcal{M}}}^{n}(\bot)=\mbox{\sl ew}_{\Pi_{\trunc,\mathcal{M}}}$. 
Since $\mbox{\sl ew}_{\Pi}(\pv)\le \mathcal{M}(\pv)$ for all $\pv\in \mbox{\sl exit}(\Pi)$ , one can perform a straightforward induction on $n$ that $\mbox{\sl ew}^{n,\mbox{\sl ew}_{\Pi}}_{\Pi_B}\le  \mbox{\sl ew}^{n,f}_{\Pi_B}$ for all $n$. 

\end{proof}

\vspace{2em}

In the case of OST, the proof is as follows.

\begin{theorem}\label{thm:ostupperlower}
Let 
$\Pi$ be a WPTS in the form of \eqref{eq:wpts}, $\trunc$ a truncation function and $\calM$ a polynomial truncation approximation.
Suppose that condition ($\ast$) as in Theorem~\ref{thm:upperlower} holds. 
Furthermore, suppose that the expected weight functions of the original and the truncated WPTS are both bounded by a polynomial in the program variables, and the WPTS has exponentially-decreasing concentration property for its termination time.
Then $\llbracket \Pi\rrbracket_{\valin} (\Rset^{|\pvars|})\le \llbracket \Pi_{\trunc,\calM}\rrbracket_{\valin}(\Rset^{|\pvars|})$ for all initial program valuations $\valin$. 
Analogously, if it holds the condition ($\star$) which is almost the same as ($\ast$) except for that ``$\llbracket \Pi\rrbracket_{\pv}(\Rset^{|\pvars|})\le \calM(\pv)$'' is replaced with ``$\llbracket \Pi\rrbracket_{\pv}(\Rset^{|\pvars|})\ge \calM(\pv)$'', then we have $\llbracket \Pi\rrbracket_{\valin} (\Rset^{|\pvars|})\ge \llbracket \Pi_{\trunc,\calM}\rrbracket_{\valin}(\Rset^{|\pvars|})$ for all initial program valuations $\valin$.
\end{theorem}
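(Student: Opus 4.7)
The plan is to mirror the structure of the proof of Theorem~\ref{thm:upperlower}, but to replace the Kleene/Tarski fixed-point argument with the OST approach of Theorem~\ref{thm:puwf-normalizing}, since in the score-recursive setting we no longer have bounded score nor almost-sure termination in the sense needed for Kleene's theorem. I would prove only the upper-bound direction; the lower case is completely dual, with every $\le$ replaced by $\ge$ and PUWF replaced by PLWF.

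The first step is to define the candidate state function $h : \Lambda \to \Rset$ by $h(\loc,\pv) := \llbracket \Pi_{\trunc,\calM}\rrbracket_{(\loc,\pv)}(\Rset^{|\pvars|})$, i.e.\ the expected weight of the truncated WPTS when started at $(\loc,\pv)$ with unit weight. The goal is to check that $h$ is a polynomial PUWF for $\Pi$, and then invoke Theorem~\ref{thm:puwf-normalizing}. Condition (C2) is immediate since $\lout$ is terminal with weight one in both $\Pi$ and $\Pi_{\trunc,\calM}$. The polynomial upper boundedness of $h$ follows directly from the hypothesis that the expected weight function of $\Pi_{\trunc,\calM}$ is polynomially bounded in the program variables.

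The heart of the argument is verifying the prefixed-point inequality (C1) on reachable non-terminating states $(\loc,\pv)$. I would split into two cases according to whether $\pv\models\Phi_\trunc$ or not. When $\pv\models\Phi_\trunc$, the outgoing forks of $\Pi$ and $\Pi_{\trunc,\calM}$ from $(\loc,\pv)$ coincide by construction of $\Pi_{\trunc,\calM}$, so by the strong Markov property (one-step unrolling of $h$) one has $\ewt_\Pi(h)(\loc,\pv) = \ewt_{\Pi_{\trunc,\calM}}(h)(\loc,\pv) = h(\loc,\pv)$, which yields (C1) with equality. When $\pv\not\models\Phi_\trunc$, each fork with destination $\lout$ contributes identically on both sides, while each fork with $\loc'_j\neq\lout$ is in $\Pi_{\trunc,\calM}$ redirected to $\sharp$ with score $\calM$, contributing $p_j\,\expectdist{\rv}{\calM(\pv,\rv)}$ to $h(\loc,\pv)$. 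On the $\Pi$ side, the same fork contributes $p_j\,\expectdist{\rv}{\wet_j(\pv,\rv)\cdot h(\loc'_j,\upd_j(\pv,\rv))}$; using condition $(\ast)$ (which upper-bounds the continuation of $\Pi$ by $\calM$ at such boundary valuations) together with the tower property of conditional expectation, each such term is dominated by its $\Pi_{\trunc,\calM}$ counterpart, giving $\ewt_\Pi(h)(\loc,\pv) \le h(\loc,\pv)$.

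Having established that $h$ is a polynomial PUWF of $\Pi$, the conclusion $\llbracket\Pi\rrbracket_{\valin}(\Rset^{|\pvars|}) \le h(\lin,\valin) = \llbracket\Pi_{\trunc,\calM}\rrbracket_{\valin}(\Rset^{|\pvars|})$ follows from Theorem~\ref{thm:puwf-normalizing}, for which the exponentially-decreasing tail of the termination time of $\Pi$ supplies (E1) and the bounded score values inside the loop body of $\Pi$ supply (E2). The main obstacle I anticipate is the careful handling of (C1) in the boundary case $\pv\not\models\Phi_\trunc$: although condition $(\ast)$ is essentially crafted exactly for this purpose, bridging its statement about $\llbracket \Pi\rrbracket_{\pv}$ with the required one-step inequality for $\ewt_\Pi(h)$ requires a clean application of the law of total expectation, and one must be careful to distinguish the role of $\calM$ as a score function of the redirected fork in $\Pi_{\trunc,\calM}$ from its role as an over-approximation of the tail of $\Pi$ — precisely this duality is what makes the coupling between the two WPTSs work.
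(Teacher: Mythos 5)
Your plan---exhibit $h(\loc,\pv):=\llbracket\Pi_{\trunc,\calM}\rrbracket_{(\loc,\pv)}(\Rset^{|\pvars|})$ as a PUWF of the \emph{original} WPTS $\Pi$ and invoke Theorem~\ref{thm:puwf-normalizing}---breaks at precisely the step you flagged as the main obstacle: verifying (C1) at reachable states $(\loc,\pv)$ with $\pv\not\models\Phi_\trunc$. For a redirected fork you need $\expectdist{\rv}{\wet_j(\pv,\rv)\cdot h(\loc'_j,\upd_j(\pv,\rv))}\le\calM(\pv)$, where $h$ at the destination is the \emph{truncated} system's expected weight. Condition $(\ast)$ only gives $\llbracket\Pi\rrbracket_{\pv}(\Rset^{|\pvars|})\le\calM(\pv)$, i.e.\ it controls the \emph{original} system's continuation; replacing it by the truncated system's value goes in the wrong direction (that the truncated value dominates the original one is the conclusion you are trying to prove), so the domination you assert does not follow from $(\ast)$ and the tower property. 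Worse, it can fail outright. Take one non-terminal location $\loc_0$, one variable $x$, $\Phi_\trunc\equiv x\in[0,1]$, all score functions $\equiv 1$: from $x\in[0,1]$ go to $\lout$ with probability $1/2$, to $x+2$ with probability $1/4$, to $x+4$ with probability $1/4$; from $x\in[2,3]$ return deterministically to $x-2$; from $x\in[4,5]$ return to $x-4$. Then $\mbox{\sl ew}_\Pi\equiv 1$, and $\calM\equiv 1$ on $[2,3]$, $\calM\equiv 3$ on $[4,5]$ satisfies $(\ast)$. But $h(\loc_0,x)=\tfrac12+\tfrac14\calM(x+2)+\tfrac14\calM(x+4)=\tfrac32$ for $x\in[0,1]$, so at the reachable state $(\loc_0,2.5)$ one gets $\ewt_\Pi(h)(\loc_0,2.5)=h(\loc_0,0.5)=\tfrac32>1=\calM(2.5)=h(\loc_0,2.5)$, violating (C1) even though the theorem's conclusion holds here. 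Your in-range case is fine (the forks coincide, so $\ewt_\Pi(h)=\ewt_{\Pi_{\trunc,\calM}}(h)=h$ there), but the certificate is destroyed by out-of-range states whose successors re-enter the truncation range.

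The paper avoids this by never comparing $\Pi$ and $\Pi_{\trunc,\calM}$ through a one-step prefixed-point inequality for the original system. It first shows that with the \emph{exact} truncation approximation $\calM^*=\mbox{\sl ew}_\Pi$ the truncated system reproduces $\mbox{\sl ew}_\Pi$, using the OST machinery together with the stated hypotheses (polynomial bounds on both expected-weight functions, bounded update, and exponentially decreasing $\probm(T>n)$) to justify $\lim_{n\to\infty}\ewt_\Pi^n(0)=\mbox{\sl ew}_\Pi$ via $\mathbb{E}_{\pv}(W\cdot[T>n])\le f(n)\cdot\probm(T>n)\to 0$. It then compares $\Pi_{\trunc,\calM^*}$ with $\Pi_{\trunc,\calM}$ by induction on the step-bounded expected weights: these two systems differ \emph{only} in the score assigned at exit states, so the pointwise comparison $\mbox{\sl ew}_\Pi\le\calM$ supplied by $(\ast)$ is exactly the ingredient needed, fork by fork. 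To repair your argument you would have to certify something for a pair of truncated systems (or otherwise decouple the roles of $\calM$ as exit score and as tail over-approximation), not a PUWF for $\Pi$ itself.
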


\begin{proof}
We first prove that when every score function $\mathcal{M}$ in a $F^{\mathcal{M},\sharp}$ derived from a transition with source location $\loc$ is equal to the function $\ewt(\loc,-)$, we have that $\Pi_{\trunc,\mathcal{M}}$ is equal to $\Pi$. 
By \cref{prop:fixedpoint}, the expected weight functions $\mbox{\sl ew}_\Pi$, $\mbox{\sl ew}_{\Pi_{\trunc,\mathcal{M}}}$
is a fixed point of the higher-order operator $\ewt$ defined in \cref{def:ewt}. 
We prove that both $\mbox{\sl ew}_\Pi\le \mbox{\sl ew}_{\Pi_{\trunc,\mathcal{M}}}$ and  $\mbox{\sl ew}_\Pi\ge \mbox{\sl ew}_{\Pi_{\trunc,\mathcal{M}}}$ holds. Note that since we choose the scoring function to be the exact expected weight function of $\Pi$, it holds that 
$\mbox{\sl ew}_\Pi(-,\pv)=\mbox{\sl ew}_{\Pi_{\trunc,\mathcal{M}}}(-,\pv)$ for all program valuations outside $B$. Thus, the nontrivial part is to consider program valuations inside the truncated range. 

\begin{itemize}
	\item $\mbox{\sl ew}_\Pi\le \mbox{\sl ew}_{\Pi_{\trunc,\mathcal{M}}}$: To show that $\mbox{\sl ew}_\Pi\le \mbox{\sl ew}_{\Pi_{\trunc,\mathcal{M}}}$, it suffices to observe that $\mbox{\sl ew}_{\Pi_{\trunc,\mathcal{M}}}$ satisfies $\ewt_{\Pi}(\mbox{\sl ew}_{\Pi_{\trunc,\mathcal{M}}})=\mbox{\sl ew}_{\Pi_{\trunc,\mathcal{M}}}$. 
By applying \cref{thm:puwf-normalizing} with given polynomial bounds for $\mbox{\sl ew}_\Pi,\mbox{\sl ew}_{\Pi_{\trunc,\mathcal{M}}}$, we obtain that $\mbox{\sl ew}_\Pi\le \mbox{\sl ew}_{\Pi_{\trunc,\mathcal{M}}}$. 
	\item $\mbox{\sl ew}_\Pi\ge \mbox{\sl ew}_{\Pi_{\trunc,\mathcal{M}}}$: To show that $\mbox{\sl ew}_\Pi\ge \mbox{\sl ew}_{\Pi_{\trunc,\mathcal{M}}}$, it suffices to observe that 
	$\mbox{\sl ew}_{\Pi}$ (extended with the $\sharp$ location whose score function is $1$) satisfies the higher-order equation of $\Pi_{\trunc,\mathcal{M}}$. Thus, we directly have that $\mbox{\sl ew}_\Pi\ge \mbox{\sl ew}_{\Pi_{\trunc,\mathcal{M}}}$ by \cref{thm:puwf-normalizing} with given polynomial bounds for $\mbox{\sl ew}_\Pi,\mbox{\sl ew}_{\Pi_{\trunc,\mathcal{M}}}$. 
\end{itemize}

Now similar to the proof for Theorem~\ref{thm:upperlower}, it suffices to show that $\lim\limits_{n\rightarrow \infty} \ewt_\Pi^n(0)=\mbox{\sl ew}_{\Pi}$. First, one observes that $\lim\limits_{n\rightarrow \infty} \ewt_\Pi^n(0)=\mathbb{E}_{\pv}(W\cdot [T\le n])$. Hence, $\mbox{\sl ew}_{\Pi}-\ewt_\Pi^n(0)=\mathbb{E}_{\pv}(W\cdot [T> n])$. By the bounded update property, we have that $\pv'$ is linearly bounded by $n$ for program valuations $\pv'$ reachable after exactly $n$ steps of execution of 
the underlying WPTS. Hence, since the truncation approximation $f$ is polynomial, we have that $\mathbb{E}_{\pv}(W\cdot [T> n])\le f(n)\cdot \mathbb{P}(T>n)$. As we have the exponential-decreasing concentration property for $\mathbb{P}(T>n)$, it follows that 
$\lim\limits_{n\rightarrow\infty}\mathbb{E}_{\pv}(W\cdot [T> n])=0$. 
\end{proof}

The theorem above states that if the truncation approximation gives correct bounds for the expected weights of the original WPTS outside the bounded range, then the bounds for the expected weights of the truncated WPTS are also correct bounds for the expected weights of the original WPTS.

\subsection{Application of Putinar's Positivstellensatz}\label{app:putinar}

We recall Putinar's Positivstellensatz below.

\begin{theorem}[Putinar's Positivstellensatz~\cite{putinar}] \label{thm:putinar} Let $V$ be a finite set of real-valued variables and $g, g_1, \ldots, g_m \in \mathbb{R}[V]$ be polynomials over $V$ with real coefficients. Consider the set $\mathcal{S}:=\{\mathbf{x} \in \mathbb{R}^V\,\mid\, g_i(\mathbf{x}) \geq 0  \mbox{ for all }1\le i\le m \}$ which is the set of all real vectors at which every $g_i$ is non-negative. If (i)~there exists some $g_k$ such that the set $\{ \mathbf{x} \in \mathbb{R}^V ~\mid~ g_k(\mathbf{x}) \geq 0  \}$ is compact and (ii)~$g(\mathbf{x})>0$ for all $\mathbf{x} \in \mathcal{S}$, then we have that 
	\begin{equation} \label{eq:putinar}
	\textstyle g = f_0 + \sum_{i=1}^m f_i \cdot g_i
	\end{equation}
	for some polynomials $f_0,f_1\dots, f_m\in \mathbb{R}[V]$ such that each polynomial $f_i$ is the  a sum of squares (of polynomials in $\mathbb{R}[V]$), i.e.~$f_i = \sum_{j=0}^{k} q_{i,j}^2$ for polynomials $q_{i,j}$'s in $\mathbb{R}[V]$.
\end{theorem}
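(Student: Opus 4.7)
The plan is to prove the statement by the standard separation-and-representation strategy from real algebraic geometry. Let $\Sigma^2 \subseteq \mathbb{R}[V]$ denote the cone of sums of squares, and define the \emph{quadratic module}
\[
M \;:=\; \Sigma^2 \;+\; \Sigma^2 \cdot g_1 \;+\; \cdots \;+\; \Sigma^2 \cdot g_m.
\]
The desired representation~\eqref{eq:putinar} is precisely the assertion that $g \in M$, so it suffices to derive this membership from hypotheses (i) and (ii).

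The first step is to turn hypothesis (i) into the \emph{Archimedean} property of $M$: for every $p \in \mathbb{R}[V]$ there exists $N \in \mathbb{N}$ with $N - p \in M$. Since $\{g_k \ge 0\}$ is compact, it is contained in some ball $\{\|\mathbf{x}\|^2 \le R\}$, and a localized Positivstellensatz for the single generator $g_k$ yields an explicit identity $R - \sum_i x_i^2 = \sigma_0 + \sigma_1\, g_k$ with $\sigma_0,\sigma_1 \in \Sigma^2$, so $R - \sum_i x_i^2 \in M$. Once such a uniform bound on $\sum_i x_i^2$ is in $M$, the Archimedean property for arbitrary $p$ follows by elementary sum-of-squares identities (each monomial is bounded in $M$ by a polynomial in $\sum_i x_i^2$ via square-completion tricks, and sums of squares are closed under addition and multiplication by elements of $M$).

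Next, assume for contradiction that $g \notin M$. Because $M$ is a convex cone in the real vector space $\mathbb{R}[V]$ with nonempty algebraic interior (every sufficiently large constant is interior, by the Archimedean property just established), a Hahn--Banach / Eidelheit separation produces a nonzero $\mathbb{R}$-linear functional $L : \mathbb{R}[V] \to \mathbb{R}$ with $L(M) \subseteq [0,\infty)$ and $L(g) \le 0$. After normalizing so that $L(1) = 1$, we have $L(\sigma) \ge 0$ for every $\sigma \in \Sigma^2$ and $L(\sigma g_i) \ge 0$ for every $\sigma \in \Sigma^2$ and every $i$.

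The final step applies Haviland's representation theorem in its Archimedean form: any functional on $\mathbb{R}[V]$ that is nonnegative on an Archimedean quadratic module $M$ is represented by integration against a regular Borel probability measure $\mu$ supported on $K_M := \{\mathbf{x} \in \mathbb{R}^V : p(\mathbf{x}) \ge 0 \text{ for all } p \in M\}$, which coincides with $\mathcal{S}$. Thus $L(g) = \int_{\mathcal{S}} g(\mathbf{x})\, \mu(\mathrm{d}\mathbf{x})$; by hypothesis (ii), $g > 0$ pointwise on $\mathcal{S}$, so $L(g) > 0$, contradicting $L(g) \le 0$. The hard part of this plan is the first step---extracting the concrete algebraic identity $R - \sum_i x_i^2 \in M$ from the purely topological compactness of $\{g_k \ge 0\}$---since this is exactly where the geometry of the feasible set must be converted into a genuine sum-of-squares certificate, and is the conceptual core of Putinar's advance over Schm\"udgen's earlier preordering-based Positivstellensatz.
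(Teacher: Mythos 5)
The paper does not prove this statement: Putinar's Positivstellensatz is imported as a classical result from real algebraic geometry (cited to \cite{putinar}) and is used only as a black box in Step A3 of the algorithm, so there is no in-paper proof to compare against. Judged on its own terms, your sketch is the standard modern proof and is correct in outline: you reduce the claim to the membership $g \in M$ for the quadratic module $M = \Sigma^2 + \Sigma^2 g_1 + \cdots + \Sigma^2 g_m$, convert hypothesis (i) into the Archimedean property of $M$, separate $g$ from $M$ by an Eidelheit-type argument (legitimate because Archimedeanness makes $1$ an algebraic interior point of $M$, which also justifies the normalization $L(1)=1$), and conclude via the Archimedean representation theorem that the separating functional is integration against a probability measure on $\mathcal{S}$, contradicting $g>0$ on $\mathcal{S}$. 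Two points deserve explicit flagging. First, the identity $R - \sum_i x_i^2 \in \Sigma^2 + \Sigma^2 g_k$ that you extract from compactness of $\{g_k \ge 0\}$ is exactly Schm\"udgen's Positivstellensatz in the single-generator case (where the preordering and the quadratic module coincide); the Krivine--Stengle Positivstellensatz alone would only yield a representation with denominators, so you are genuinely assuming Schm\"udgen here and should say so. Second, the ``Archimedean form of Haviland'' you invoke is itself the substantial analytic step (GNS construction plus the spectral theorem for commuting bounded self-adjoint multiplication operators, the boundedness coming from Archimedeanness); citing it is acceptable, but it carries most of the weight of the final step. With those attributions made explicit, and the degenerate case $\mathcal{S}=\emptyset$ noted (there $-1\in M$, so $M=\mathbb{R}[V]$ and the conclusion is trivial), the argument is sound.
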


In this work, we utilize the sound form in \eqref{eq:putinar} for witnessing a polynomial $g$ to be non-negative over a polyhedron $P$  for each constraint $\forall \pv\in P.(g(\pv)\ge 0)$ from \textbf{Step A5} of our algorithm. Let $\forall \pv\in P.(g(\pv)\ge 0)$ be such a constraint for which the polyhedron $P$ is defined by the linear inequalities $g_1\ge 0,\dots,g_m\ge 0$. Let $\pvars= \{ v_1, v_2, \ldots, v_t\}$ be the set of program variables and define $\monomials_d(\pvars)= \{m_1, m_2, \ldots, m_r\}$ as the set of all monomials of degree at most $d$ over $\pvars$, i.e.~$\monomials_d(\pvars) :=  \{ \prod_{i=1}^t v_i^{\alpha_i} ~\mid~ \forall i~~\alpha_i\in \mathbb{N} ~\wedge~ \sum_{i=1}^t \alpha_i \leq d \}$. The application of Putinar's to $\forall \pv\in P.(g(\pv)\ge 0)$ has the following steps. 

\begin{itemize}
		\item First, represent each $f_i$ in \cref{eq:putinar} as the positive semidefinite form $f_i=\mathbf{v}^\mathrm{T} \mathbf{Q}_i \mathbf{v}$ subject to the positive semidefinite constraint where each $\mathbf{Q}_i$ is a real matrix whose every entry is an unknown parameter. 
        \item Second, compute an equation in the form \eqref{eq:putinar} whose coefficients are affine expressions in the unknown coefficients from our templates and the unknown entries in the matrices $\mathbf{Q}_i$'s. 
        \item Third, establish the affine constraints between the unknown coefficients in the templates and the unknown entries in the matrices $Q_i$'s by matching the coefficients at the LHS and the RHS of the equation obtained from the previous step. 
\end{itemize}
The overall application processes all such constraints from \textbf{Step A5} of our algorithm by (i) collecting all the affine and the semidefinite constraints from the first and the third steps above and (ii) solve them by semidefinite programming. 

\subsection{Application of Handelman's Positivstellensatz}\label{app:Handelman}

To present Handelman's Positivstellensatz, we need the notion of monoid as follows. 
Below we consider an arbitrary finite collection $\Gamma=\{g_1,\dots, g_k\}$ ($k\ge 1$) of linear functions (i.e., degree-$1$ polynomials) in the program variables.  

\begin{definition}[Monoid]
The \emph{monoid} of $\Gamma$ is defined by:
\[
\mbox{\sl Monoid}(\Gamma):=\left\{\prod_{i=1}^k h_i \mid k\in\Nset_0\mbox{ and }h_1,\dots,h_k\in\Gamma\right\}~~.
\]
\end{definition}

Then in our context, Handelman's Positivstellensatz can be formulated as follows. 

\begin{theorem}[Handelman's Positivstellensatz~\cite{handelman1988representing}]
\label{thm:handelman}
Let $g$ be a polynomial in the program variables such that $g(\pv)>0$ for all program valuations $\pv\in 
P:=\{\pv'\in \Rset^{|\pvars|}\mid g_1(\pv')\ge 0, \dots, g_k(\pv')\ge 0\}$. 
If $P$ is compact, then we have  
\begin{equation}\label{eq:handelman} 
g=\sum_{i=1}^d a_i\cdot u_i 
\end{equation}
for some $d\in\Nset$, real numbers $a_1,\dots,a_d\ge 0$ and $u_1,\dots,u_d\in\mbox{\sl Monoid}(\Gamma)$. 
\end{theorem}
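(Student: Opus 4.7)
The plan is to establish Handelman's Positivstellensatz by reducing it to Pólya's theorem on the standard simplex, which already provides an effective representation of strictly positive polynomials on the simplex as non-negative combinations of monomials in the barycentric coordinates. Concretely, if one can identify $P$ with (a portion of) a standard simplex in such a way that the defining linear forms $g_1,\dots,g_k$ pull back to barycentric coordinates, then Pólya-type monomial representations on the simplex translate directly into the desired representation inside $\mbox{\sl Monoid}(\Gamma)$.

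The first step is to exploit the compactness of $P$ to normalize. Since each $g_i$ is linear and $P$ is compact, each $g_i$ attains a finite maximum $M_i$ on $P$; after scaling by $1/M_i$ we may assume $0\le g_i(\pv)\le 1$ for all $\pv\in P$. Consider the affine map $\pv\mapsto (g_1(\pv),\dots,g_k(\pv))$ and complete it, by adjoining slack forms that are themselves non-negative combinations of elements of $\mbox{\sl Monoid}(\Gamma)$ on $P$, to an affine identification of the affine hull of $P$ with the affine hull of a standard simplex $\Delta_n=\{y\in\Rset^{n+1}_{\ge 0}:\sum_i y_i=1\}$. Under this identification the coordinate functions $y_i$ pull back to elements of $\Gamma$ (up to positive scaling), so products of coordinates pull back into $\mbox{\sl Monoid}(\Gamma)$.

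The second step applies Pólya's theorem: if $F(y_0,\dots,y_n)$ is a homogeneous polynomial strictly positive on $\Delta_n$, then for some large $N$ the expansion of $(y_0+\cdots+y_n)^N\cdot F$ has only non-negative coefficients as a polynomial in the $y_i$. Applying this to the homogenization of $g$ under the simplex identification, and using that $\sum_i y_i=1$ on $\Delta_n$, yields a representation of $g$ as a non-negative real combination of monomials in the $y_i$. Pulling back via the affine map gives exactly the claimed decomposition $g=\sum_{i=1}^d a_i u_i$ with $a_i\ge 0$ and $u_i\in\mbox{\sl Monoid}(\Gamma)$.

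The main obstacle will be executing the simplex embedding cleanly when $\Gamma$ does not already present $P$ as an image of a simplex, and in particular ensuring that the auxiliary linear forms needed to complete a simplex structure are expressible as non-negative combinations of elements of $\mbox{\sl Monoid}(\Gamma)$ on $P$; this typically requires a Minkowski-type argument on the dual cone of $P$ together with the compactness assumption. A subsidiary difficulty is propagating the strict positivity hypothesis $g>0$ on $P$ to strict positivity of the homogenization on $\Delta_n$, which needs the affine identification to be proper, and controlling the degree $N$ in Pólya quantitatively in terms of the positivity margin $\min_{\pv\in P}g(\pv)>0$ (finite by compactness). Once these geometric issues are handled, Pólya's theorem delivers the combinatorial conclusion of Handelman directly.
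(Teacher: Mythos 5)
First, a point of reference: the paper does not prove this statement at all. It is quoted as a classical theorem of Handelman, cited to the original paper, and used purely as a black box to justify the linear-programming relaxation in \textbf{Step A3}. So there is no in-paper proof to compare against; your proposal has to stand on its own.

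Your Pólya-based strategy is a recognized route to Handelman's theorem (essentially the Powers--Reznick line of attack), but as written it has a genuine gap at precisely the step you flag as ``the main obstacle,'' and the fixes you sketch do not close it. A compact polytope $P\subseteq\Rset^{n}$ cut out by $k$ linear forms is affinely isomorphic to a simplex only when $k=n+1$; in general the map $\pv\mapsto(g_1(\pv),\dots,g_k(\pv))$ embeds $P$ as a proper affine slice $A\cap\Delta$ of a $(k-1)$-dimensional simplex $\Delta$, where $A$ is an $n$-dimensional affine subspace of $\Rset^{k}$. Pólya's theorem needs strict positivity of the homogenized polynomial on \emph{all} of $\Delta$, whereas positivity of $g$ on $P$ only gives positivity on the slice $A\cap\Delta$; worse, the polynomial $G$ with $g=G(g_1,\dots,g_k)$ is determined only modulo the ideal of affine relations among the $g_i$, and an arbitrary representative can be negative off the slice. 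The Minkowski/Farkas argument you invoke shows only that auxiliary linear forms non-negative on $P$ lie in the cone generated by $1,g_1,\dots,g_k$; it does nothing to extend positivity from $A\cap\Delta$ to $\Delta$. What is actually needed is an explicit extension step --- for instance, replacing $G$ by $G+C\sum_j\ell_j^{2}$, where the $\ell_j$ are affine forms vanishing on $A$ (so the pullback to $P$ is unchanged) and $C$ is chosen large enough, by a compactness argument, to make the new representative strictly positive on all of $\Delta$ before Pólya is applied --- or alternatively a reduction through Bernstein's theorem on the cube $[0,1]^{k}$ combined with the Farkas step. Without one of these ingredients the argument does not go through except in the special case where $P$ is itself a simplex.
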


To apply Handelman's Positivstellensatz, we consider a natural number which serves as a bound on the number of multiplicands allowed to form an element in $\mbox{\sl Monoid}(\Gamma)$.
Then~\cref{eq:handelman} results in a system of linear equalities that involves $a_1,\dots,a_d$ and the coefficents of $g$. The application of Handelman's Positivstellensatz to each $\forall \pv\in P.(g(\pv)\ge 0)$
is simpler than that of Putinar's Positivstellensatz, and is as follows. 

\begin{itemize}
        \item First, compute an equation in the form \eqref{eq:handelman} whose coefficients are affine expressions in the unknown coefficients from our templates and the fresh variables $a_1,\dots,a_d$ from \cref{eq:handelman}. 
        \item Second, establish the affine constraints between the unknown coefficients in the templates and the fresh variables $a_1,\dots,a_d$ from \cref{eq:handelman} by matching the coefficients at the LHS and the RHS of the equation obtained from the previous step. 
\end{itemize}
The overall application processes all such constraints from \textbf{Step A5} of our algorithm by (i) collecting all the affine constraints from the second steps above and (ii) solve them by linear programming. 

\section{Supplementary Materials for Section~\ref{sec:experiment}}\label{app:experiments}

\subsection{Possible Approaches for Computing $M_{\mathrm{up}}$ and $M_{\mathrm{low}}$ of Score-recursive WPTS's}

Fix a score-recursive WPTS $\Pi$, and assume it has (1) the concentration property, i.e., $\probm(T>n) \leq c_1 \cdot e^{-c_2 \cdot n}$ for $c_1,c_2>0$, (2) the bounded-update property, and (3) the stepwise weight is bounded by $e^{c_3}$ for $0<c_3<c_2$. Then given the bounded ranges $B$ and $B'$ as computed in \cref{sec:algorithm}, we derive the upper bound $M_{\mathrm{up}}$ and the lower bound $M_{\mathrm{low}}$ for the expected weight from $B'\backslash B$ as follows.

For any $\pv\in B'\backslash B$,
\begin{eqnarray*}
\llbracket \Pi\rrbracket (\pv) &=& \expectdist{\pv}{w_T}   \\
&=& \sum_{n=1}^{\infty} \probm(T=n)\cdot w_n\\
&\le & \sum_{n=0}^{\infty} \probm(T>n)\cdot w_n \\
&\le& 1+\sum_{n=1}^{\infty} \probm(T>n)\cdot w_n \\
&=& 1+\sum_{n=1}^{n^*-1} \probm(T>n)\cdot w_n +\sum_{n=n^*}^{\infty} \probm(T>n)\cdot w_n \\
&\le& 1+M+\sum_{n=n^*}^{\infty} \probm(T>n)\cdot w_n \\
&=& M'+\sum_{n=n^*}^{\infty} \probm(T>n)\cdot w_n \\
&\le& M'+\sum_{n=n^*}^{\infty} c_1\cdot e^{-c_2\cdot n} \cdot w_n \\
&\le& M'+\sum_{n=n^*}^{\infty} c_1\cdot e^{-c_2\cdot n} \cdot e^{c_3\cdot n} \\
&=& M'+c_1\cdot \sum_{n=n^*}^{\infty}(e^{c_3-c_2})^n \\
&=&M'+ c_1\cdot \frac{a}{1-q} \\
&=& M_{\mathrm{up}} \\
\end{eqnarray*}
where 
\[
\sum_{n=1}^{n^*-1} \probm(T>n)\cdot w_n\le \sum_{n=1}^{n^*-1} w_n\le \sum_{n=1}^{n^*-1} (e^{c_3})^n = \frac{a'\cdot (1-(q')^{(n^*-1)})}{1-q'}=:M
\]
and $a'=e^{c_3}$, $q'=e^{c_3}$.
The first inequality is obtained from the fact that
\[
\probm(T>n)=\probm(T\ge n+1)=\probm(T=n+1)+\probm(T=n+2)+\dots,
\]
thus,
\[
\probm(T=n+1)\le \probm(T>n).
\]
The second inequality is derived by the fact that $\probm(T>0)\le 1$ and $w_0=\win=1$. The third inequality is obtained by the definition of $M$ above. The fourth inequality is obtained by the concentration property, while the fifth inequality is derived by the bounded stepwise weight condition.

For $M_{\mathrm{low}}$, we trivially set $M_{\mathrm{low}}=0$. We can refine it heuristically, e.g., according to the monotonicity of the scoring function.

\subsection{Overapproximation via Polynomial Interpolations}
Given a non-polynomial function $f(x)$ over the interval $I=[a,b]$, we aim to approximate $f(x)$ by polynomials $p(x)$'s. The correctness of approximation is based on a classical theorem called Weierstrass' Theorem \cite{jeffreys1988weierstrass}.

\begin{theorem}[Weierstrass' Theorem]\label{thm:Wierstrass}
Let $f(x)$ be a continuous function on the (closed) interval $[a,b]$. Then there is a sequence of polynomials $p_n(x)$ (of degree $n$) such that
\[
\mathop{lim}_{n\to\infty} ||f-p_n||_{\infty}=0.
\]
	
\end{theorem}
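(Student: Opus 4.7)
The plan is to prove Weierstrass' Theorem by the classical constructive approach using Bernstein polynomials, since it gives an explicit sequence and a clean uniform estimate. First, I would reduce to the unit interval $[0,1]$: if $f$ is continuous on $[a,b]$, then $\tilde f(t) := f(a + (b-a)t)$ is continuous on $[0,1]$, and polynomial approximants for $\tilde f$ pull back along the affine change of variable to polynomial approximants for $f$, preserving uniform convergence. So we may and do assume $I = [0,1]$.

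Next, for each $n \in \mathbb{N}$, I would define the $n$-th Bernstein polynomial
\begin{equation*}
B_n(f)(x) := \sum_{k=0}^n f\!\left(\tfrac{k}{n}\right) \binom{n}{k} x^k (1-x)^{n-k}, \qquad x \in [0,1],
\end{equation*}
which is manifestly a polynomial in $x$ of degree at most $n$. The claim is that $\|f - B_n(f)\|_\infty \to 0$ as $n \to \infty$. To show this, I would fix $\varepsilon > 0$ and use uniform continuity of $f$ on the compact interval $[0,1]$ to choose $\delta > 0$ such that $|x - y| < \delta$ implies $|f(x) - f(y)| < \varepsilon/2$. Writing $M := \|f\|_\infty$ and using that $\sum_k \binom{n}{k} x^k (1-x)^{n-k} = 1$, I would split the estimate
\begin{equation*}
|f(x) - B_n(f)(x)| \le \sum_{k=0}^n \left| f(x) - f(k/n) \right| \binom{n}{k} x^k (1-x)^{n-k}
\end{equation*}
into the two sets $A_x := \{k : |k/n - x| < \delta\}$ and its complement. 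The sum over $A_x$ is bounded by $\varepsilon/2$ directly by uniform continuity, while the sum over the complement is bounded by $2M$ times the tail probability $\sum_{k \notin A_x} \binom{n}{k} x^k (1-x)^{n-k}$.

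The main technical step, and the one I expect to be the real obstacle, is bounding this tail probability uniformly in $x$. I would do this by a Chebyshev-type argument applied to the binomial distribution with parameters $n$ and $x$: since the mean is $nx$ and the variance is $nx(1-x) \le n/4$, one gets
\begin{equation*}
\sum_{k : |k/n - x| \ge \delta} \binom{n}{k} x^k (1-x)^{n-k} \le \frac{x(1-x)}{n \delta^2} \le \frac{1}{4 n \delta^2},
\end{equation*}
uniformly in $x \in [0,1]$. Choosing $n$ large enough that $2M \cdot \frac{1}{4 n \delta^2} < \varepsilon/2$ gives $\|f - B_n(f)\|_\infty < \varepsilon$, completing the proof. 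The Chebyshev estimate itself follows from a direct combinatorial identity for $\sum_k (k - nx)^2 \binom{n}{k} x^k (1-x)^{n-k}$, which can be verified by differentiating the binomial identity $\sum_k \binom{n}{k} x^k (1-x)^{n-k} = 1$ twice with respect to $x$; this is the only place where careful calculation is required.
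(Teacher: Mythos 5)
Your proof is correct: the reduction to $[0,1]$, the Bernstein polynomials, the split of the error sum via uniform continuity, and the Chebyshev bound $\sum_{|k/n-x|\ge\delta}\binom{n}{k}x^k(1-x)^{n-k}\le \tfrac{x(1-x)}{n\delta^2}\le\tfrac{1}{4n\delta^2}$ together give the standard constructive proof of the Weierstrass approximation theorem. The paper itself does not prove this statement --- it is quoted as a classical result with a citation and used only to justify polynomial interpolation of score functions --- so there is no in-paper argument to compare against; your argument is complete and self-contained (the only cosmetic mismatch is that $B_n(f)$ has degree \emph{at most} $n$ rather than exactly $n$, which is immaterial to the stated limit).
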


We also need the following theorem to measure the derived polynomials. The property of Lipschitz continuity supports the following theorem easily.
\begin{theorem}\label{thm:errorbound}
	Suppose $r(x)$ is a continuous and differentiable function on a compact convex set $\Psi\subseteq \Rset$. Assume that a collection of points $\{x_1,x_2,\dots,x_k\}$ are sampled uniformly from $\Psi$ and $s\in\Rset_{>0}$ is the sampling spacing. Let $r_0=\mathop{max}\{|r(x_1)|,|r(x_2)|,\dots,|r(x_k)|\}$, and $\beta=\mathop{sup}_{x\in\Psi} ||\nabla r(x)||$, then 
	\begin{equation}
	|r(x)|\le \beta\cdot s+ r_0,\ \forall\ x\in \Psi.
	\end{equation}
\end{theorem}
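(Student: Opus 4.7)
The plan is to prove the bound pointwise: for an arbitrary $x \in \Psi$, I find a nearby sample point and bound $|r(x)|$ by the triangle inequality, using the gradient bound $\beta$ to control the difference $|r(x) - r(x_i)|$.

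First, fix any $x \in \Psi$. Since the points $\{x_1, \ldots, x_k\}$ are sampled uniformly from $\Psi$ with sampling spacing $s$, the defining property of spacing guarantees that there exists some index $i \in \{1, \ldots, k\}$ such that $\|x - x_i\| \le s$. (This is the substantive content of the ``sampling spacing $s$'' hypothesis, and I would state it explicitly at the start of the proof to remove any ambiguity.) Next, because $\Psi$ is convex, the line segment $\{(1-t) x_i + t x : t \in [0,1]\}$ lies entirely in $\Psi$, so $r$ is differentiable along this segment.

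The main step is to apply the mean value inequality (a direct consequence of the fundamental theorem of calculus along this segment, together with the Cauchy--Schwarz inequality). This gives
\[
|r(x) - r(x_i)| \;=\; \left| \int_0^1 \nabla r((1-t) x_i + t x) \cdot (x - x_i) \, \mathrm{d} t \right| \;\le\; \sup_{y \in \Psi} \|\nabla r(y)\| \cdot \|x - x_i\| \;\le\; \beta \cdot s,
\]
where the last inequality combines the definition of $\beta$ with the bound on $\|x - x_i\|$. Finally, by the triangle inequality,
\[
|r(x)| \;\le\; |r(x) - r(x_i)| + |r(x_i)| \;\le\; \beta \cdot s + r_0,
\]
since $|r(x_i)| \le r_0$ by the definition of $r_0$. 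Because $x \in \Psi$ was arbitrary, the claimed bound holds uniformly.

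The only genuine obstacle is the ``sampling spacing'' hypothesis: the theorem uses the phrase informally, but the argument relies crucially on the quantitative statement that every point of $\Psi$ is within distance $s$ of some $x_i$. In a clean write-up I would either (i) take this as the definition of ``sampling spacing $s$,'' or (ii) derive it from a more primitive description of the uniform grid (e.g., a cover of $\Psi$ by $s$-balls centered at the $x_i$'s). Everything else is a routine application of the mean value inequality plus the triangle inequality, and the compactness and differentiability hypotheses ensure that $\beta$ is finite and the integral representation of $r(x) - r(x_i)$ is valid.
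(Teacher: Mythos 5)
Your argument is correct and is exactly the fleshed-out version of the paper's justification: the paper offers no detailed proof, only the remark that ``the property of Lipschitz continuity supports the following theorem easily,'' which is precisely your combination of the mean value inequality along a segment (valid by convexity of $\Psi$) with the triangle inequality against the nearest sample point. Your explicit flagging of the quantitative meaning of ``sampling spacing $s$'' (every $x\in\Psi$ lies within distance $s$ of some $x_i$) is a reasonable and necessary clarification that the paper leaves implicit.
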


Then our scheme is as follows.
\begin{itemize}
	\item Split the interval $I=[a,b]$ uniformly into $m$ partitions, i.e., $I_1=[a_1,b_1],I_2=[a_2,b_2],\dots,I_m=[a_m,b_m]$.
	\item For each partition $I_i=[a_i,b_i]$, define a $n$-degree polynomial $p^i_n(x):=\sum_{j=0}^n c_{ij}\cdot x^j$.
	\begin{enumerate}
		\item Pick a non-negative integer $k>n$ and sample $k$ points uniformly from $f$ over $I_i$. That is,  \[
		D=\{(x_1,f(x_1)),(x_2,f(x_2)),\dots,(x_k,f(x_k))\}
		\]
		where $x_l\in I_i$ for all $1\le l\le k$.
		\item Let $p_n^i(x_l)=f(x_l)$ for all $1\le l\le k$, then we have a linear system $\mathbf{V}\cdot \mathbf{c}=\mathbf{f}$ where 
		\begin{equation*}
		\mathbf{V}=
		\begin{bmatrix}
		   1 & x_1 & x_1^2& \cdots &x_1^n \\
		   1 & x_2 & x_2^2& \cdots &x_2^n \\
		   \vdots & \vdots & \vdots&   &\vdots \\
		   1 & x_k & x_k^2& \cdots &x_k^n \\
		\end{bmatrix},
		\end{equation*}
		$\mathbf{c}=[c_{i0},c_{i1},\dots,c_{in}]^T$ and $\mathbf{f}=[f(x_1),f(x_2),\dots,f(x_k)]^T$.
		\item By solving the above overdetermined system, we obtain $p_n^i(x)$ as the approximation of $f(x)$ over the interval $I_i=[a_i,b_i]$. 
		\item Having $p_n^i(x)$, evaluate an error bound $\gamma_i$ such that
		\begin{equation}
		\forall\ x\in I_i,\ |f(x)-p_n^i(x)|\le\gamma_i .
		\end{equation} 
		Let $r(x)=f(x)-p_n^i(x)$ and $\Psi=I_i$, then we obtain $r_0=\mathop{max}\{|r(x_1)|,|r(x_2)|,\dots,|r(x_k)|\}$ by \cref{thm:errorbound}. To derive the Lipschitz constant $\beta$ of $r(x)$ over the interval $I_i$, we pick a non-negative integer $q=10k$, and sample $q$ points uniformly from $f$, i.e., we have another collection of points $\{x'_1,x'_2,\dots,x'_q\}$. Let $\beta=\mathop{max}\{|\nabla r(x'_1)|,\dots,|\nabla r(x'_q)|\}$, then
		\[
		\gamma_i:=\beta\cdot s+r_0
		\]
		where $s$ is the corresponding sampling spacing of the $q$ points.
	\end{enumerate}
	\item Now we have a set $D_p$ of tuples of intervals, polynomials and error bounds, i.e., 
	\begin{equation}\label{eq:D_p}
	D_p=\{(I_1,p_n^1(x),\gamma_1),\dots,(I_m,p_n^m(x),\gamma_m)\}
	\end{equation}
 
\end{itemize}

The approximation error bounds $\gamma_i$'s are taken into account when we synthesize the polynomial template $h$. Given a non-polynomial function $f(x)$ such that ${\tt score}(f(x))$ occurs in the program, we obtain a set $D_p$ in the form of \eqref{eq:D_p}.  For each interval $I_i$, we introduce a new variable $r_i$ and approximate $f(x)$ over $I_i$ as $p_n^i(x)+r_i$ with $r_i\in [-\gamma_i,\gamma_i]$. That is, for $1\le i\le m$, we have
\begin{equation}
\forall\ x\in I_i,\ f(x)\approx p_n^i(x)+r_i\ \text{with }r_i\in [-\gamma_i,\gamma_i].
\end{equation}

For a state $(\loc,\pv)$ such that $\loc$ is the location before the command ${\tt score}(f(x))$, there is the unique transition $\tau=\langle \loc, true, F  \rangle$ such that $F=\langle \loc',1, \textbf{1}, f \rangle$ and $\loc'$ is the location that follows the command ${\tt score}(f(x))$. Then for all valuations $\pv\in I(\loc) \wedge \Phi_B$ and $1\le i \le m$, it should hold that 
\begin{itemize}
	\item  for all $\pv[x] \in I_i$ and $r_i\in [-\gamma_i,\gamma_i]$, we have that $\ewt(h)(\loc,\pv) \le h(\loc, \pv)$ (for upper bounds) and $\ewt(h)(\loc,\pv) \ge h(\loc,\pv)$ (for lower bounds) where
	\[
	\ewt(h)(\loc,\pv)=(p_n^i(x)+r_i)\cdot h(\loc',\pv).
	\]
\end{itemize}

\subsection{Other Experimental Results}

\begin{figure}
	\centering
	\subfigure[Pedestrian v1]{
		\begin{minipage}{5cm}
			\centering
			\includegraphics[width=2.5in,height=2in]{results1/pedestrian-v1.pdf}
		\end{minipage}
	}
	\quad\quad\quad\quad\quad 
	\subfigure[Hare Tortoise Race v2]{
		\begin{minipage}{5cm}
			\centering
			\includegraphics[width=2.5in,height=1.9in]{results1/hare-tortoise-race-v2.pdf}
		\end{minipage}
	}
	\subfigure[Random Walk v1]{
		\centering
		\begin{minipage}{5cm}
			\includegraphics[width=2.5in,height=2in]{results1/random-walk-v1.pdf}
		\end{minipage}
	}
	\quad\quad\quad\quad\quad 
	\subfigure[Pedestrian Multi-branches v3]{
		\centering
		\begin{minipage}{5cm}
			\includegraphics[width=2.5in,height=2in]{results1/pedestrain-multi-branches-v3.pdf}
		\end{minipage}
	}
 
	\caption{Part 1: NPD Bounds of Novel Examples}
	\label{fig:part1-results1}
\end{figure}

\begin{figure}
	\subfigure[Hare Tortoise Race v1]{
		\centering
		\begin{minipage}{5cm}
			\includegraphics[width=2.5in,height=2in]{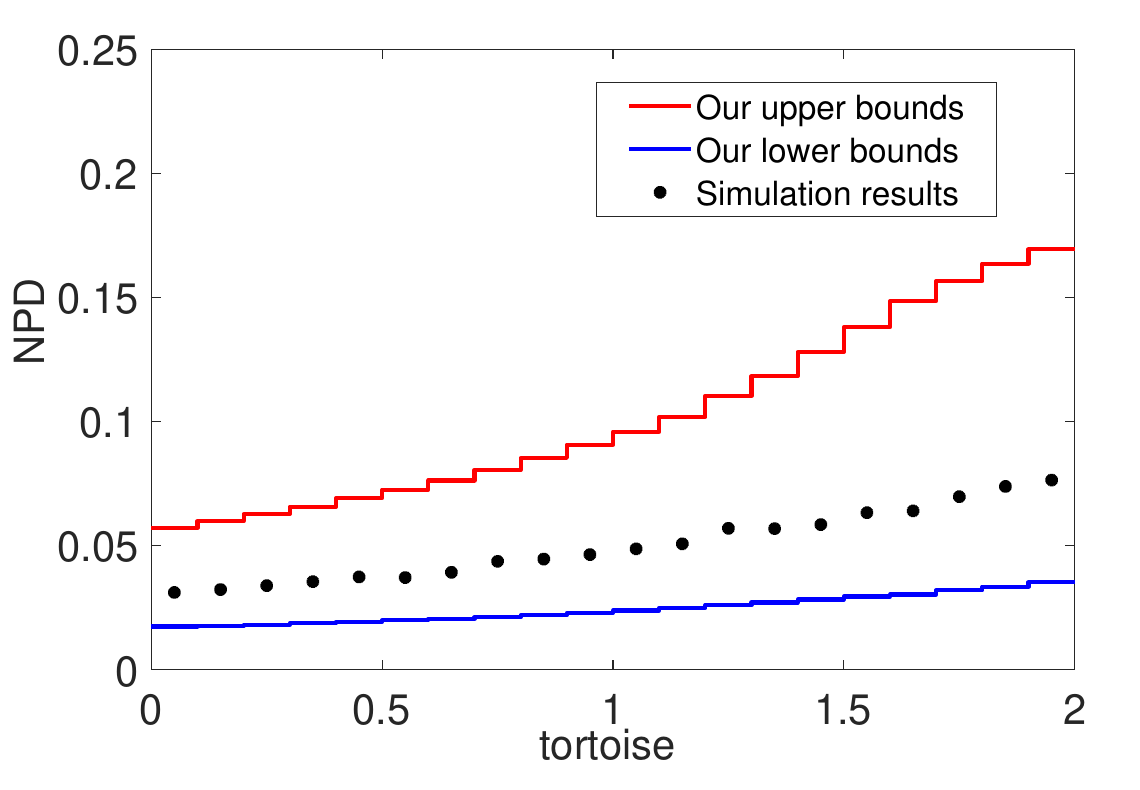}
		\end{minipage}
	}
	\quad\quad\quad\quad\quad 
	\subfigure[Random Walk v2]{
		\centering
		\begin{minipage}{5cm}
			\includegraphics[width=2.5in,height=2in]{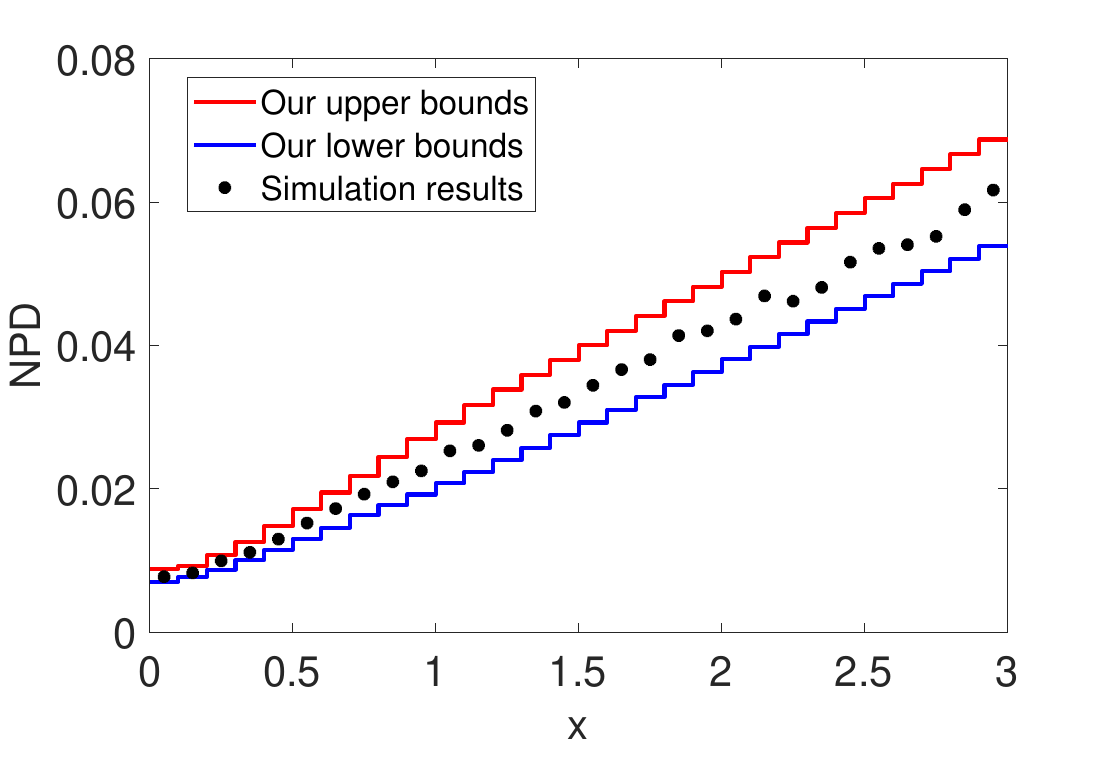}
		\end{minipage}
	}
	\subfigure[Random Walk v3]{
		\centering
		\begin{minipage}{5cm}
			\includegraphics[width=2.5in,height=2in]{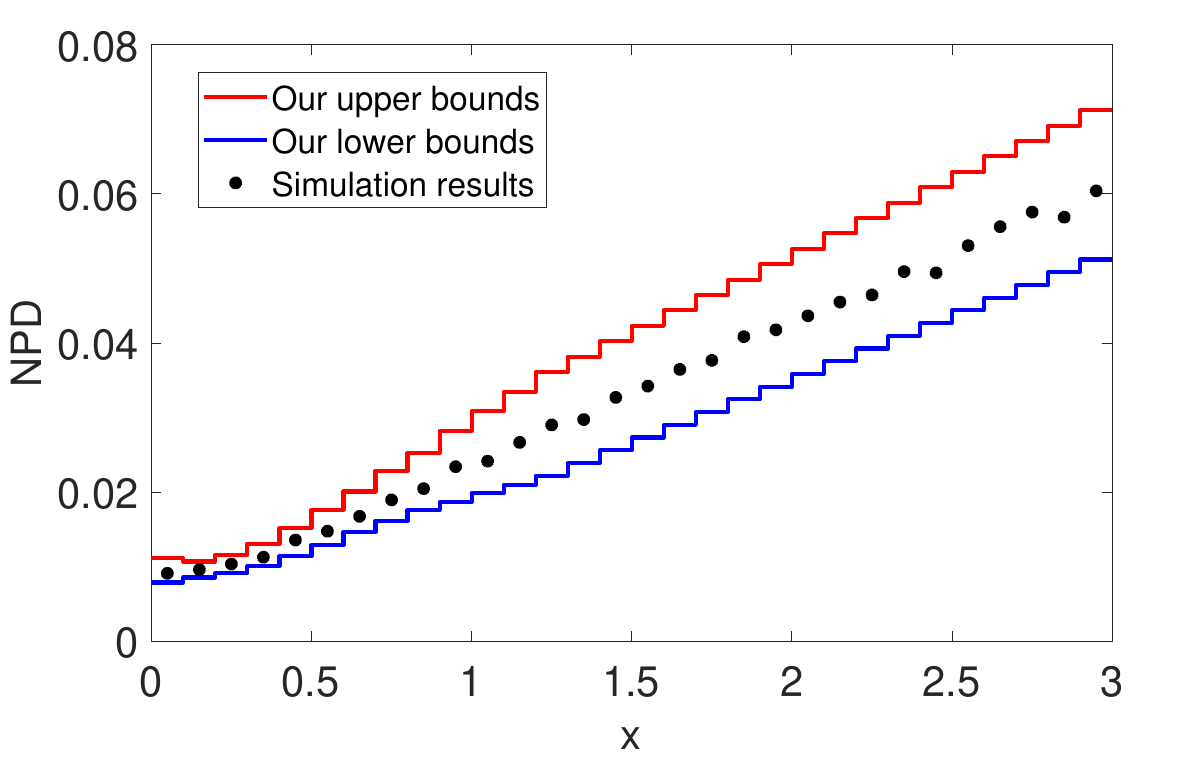}
		\end{minipage}
	}
	\quad\quad\quad\quad\quad 
	\subfigure[Random Walk v4]{
		\centering
		\begin{minipage}{5cm}
			\includegraphics[width=2.5in,height=2in]{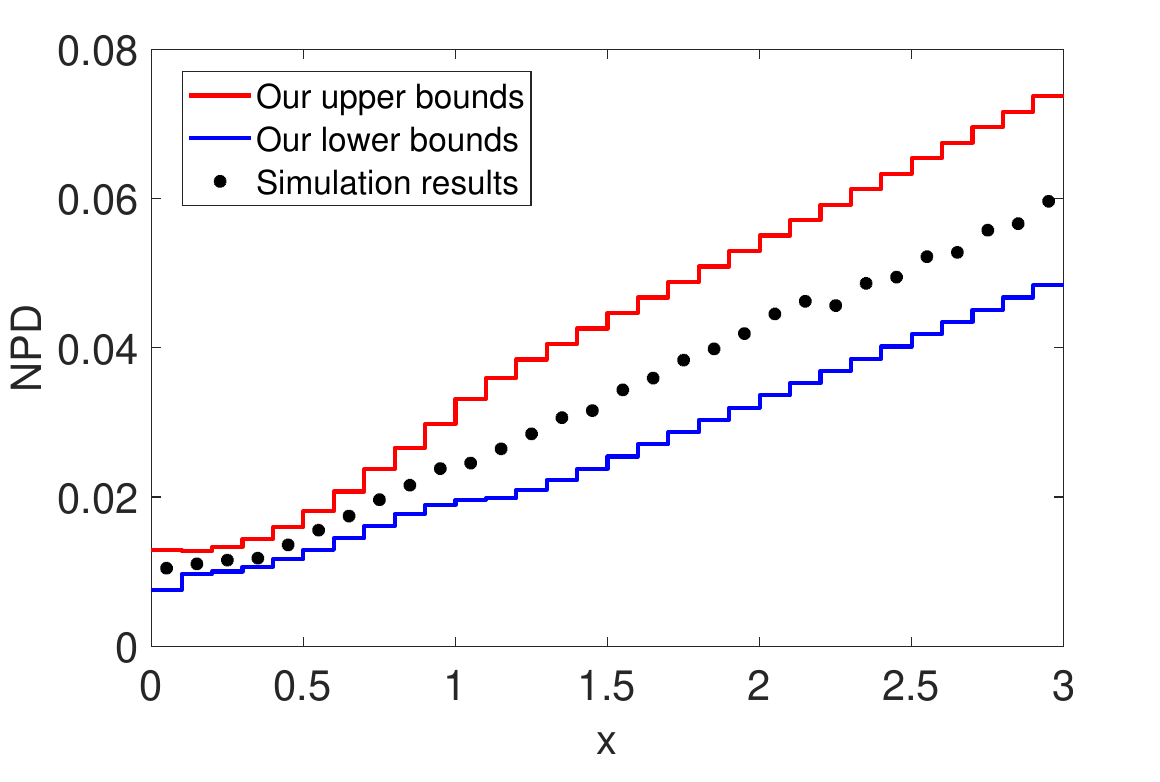}
		\end{minipage}
	}
	\subfigure[Pedestrian Multi-branches v4]{
	\centering
	\begin{minipage}{5cm}
		\includegraphics[width=2.5in,height=2in]{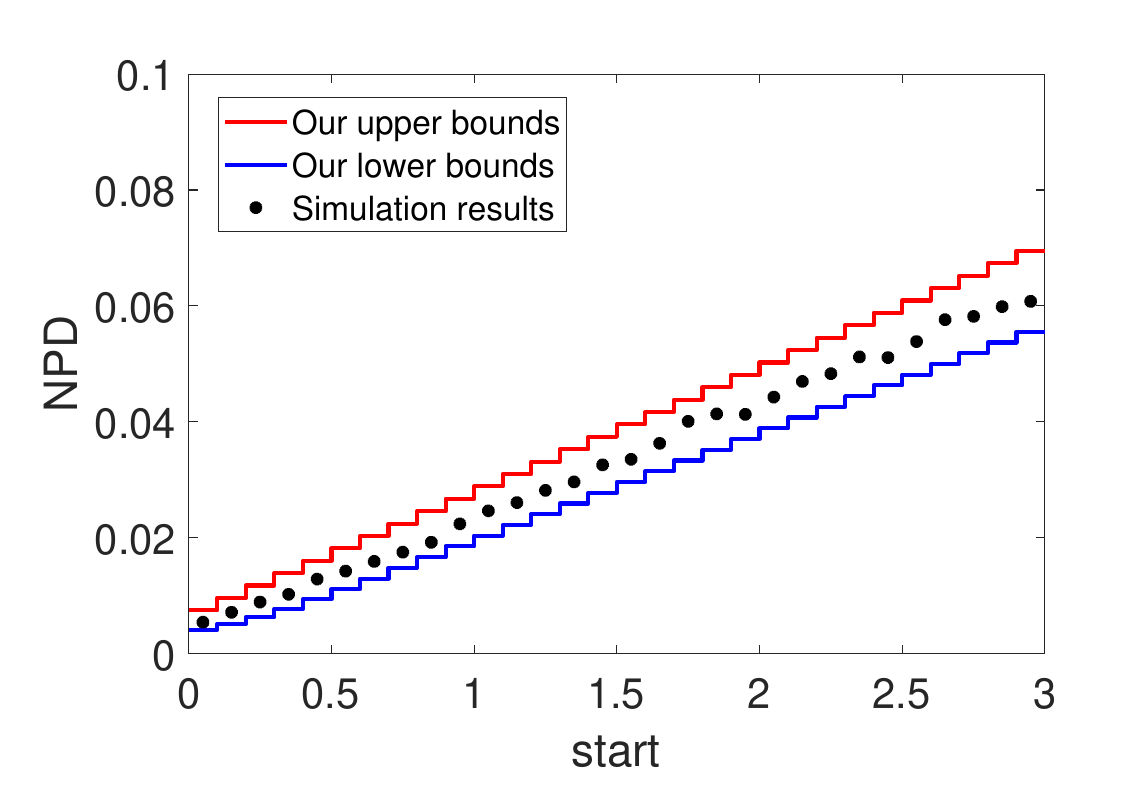}
	\end{minipage}
}
\quad\quad\quad\quad\quad 
\subfigure[Phylogenetic Birth Model]{
	\centering
	\begin{minipage}{5cm}
		\includegraphics[width=2.5in,height=2in]{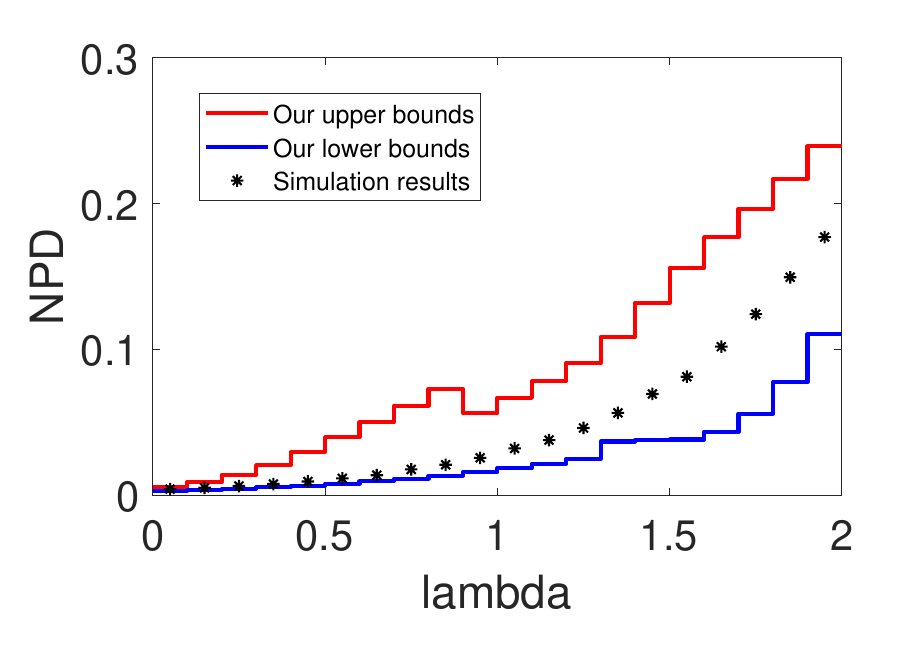}
	\end{minipage}
}
	\caption{Part 2: NPD Bounds of Novel Examples}
	\label{fig:part2-results1}
\end{figure}

\begin{figure}
	\centering
	\subfigure[Pedestrian]{
		\begin{minipage}{5cm}
			\centering
			\includegraphics[width=2.5in,height=2in]{results2/pedestrian-original.pdf}
		\end{minipage}
	}
	\quad\quad\quad\quad\quad 
	\subfigure[Pedestrain Large Deviation]{
		\begin{minipage}{5cm}
			\centering
			\includegraphics[width=2.5in,height=1.9in]{results2/pedestrian-large-deviation.pdf}
		\end{minipage}
	}
	\subfigure[Pedestrian Beta v1]{
		\centering
		\begin{minipage}{5cm}
			\includegraphics[width=2.5in,height=2in]{results2/pedestrian-beta-v1.pdf}
		\end{minipage}
	}
	\quad\quad\quad\quad\quad 
	\subfigure[Pedestrian Beta v2]{
		\centering
		\begin{minipage}{5cm}
			\includegraphics[width=2.5in,height=2in]{results2/pedestrian-beta-v2.pdf}
		\end{minipage}
	}
	\caption{Part 1: NPD Bounds of Our Approach and GuBPI}
	\label{fig:part1-results2}
\end{figure}

\begin{figure}
	\centering
		\subfigure[Pedestrian Beta v3]{
		\centering
		\begin{minipage}{5cm}
			\includegraphics[width=2.5in,height=2in]{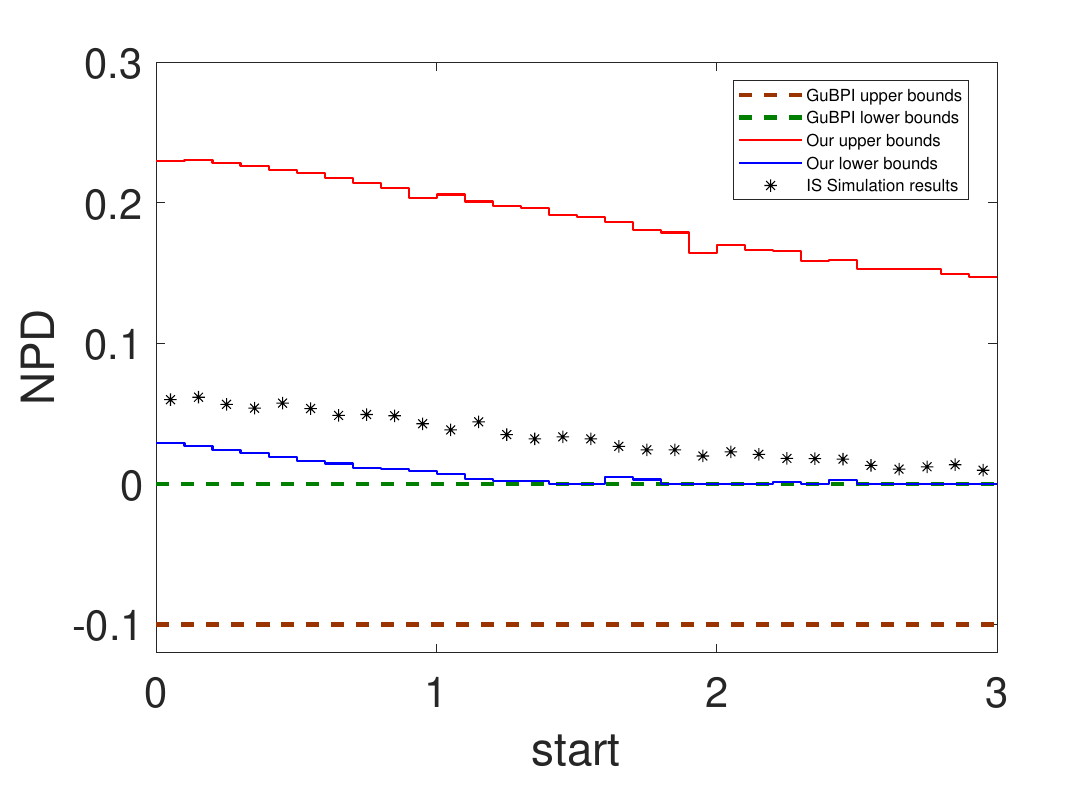}
		\end{minipage}
	}
	\quad\quad\quad\quad\quad 
	\subfigure[Pedestrian Beta v4]{
		\centering
		\begin{minipage}{5cm}
			\includegraphics[width=2.5in,height=2in]{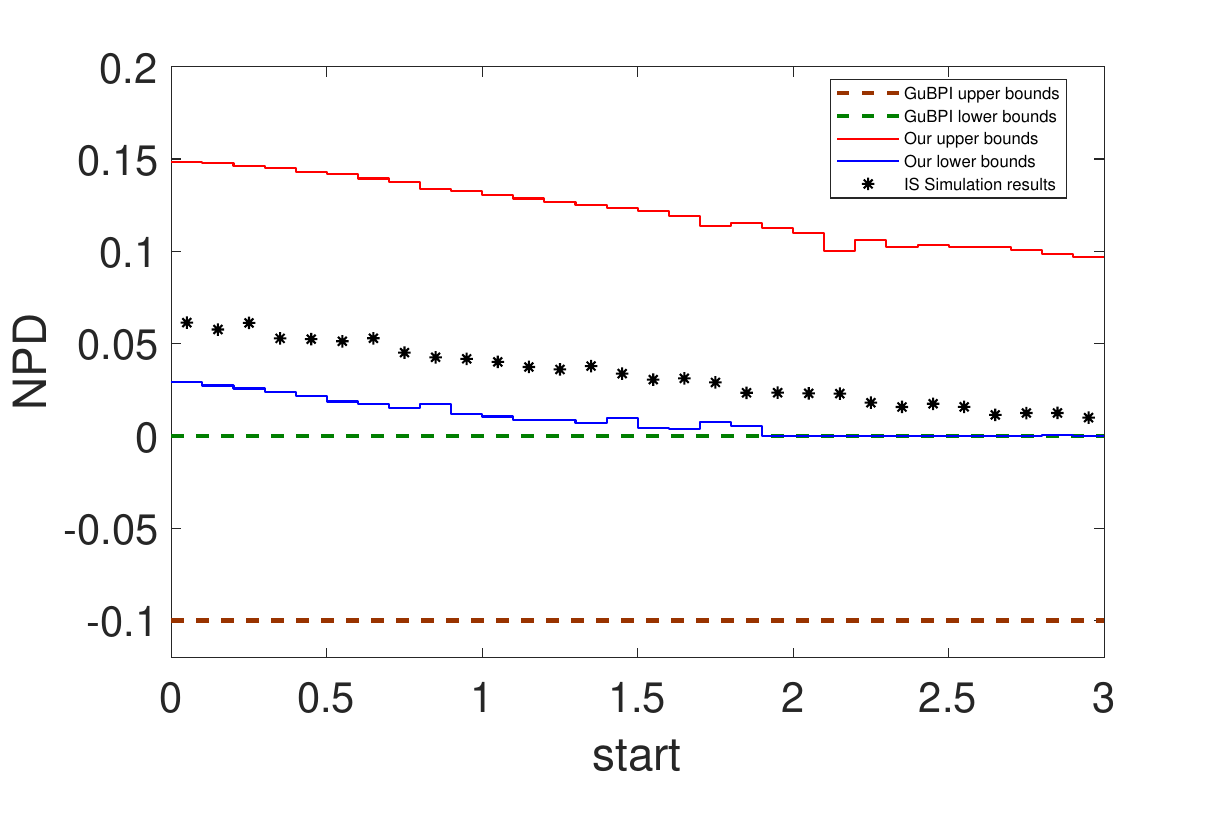}
		\end{minipage}
	}
	\subfigure[Pedestrian Multi-branches v5]{
		\begin{minipage}{5cm}
			\centering
			\includegraphics[width=2.5in,height=2in]{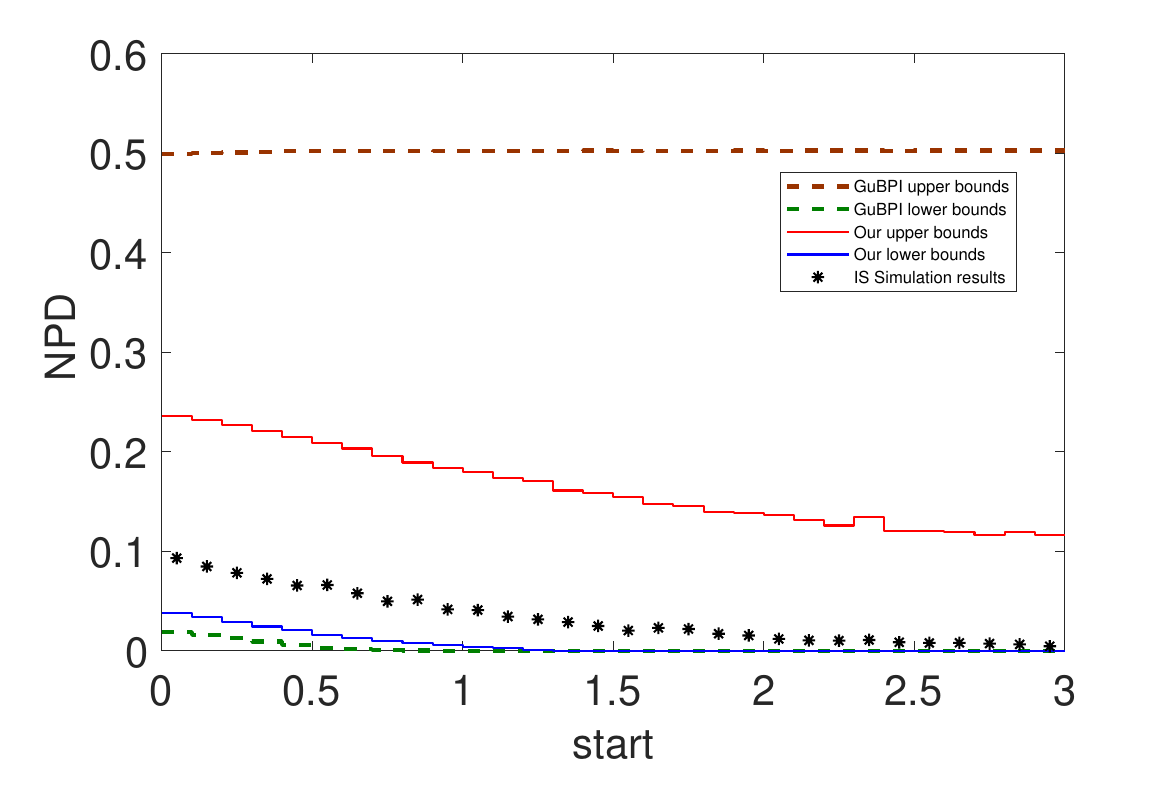}
		\end{minipage}
	}
	\quad\quad\quad\quad\quad 
	\subfigure[Para Estimation Recursive]{
		\begin{minipage}{5cm}
			\centering
			\includegraphics[width=2.5in,height=1.9in]{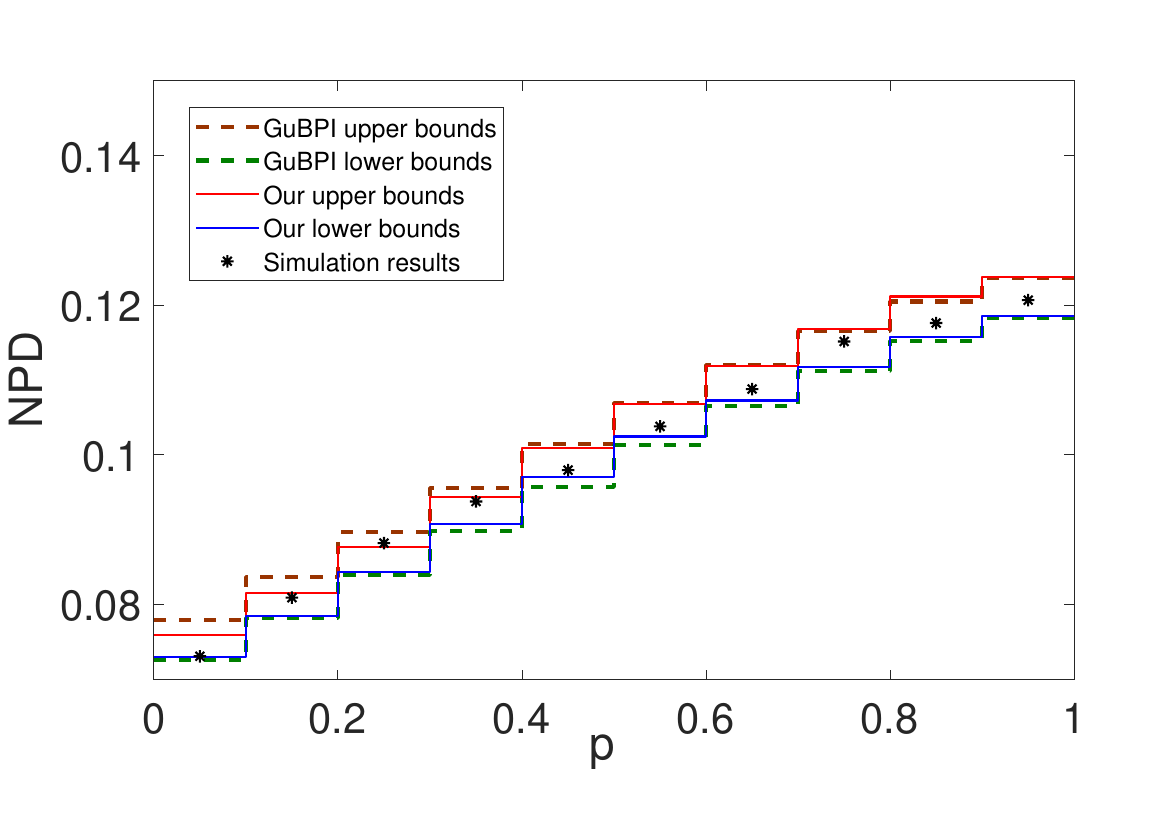}
		\end{minipage}
	}

	\caption{Part 2: NPD Bounds of Our Approach and GuBPI}
	\label{fig:part2-results2}
\end{figure}

\end{document}